\def\doi{8 (2:05) 2012}
\rm\color{blue},
\small\color{black},
\bfseries\color{black}}
\newcommand{\ttt}[1]{$\mathtt{#1}$}
\newcommand{\tobeupdated}[1]{{\color{red}#1}\marginpar{\tiny\textsc{To Be Updated}}}
\newcommand{\upforinclusion}[1]{{\color{blue}#1}\marginpar{\tiny\textsc{Up For Inclusion}}}
\newcommand{\dpnote}[1]{\marginpar{\color{magenta}David: #1}}
\newcommand{\ftk}[1]{\marginpar{\textcolor{Green}{#1}}}
\begin{document}

\title{Secure the Clones}

\author [T.~Jensen]{Thomas Jensen}
\author [F.~Kirchner]{Florent Kirchner}
\author [D.~Pichardie]{David Pichardie}
\address{INRIA Rennes -- Bretagne Atlantique, France}
\email  {firstname.lastname@inria.fr}
\thanks {This work was supported in part by the ANSSI, the ANR, and the
         \emph{R\'egion Bretagne}, respectively under the Javasec, Parsec, and
         Certlogs projects.}

\keywords{Static analysis, Shape analysis, Type system, Java bytecode, Secure
data copying.}
\subjclass{I.1.2, F.3.1, F.3.3, D.3.3}


\begin{abstract}
  Exchanging mutable data objects with untrusted code is a delicate
  matter because of the risk of creating a data space that is
  accessible by an attacker. Consequently, secure programming guidelines for
  Java stress the importance of using defensive copying before accepting or
  handing out references to an internal mutable object.
  However, implementation of a copy method (like clone()) is entirely
  left to the programmer. It may not provide a sufficiently deep copy of
  an object and is subject to overriding by a malicious sub-class. Currently 
  no language-based mechanism supports secure object cloning. 
  This paper proposes a type-based annotation system for defining modular
  copy policies for class-based object-oriented programs. 
  A copy policy specifies the maximally allowed sharing between an
  object and its clone. We present a static
  enforcement mechanism that will guarantee that all classes fulfil their
  copy policy, even in the presence of overriding of copy methods, and  
  establish the semantic correctness of the overall approach in Coq.
  The mechanism has been implemented and experimentally evaluated on
  clone methods from several Java libraries.
\end{abstract}

\maketitle

\section{Introduction}

Exchanging data objects with untrusted code is a delicate matter because
of the risk of creating a data space that is accessible by an attacker.
Consequently, secure programming guidelines for Java such as those proposed by
Sun \cite{SunGuidelines:2010} and CERT \cite{CertGuidelines:2010} stress the
importance of using defensive \emph{copying} or \emph{cloning} before
accepting or handing out references to an internal mutable object. There are
two aspects of the problem:
\begin{enumerate}[(1)]
\item If the result of a method is a reference to an internal mutable object,
  then the receiving code may modify the internal state. Therefore, it is
  recommended to make copies of mutable objects that are returned as
  results, unless the intention is to share state. 
\item If an argument to a method is a reference to an object coming
  from hostile code, a local copy of the object should be
  created. Otherwise, the hostile code may be able to modify the internal
  state of the object.
\end{enumerate}

\noindent A common way for a class to provide facilities for copying objects is
to implement a \ttt{clone()} method that overrides the cloning
method provided by \ttt{java.lang.Object}. 
The following code snippet, taken from Sun's Secure Coding
Guidelines for Java, demonstrates how a \ttt{date} object is cloned
before being returned to a caller:
\begin{lstlisting}
public class CopyOutput {
    private final java.util.Date date;
    ...
    public java.util.Date getDate() {
        return (java.util.Date)date.clone(); }
}
\end{lstlisting}

\noindent However, relying on calling a polymorphic \ttt{clone} method
to ensure secure copying of objects may prove insufficient, for two
reasons. First, the implementation of the \ttt{clone()} method is
entirely left to the programmer and there is no way to enforce that an
untrusted implementation provides a sufficiently \emph{deep} copy of
the object. It is free to leave references to parts of the original
object being copied in the new object. Second,
even if the current \ttt{clone()} method works properly, sub-classes may
override the \ttt{clone()} method and replace it with a method that does not
create a sufficiently deep clone.  For the above example to behave correctly,
an additional class invariant is required, ensuring that the \ttt{date} field
always contains an object that is of class \ttt{Date} and not one of its 
sub-classes. To quote from the CERT guidelines for secure Java programming:
\emph{``Do not carry out defensive copying using the clone() method in
constructors, when the (non-system) class can be subclassed by untrusted code.
This will limit the malicious code from returning a crafted object when the
object's clone() method is invoked.''} Clearly, we are faced with a situation
where basic object-oriented software engineering principles (sub-classing and
overriding) are at odds with security concerns.
To reconcile these two aspects in a manner that provides semantically
well-founded guarantees of the resulting code, this paper proposes a formalism
for defining \emph{cloning policies} by annotating classes and specific copy
methods, and a static enforcement mechanism that will guarantee that all
classes of an application adhere to the copy policy. Intuitively,
policies impose non-sharing constraints between the structure 
referenced by a field of an object and the structure returned by the
cloning method.  Notice, that we do not enforce that
a copy method will always return a target object that is functionally
equivalent to its source. Nor does our method prevent a sub-class from
making a copy of a structure using new fields that are not governed by
the declared policy. For a more detailed example of these limitations,
see Section~\ref{sec:pol:limitations}. 


\subsection{Cloning of Objects}
\label{sec:intro:cloning}

For objects in Java to be cloneable, their class must implement the
empty interface \ttt{Cloneable}. A default \ttt{clone} method is
provided by the class \ttt{Object}: when invoked on an object of a
class, \ttt{Object.clone} will create a new object of that class and
copy the content of each field of the original object into the new
object. The object and its clone share all sub-structures of the
object; such a copy is called \textit{shallow}. 

It is common for cloneable classes to override the default clone
method and provide their own implementation. For a generic
\ttt{List} class, this could be done as follows:

\begin{lstlisting}
public class List<V> implements Cloneable
{
    public V value;
    public List<V> next;

    public List(V val, List<V> next) {
	this.value = val;
	this.next = next; }

    public List<V> clone() {
        return new List(value,(next==null)?null:next.clone()); }
}  
\end{lstlisting}
Notice that this cloning method performs a shallow copy of the list,
duplicating the spine but sharing all the elements between the list and its
clone. Because this amount of sharing may not be desirable (for the reasons
mentioned above), the programmer is free to implement other versions of
\ttt{clone()}. For example, another way of cloning a list is by copying both
the list spine and its elements\footnote{To be type-checked by the Java
compiler it is necessary to add a cast before calling \ttt{clone()} on
\ttt{value}. A cast to a sub interface of \ttt{Cloneable} that declares a
\ttt{clone()} method is necessary.}, creating what is known as a \textit{deep}
copy. 
\begin{lstlisting}
public List<V> deepClone() {
  return new List((V) value.clone(),
                  (next==null ? null : next.deepClone())); }  
\end{lstlisting}

\noindent A general programming pattern for methods that clone objects works by first
creating a shallow copy of the object by calling the \ttt{super.clone()}
method, and then modifying certain fields to reference new copies of the original
content. This is illustrated in the following snippet, taken from the class
\ttt{LinkedList} in Fig.~\ref{fig:linkedlist}:

\begin{lstlisting}
public Object clone() {  ...
  clone = super.clone();  ...
  clone.header = new Entry<E>(null, null, null); ...
  return clone;}
\end{lstlisting}

\noindent
There are two observations to be made about the analysis of such methods.
First, an analysis that tracks the depth of the clone being returned will have
to be flow-sensitive, as the method starts out with a shallow copy that is
gradually being made deeper. This makes the analysis more costly. Second,
there is no need to track precisely modifications made to parts of the memory
that are not local to the clone method, as clone methods are primarily
concerned with manipulating memory that they allocate themselves. This
will have a strong impact on the design choices of our analysis.


\subsection{Copy Policies}
\label{sec:intro:pol}

The first contribution of the paper is a proposal for a set of semantically
well-defined program annotations, whose purpose is to enable the expression of
policies for secure copying of objects. Introducing a copy policy language
enables class developers to state explicitly the intended behaviour of copy
methods. 
%
In the basic form of the copy policy formalism, fields of classes are
annotated with \ttt{@Shallow} and \ttt{@Deep}. Intuitively, the annotation
\ttt{@Shallow} indicates that the field is referencing an object, parts of which
may be referenced from elsewhere. The annotation \ttt{@Deep}(\ttt{X}) on a
field \ttt{f} means
that 
\begin{inparaenum}[\itshape a\upshape)]
\item upon return from \ttt{clone()}, the object referenced by this field \ttt{f} is not referenced
  from elsewhere, and
\item the field \ttt{f} is copied according to the copy policy identified by \ttt{X}. 
\end{inparaenum} 
Here, \ttt{X} is either the name of a specific policy or if omitted,
it designates the default policy of the class of the field. 
%
%
For example, the following annotations:
\begin{lstlisting}
  class List  { @Shallow V value;  @Deep List next;  ...}
\end{lstlisting}
specifies a default policy for the class \ttt{List} where the \ttt{next} field
points to a list object that also respects the default copy policy for
lists. Any method in the \ttt{List} class, labelled with the \ttt{@Copy}
annotation, is meant to respect this default policy.

In addition it is possible to define other copy policies and annotate specific
\emph{copy methods} (identified by the annotation \ttt{@Copy(...)}) with the name
of these policies.
For example, the annotation\footnote{Our implementation uses a sightly
different policy declaration syntax because of the limitations imposed by the
Java annotation language.}
\begin{lstlisting}
DL: { @Deep V value; @Deep(DL) List next;};
@Copy(DL) List<V> deepClone() {
  return new List((V) value.clone(),
                  (next==null ? null : next.deepClone())); }  
\end{lstlisting}
can be used to specify a list-copying method that also ensures that
the \ttt{value} fields of a list of objects are copied according to
the copy policy of their class (which  is a stronger policy than
that imposed by the annotations of the class \ttt{List}). We give a
formal definition of the policy annotation language in
Section~\ref{section-annotations}. 

The annotations are meant to ensure a certain degree of non-sharing between
the original object being copied and its clone. We want to state
explicitly that the parts of the clone that can be accessed via fields marked \ttt{@Deep}
are unreachable from any part of the heap that was accessible before the call
to \ttt{clone()}. To make this intention precise, we provide a formal
semantics of a simple programming language extended with policy annotations
and define what it means for a program to respect a policy
(Section~\ref{sec:policysemantics}). 

\subsection{Enforcement} 
\label{sec:intro:typ}

The second major contribution of this work is to make the developer's intent,
expressed by copy policies, statically enforceable using a type system.  We
formalize this enforcement mechanism by giving an interpretation of the policy
language in which annotations are translated into graph-shaped type
structures. For example, the default annotations of the \ttt{List} class defined above
will be translated into the graph that is depicted to the right in 
Fig.~\ref{fig:listab}  (\ttt{res} is the name given to
the result of the copy method). The left part shows the concrete heap structure.

Unlike general purpose shape analysis, we take into account the programming
methodologies and practice for copy methods, and design a type system
specifically tailored to the enforcement of copy policies.  This means that
the underlying analysis must be able to track precisely all modifications to
objects that the copy method allocates itself (directly or indirectly) in a
flow-sensitive manner.  Conversely, as copy methods should not modify
non-local objects, the analysis will be designed to be more approximate when
tracking objects external to the method under analysis, and the type system
will accordingly refuse methods that attempt such non-local modifications. As
a further design choice, the annotations are required to be verifiable
modularly on a class-by-class basis without having to perform an analysis of
the entire code base, and at a reasonable cost.

\begin{figure}
  \centering
  \includegraphics[width=.8\linewidth]{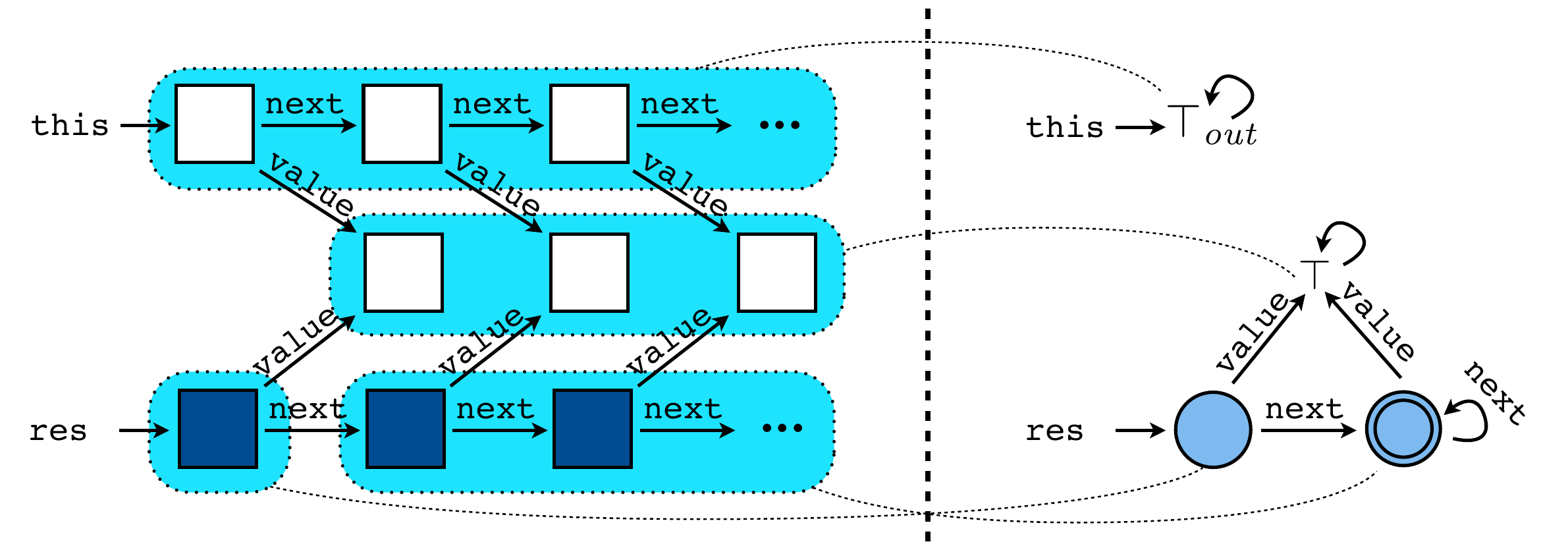}
  \caption{A linked structure (left part) and its abstraction (right part).}
  \label{fig:listab}
\end{figure}

As depicted in Fig.~\ref{fig:listab}, concrete memory cells
are either abstracted as
\begin{inparaenum}[\itshape a\upshape)]
\item $\topout$ when they are not allocated in the copy method itself (or its callee);
\item $\top$ when they are just marked as \emph{maybe-shared}; and
\item circle nodes of a deterministic 
graph when they are locally allocated and not shared. A single 
circle furthermore expresses a singleton concretization.
\end{inparaenum} 
In this example, the abstract heap representation matches the graph
interpretation of annotations, which means that the instruction set that
produced this heap state satisfies the specified copy policy.

Technically, the intra-procedural component of our analysis corresponds to
heap shape analysis with the particular type of graphs that we have defined.
Operations involving non-local parts of the heap are rapidly discarded.
Inter-procedural analysis uses the signatures of copy methods provided by the
programmer. Inheritance is dealt with by stipulating that inherited fields
retain their ``shallow/deep'' annotations.  Redefinition of a method must
respect the same copy policy and other copy methods can be added to a
sub-class. 
The detailed definition of the analysis, presented as a set of type inference
rules, is given in Section~\ref{sec:type-system}. 

This article is an extended version of a paper presented at ESOP'11~\cite{JensenKP:Esop11}.
We have taken advantage of the extra space to provide improved and more detailed
explanations, in particular of the inference mechanism and of what is
exactly is being enforced by our copy policies. We have also added
details of the proof of correctness of the enforcement
mechanism. The formalism of copy policies and the correctness theorem
for the core language defined in Section~\ref{section-annotations} have been
implemented and verified mechanically in Coq~\cite{clone-webpage}.
The added details about the proofs should especially facilitate the 
understanding of this Coq development

\section{Language and Copy Policies}\label{section-annotations}

\begin{figure}
  \centering
  \begin{small}
\begin{frameit}
\[
\var{x},\var{y} \in \Var \qquad
\var{f}         \in \Field \qquad
\var{m}         \in \Meth \qquad
\var{cn}         \in \ClassName \qquad
 X   \in \Polid
\]
\[
\begin{array}{rrrrl}\\
p&\in&\prog  & \dasig & \overline{cl} \\
\mathit{cl}&\in&\Class  & \dasig & \classkeyword~cn~[\extendskeyword~cn]~\{ \overline{\cp}~ \overline{\md} \} \\
\cp &\in& \CopySignature &\dasig& X : \{\tau\} \\
 \tau&\in&\Pol  & \dasig & \overline{(X,f)} \\
\md&\in&\MethDecl  & \dasig & \Copy(X)~\HAssign{m(x)}{}c \\
c&\in&\comm  & \dasig &\HAssign{x}{y}
 	\mid \HAssign{x}{y.f}
 	\mid \HAssign{x.f}{y} 
        \mid \HAssign{x}{\snull}\\
        &&& & \mid \HNew{x}{cn} 
        \mid \HCall{x}{\var{cn}:X}{y} \mid \UnkownCall{x}{y}  \mid \HReturn{x} \\
&&& & \mid c;c 
        \mid \HIf{c}{c} 
	\mid \HWhile{c}\\ 
\end{array}
\]
\\[2ex]
\begin{minipage}[t]{.9965\linewidth}
  \textbf{Notations:} We write $\preceq$ for the reflexive transitive closure
  of the subclass relation induced by a (well-formed) program that is fixed in
  the rest of the paper. We write $\overline{x}$ a sequence of syntactic
  elements of form $x$.
\end{minipage}
\end{frameit}
\end{small}
  \caption{Language Syntax.}
  \label{fig:syntax}
\end{figure}

The formalism is developed for a small, imperative language extended with
basic, class-based object-oriented features for object allocation, field
access and assignment, and method invocation. A program is a collection of
classes, organized into a tree-structured class hierarchy via the
$\extendskeyword$ relation. A class consists of a series of copy method
declarations with each its own policy $X$, its name $m$, its formal parameter $x$
and commands $c$ to execute. A sub-class inherits the copy methods of its
super-class and can re-define a copy method defined in one of its
super-classes. We only consider copy methods. Private methods (or static
methods of the current class) are inlined by the type checker. Other method
calls (to virtual methods) are modeled by a special instruction
$\UnkownCall{x}{y}$ that assigns an arbitrary value to $\var{x}$ and possibly
modifies all heap cells reachable from $\var{y}$ (except itself). 
The other commands are standard.
The copy method call $\HCall{x}{\var{cn}:X}{y}$ is a virtual call. The method
to be called is the copy method of name $m$ defined or inherited by the
(dynamic) class of the object stored in variable $y$. The subscript annotation
$\var{cn{:}X}$ is used as a static constraint.  It is supposed that the type
of $y$ is guaranteed to be a sub-class of class $cn$ and that $cn$ defines a
method $m$ with a copy policy $X$.  This is ensured by standard bytecode
verification and method resolution. 

We suppose given a set of policy identifiers $\Polid$, ranged over by $X$. A
copy policy declaration 
has the form $  X : \{\tau\}$ where $X$ is the identifier of the policy
signature and $\tau$ is a policy. The policy $\tau$ consists of a set of
field annotations $(X,f) ~; ~\ldots$ where $f$ is a \emph{deep} field that
should reference an object which can only be accessed via the returned pointer
of the copy method and which respects the copy policy identified by $X$. 
The use of policy identifiers makes it possible to write recursive definitions
of copy policies, necessary for describing copy properties of recursive
structures.
Any other field is implicitly \emph{shallow}, meaning that no copy properties are
guaranteed for the object referenced by the field. No further copy
properties are given for the sub-structure starting at \emph{shallow} fields.
For instance, the default copy policy declaration of the class \ttt{List} presented in
Sec.~\ref{sec:intro:pol} writes: $\mathtt{List.default}:~\{ (\mathtt{List.default},\mathtt{next}) \}$.

We assume that for a given program, all copy policies have been grouped
together in a finite map $\Pi_p:\Polid\to\Pol$.  In the rest of the paper, we
assume this map is complete, \emph{i.e.} each policy name $X$ that appears in
an annotation is bound to a unique policy in the program $p$.

The semantic model of the language defined here is store-based: $$
\begin{small} \begin{array}{rrlcl} l &\in& \Loc \\ v &\in& \Val &=& \Loc \cup
\{ \vnull \} \\ \rho &\in& \Env &=& \Var \to \Val \\ o &\in& \Object &=&
\Field \to \Val \\ h &\in& \Heap &=& \Loc \ptofin \left( \ClassName \times
\Object\right) \\ \st{\rho,h,A}&\in& \State &=&
\Env\times\Heap\times\Power(\Loc) \end{array} \end{small} $$
A program state consists of an environment $\rho$ of local variables,
a store $h$ of locations mapping\footnote{We note $\ptofin$ for
  partial functions on finite domains.} to objects in a heap and a set
$A$ of \emph{locally allocated locations}, \emph{i.e.}, the locations
that have been allocated by the current method invocation or by one of its
callees. This last component does not influence the semantic
transitions: it is used to express the type system interpretation
defined in Sec.~\ref{sec:type-system}, but is not used in the final
soundness theorem.  Each object is modeled in turn as a pair composed
with its dynamic class and a finite function from field names to values
(references or the specific \vnull\ reference for null values). We do
not deal with base values such as integers because their immutable
values are irrelevant here.
\begin{figure*}
{\centering\scriptsize
$$
\begin{array}{c}
\inferrule
 {~~}
 {\left(\HAssign{x}{y},\stb{\rho,h,A}\right) \leadsto \str{\rho[x\mapsto\rho(y)],h,A}}
\qquad
\inferrule
 {~~}
 {\left(\HAssign{x}{\snull},\stb{\rho,h,A}\right) \leadsto \str{\rho[x\mapsto\vnull],h,A}}
\\[3ex]
\inferrule
 {\rho(y) \in \dom(h)}
 {\left(\HAssign{x}{y.f},\stb{\rho,h,A}\right) \leadsto \str{\rho[x\mapsto h(\rho(y),f)],h,A}}
\qquad
\inferrule
 {\rho(x) \in \dom(h)}
 {\left(\HAssign{x.f}{y},\stb{\rho,h,A}\right) \leadsto \str{\rho,h[(\rho(x),f)\mapsto \rho(y)],A}}
\\[3ex]
\inferrule
 {l \not\in \dom(h)}
 {\left(\HNew{x}{\var{cn}},\stb{\rho,h,A}\right) \leadsto \str{\rho[x\mapsto l],h[l \mapsto (\var{cn},o_{\vnull})],A\cup\{l\}}}
\\[1ex]
\inferrule
{~}
{\left(\HReturn{x},\stb{\rho,h,A}\right) \leadsto \str{\rho[\var{ret}\mapsto\rho(x)],h,A}}
\\[3ex]
\inferrule
{\begin{array}{c}
  h(\rho(y))=(\var{cn}_y,\underscore) \quad \lookup(\var{cn}_y,m) = \left(\Copy(X')~\HAssign{m(a)}c\right) 
  \quad \var{cn}_y \preceq \var{cn} \cr
  (c,\stb{\rho_{\vnull}[a\mapsto\rho(y)],h,\emptyset}) \leadsto \str{\rho',h',A'} \cr
\end{array}}
 {\left(\HCall{x}{\var{cn}:X}{y},\stb{\rho,h,A}\right) \leadsto \str{\rho[x\mapsto \rho'(\ret)],h',A\cup A'}}
\\[3ex]
\inferrule
{
  \begin{array}[c]{c}
\dom(h)\subseteq\dom(h')\quad \forall l\in\dom(h)\setminus\Reach{h}{\rho(y)},~ h(l)=h'(l) \cr
\forall l\in\dom(h)\setminus\Reach{h}{\rho(y)},~ 
\forall l'\in\dom(h'),  l\in\Reach{h'}{l'} \Rightarrow l'\in\dom(h)\setminus\Reach{h}{\rho(y)} \cr
v \in \{\vnull\}+\Reach{h}{\rho(y)}\cup(\dom(h')\setminus\dom(h)) 
\end{array}
}
 {\left(\UnkownCall{x}{y},\stb{\rho,h,A}\right) \leadsto \str{\rho[x\mapsto v],h',A \backslash \Reachp{h}{\rho(y)}}}\\[3ex]
\inferrule
{{\left(c_1,\stb{\rho,h,A}\right) \leadsto \str{\rho_1,h_1,A_1}}\quad
 {\left(c_2,\stb{\rho_1,h_1,A_1}\right) \leadsto \str{\rho_2,h_2,A_2}}}
{\left(c_1; c_2,\stb{\rho,h,A}\right) \leadsto \str{\rho_2,h_2,A_2}}\\[3ex]
\inferrule
{\left(c_1,\stb{\rho,h,A}\right) \leadsto \str{\rho_1,h_1,A_1}}
{\left(\HIf{c_1}{c_2},\stb{\rho,h,A}\right) \leadsto \str{\rho_1,h_1,A_1}}\qquad
\inferrule
{\left(c_2,\stb{\rho,h,A}\right) \leadsto \str{\rho_2,h_2,A_2}}
{\left(\HIf{c_1}{c_2},\stb{\rho,h,A}\right) \leadsto \str{\rho_2,h_2,A_2}}\\[3ex]
\inferrule{~~}
{\left(\HWhile{c},\stb{\rho,h,A}\right) \leadsto \str{\rho,h,A}}\qquad
\inferrule
{\left(c; \HWhile{c},\stb{\rho,h,A}\right) \leadsto \str{\rho',h',A'}}
{\left(\HWhile{c},\stb{\rho,h,A}\right) \leadsto \str{\rho',h',A'}}
\end{array}
$$}\vspace{3 pt}
\begin{minipage}[t]{.9965\linewidth}\footnotesize
\textbf{Notations:} We write $h(l,f)$ for the value $o(f)$ such that
$l\in\dom(h)$ and $h(l)=o$. We write $h[(l,f)\mapsto v]$ for the heap $h'$
that is equal to $h$ except that the $f$ field of the object at location $l$
now has value $v$.  Similarly, $\rho[x\mapsto v]$ is the environment $\rho$
modified so that $x$ now maps to $v$.  The object $o_\vnull$ is the object
satisfying $o_\vnull(f)=\vnull$ for all field $f$, and $\rho_\vnull$ is the
environment such that $\rho_\vnull(x)=\vnull$ for all variables $x$.  We
consider methods with only one parameter and name it $p$.  $\lookup$
designates the dynamic lookup procedure that, given a class name $\var{cn}$
and a method name $m$, find the first implementation of $m$ in the class
hierarchy starting from the class of name $\var{cn}$ and scanning the
hierarchy bottom-up. It returns the corresponding method declaration.
\var{ret} is a specific local variable name that is used to store the result
of each method.  $\Reach{h}{l}$ (resp. $\Reachp{h}{l}$) denotes the set of
values that are reachable from any sequence (resp. any non-empty sequence) of
fields in $h$.
\end{minipage}
\caption{Semantic Rules.}\label{fig:semrules}
\end{figure*}

The operational semantics of the language is defined (Fig.~\ref{fig:semrules})
by the evaluation relation $\leadsto$ between configurations $\comm \times
\State$ and resulting states $\State$. The set of locally allocated locations
is updated by both the $\HNew{x}{\var{cn}}$ and the $\HCall{x}{\var{cn}:X}{y}$
statements.  The execution of an unknown method call $\UnkownCall{x}{y}$
results in a new heap $h'$ that keeps all the previous objects that were not
reachable from $\rho(y)$. It assigns the variable $x$ a reference that was
either reachable from $\rho(y)$ in $h$ or that has been allocated during this
call and hence not present in $h$.

\subsection{Policies and Inheritance}

We impose restrictions on the way that inheritance can interact with
copy policies. A method being re-defined in a sub-class can impose
further constraints on how fields of the objects returned as
result should be copied. A field already annotated \emph{deep} with policy $X$ must have
the same annotation in the policy governing the re-defined
method but a field annotated as \emph{shallow} can be annotated \emph{deep}
for a re-defined method. 

\begin{defi}[Overriding Copy Policies]
A program $p$ is well-formed with respect to overriding copy policies if and
only if for any method declaration
$\Copy(X')~\HAssign{m(x)}{}\ldots$ 
that overrides (\emph{i.e.} is declared with this signature in a subclass
 of a class $\mathit{cl}$) another method declaration 
$\Copy(X)~\HAssign{m(x)}{}\ldots$ declared in $\mathit{cl}$,
we have
\[\Pi_p(X) \subseteq \Pi_p(X').\]
\end{defi}

\noindent
Intuitively, this definition imposes that the overriding copy policy is stronger than the
policy that it overrides. 
Lemma~\ref{lem:mono} below states this formally. 

\begin{exa} 
The \ttt{java.lang.Object} class provides a \ttt{clone()} method of policy
$\{\}$ (because its native \ttt{clone()} method is \emph{shallow} on all
fields).  A class $\mathtt{A}$ declaring two fields $\mathtt{f}$ and
$\mathtt{g}$ can hence override the \ttt{clone()} method and give it a
policy $\{(X,\mathtt{g})\}$.  If a class $\mathtt{B}$ extends $\mathtt{A}$ and
overrides \ttt{clone()}, it must assign it a policy of the form
$\{(X,\mathtt{g});~ \ldots ~\}$ and could declare the field $\mathtt{f}$ as
\emph{deep}. In our implementation, we let the programmer leave the policy
part that concerns fields declared in superclasses implicit, as it is
systematically inherited.  \end{exa}

\subsection{Semantics of Copy Policies}
\label{sec:policysemantics}

The informal semantics of the copy policy annotation of a method is:
\begin{quote}
A copy method satisfies a copy policy $X$ if and only if no memory cell that is
reachable from the result of this method following only fields with
\emph{deep} annotations in $X$, is reachable from another local variable of the
caller.
\end{quote}
We formalize this by giving, in Fig.~\ref{fig:annot:semantics}, a
semantics to copy policies based on access paths. An access path
consists of a variable $x$ followed by a sequence of field names $f_i$
separated by a dot. An access path $\pi$ can be evaluated to a value
$v$ in a context $\st{\rho,h}$ with a judgement
$\evalexpr{\rho}{h}{\pi}{v}$. Each path $\pi$ has a root variable
$\proot{\pi}\in\Var$. A judgement $\vdash \pi:\tau$ holds when a path
$\pi$ follows only deep fields in the policy $\tau$.  The rule
defining the semantics of copy policies can be paraphrased as follows:
For any path
$\pi$ starting in $x$ and leading to location $l$ only following deep
fields in policy $\tau$, there cannot be another path leading to the
same location $l$ which does not start in $x$.

\begin{figure}
\scriptsize\centering
\begin{minipage}[t]{.9965\linewidth}
\bf Access path syntax
\end{minipage}
$\begin{array}{rrrrl}
\pi & \in &\AccessPath  & \dasig & x \mid \pi.f
\end{array}
$
\begin{minipage}[t]{.9965\linewidth}
\bf Access path evaluation
\end{minipage}
$
\inferrule{}{\evalexpr{\rho}{h}{x}{\rho(x)}} 
\hspace*{5ex}\inferrule
  {\evalexpr{\rho}{h}{\pi}{l} \quad h(l) = o}
  {\evalexpr{\rho}{h}{\pi.f}{o(f)}} 
$
\begin{minipage}[t]{.9965\linewidth}
\bf Access path root
\end{minipage}
$\proot{x}=x \hspace*{5ex} \proot{\pi.f} = \proot{\pi}$
\begin{minipage}[t]{.9965\linewidth}
{\bf Access path satisfying a policy}\\
We suppose given $\Pi_p:\Polid\to\Pol$ the set of copy policies of the
considered program $p$. 
\end{minipage}
$\inferrule{~}{\vdash x:\tau}\hspace*{5ex}
\inferrule
{(X_1~f_1)\in\tau, (X_2~f_2)\in\Pi_p(X_1), \cdots, (X_n~f_n)\in\Pi_p(X_{n-1})
}
{\vdash x.f_1.\ldots.f_n : \tau}
$

\begin{minipage}[t]{.9965\linewidth}
{\bf Policy semantics}
\end{minipage}
$
\inferrule
{\left.\begin{array}{rl}
\forall \pi, \pi'\in\AccessPath, \forall l,l'\in\Loc, &x=\proot{\pi},\quad \proot{\pi'}\not=x,\cr
  &\evalexpr{\rho}{h}{\pi}{l}~,\quad \evalexpr{\rho}{h}{\pi'}{l'}, \cr
  &\vdash \pi : \tau 
\end{array}\right\}
\text{implies}~ l\not= l'
}
{\rho,h,x \models \tau}
$

\caption{Copy Policy Semantics}\label{fig:annot:semantics}
\end{figure}

\begin{defi}[Secure Copy Method]
\label{def:secure-copy-method}
A method $m$ is said \emph{secure} wrt. a copy signature $\Copy(X)\{\tau\}$ if
and only if for all heaps $h_1,h_2\in\Heap$, local environments
$\rho_1,\rho_2\in\Env$, locally allocated locations $A_1,A_2\in\Power(\Loc) $, and variables $x,y\in\Var$,
$$
(\HCall{x}{\var{cn}:X}{y}, \st{\rho_1,h_1,A_1}) \leadsto \st{\rho_2,h_2,A_2}
  ~~\text{implies}~~\rho_2,h_2,x \models \tau$$
\end{defi}
\noindent Note that because of virtual dispatch, the method executed
by such a call may not be the method found in $\var{cn}$ but an
overridden version of it.  The security policy requires that all
overriding implementations still satisfy the policy $\tau$.

\begin{lem}[Monotonicity of Copy Policies wrt. Overriding]\label{lem:mono} 
$$
\tau_1\subseteq\tau_2 ~\text{implies}~ 
\forall h,\rho,x,~~ \rho,h,x \models \tau_2 ~\Rightarrow \rho,h,x \models \tau_1
$$
\end{lem}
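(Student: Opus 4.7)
The plan is to unfold the definition of $\rho,h,x \models \tau$ and observe that the only way $\tau_1$ enters the definition is through the judgement $\vdash \pi : \tau_1$. So the core of the argument is a small auxiliary monotonicity lemma at the level of paths:
\[
\tau_1 \subseteq \tau_2 \text{ and } \vdash \pi : \tau_1 \text{ imply } \vdash \pi : \tau_2.
\]
This auxiliary claim is immediate from inspection of the two inference rules defining $\vdash \pi : \tau$: the base case for $\pi = x$ does not mention $\tau$, and the inductive case $\vdash x.f_1.\ldots.f_n : \tau$ uses $\tau$ only in the first premise $(X_1,f_1) \in \tau$, while the remaining premises $(X_{i+1},f_{i+1}) \in \Pi_p(X_i)$ depend only on the global map $\Pi_p$ and the choices of $X_i$. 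So if $(X_1,f_1) \in \tau_1 \subseteq \tau_2$, exactly the same sequence of $X_i$'s witnesses $\vdash \pi : \tau_2$.

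With this auxiliary lemma in hand, the main statement follows directly from the definition of $\models$. I would fix $h$, $\rho$, $x$, assume $\rho,h,x \models \tau_2$, and prove $\rho,h,x \models \tau_1$ by picking arbitrary $\pi,\pi' \in \AccessPath$ and $l,l' \in \Loc$ satisfying the four premises of the rule defining $\rho,h,x \models \tau_1$, namely $x = \proot{\pi}$, $\proot{\pi'} \neq x$, $\evalexpr{\rho}{h}{\pi}{l}$, $\evalexpr{\rho}{h}{\pi'}{l'}$, and $\vdash \pi : \tau_1$. The auxiliary lemma upgrades the last premise to $\vdash \pi : \tau_2$, at which point the assumption $\rho,h,x \models \tau_2$ yields $l \neq l'$, which is exactly what is needed.

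There is essentially no obstacle: the statement is a pure structural monotonicity property, and no reasoning about the operational semantics, the heap, or reachability is required. The only mildly subtle point worth flagging in the write-up is that the $X_i$'s appearing along the chain of premises defining $\vdash \pi : \tau$ are existentially chosen, so strictly speaking the auxiliary lemma is proved by induction on the length of $\pi$ (or equivalently on the derivation of $\vdash \pi : \tau_1$), reusing the same witnesses $X_1,\ldots,X_n$ to build the derivation $\vdash \pi : \tau_2$.
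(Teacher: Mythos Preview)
Your proposal is correct and follows exactly the same approach as the paper: first observe that $\tau_1\subseteq\tau_2$ implies $\vdash\pi:\tau_1 \Rightarrow \vdash\pi:\tau_2$ for all access paths $\pi$, then conclude directly from the definition of $\models$. Your write-up is in fact more explicit than the paper's own two-line proof, which simply states the auxiliary implication and appeals to the definition.
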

\begin{proof}

[See Coq proof \texttt{Overriding.copy\_policy\_monotony}~\cite{clone-webpage}]

Under these hypotheses, for all access paths $\pi$, $\vdash\pi:\tau_1$ implies
$\vdash\pi:\tau_2$. Thus the result holds by definition of $\models$.
\end{proof}

Thanks to this lemma, it is sufficient to prove that each method is secure wrt.~its own copy 
signature to ensure that all potential overridings will be also secure
wrt.~that copy signature. 

\subsection{Limitations of Copy Policies}
\label{sec:pol:limitations}
The enforcement of our copy policies will ensure that certain sharing
constraints are satisfied between fields of an object and its
clone. However, in the current formalism we restrict the policy to
talk about fields that are actually present in a class. The policy
does not ensure properties about fields that are added in
sub-classes. This means that an attacker could copy \emph{e.g.}, a
list by using a new field to build the list, as in the following
example. 

\begin{lstlisting}
public class EvilList<V> extends List<V>
{
    @Shallow public List<V> evilNext;

    public EvilList(V val, List<V> next) {
	super(val,null);
	this.evilNext = next; }

    public List<V> clone() {
        return new EvilList(value,evilNext); }

   // redefinition of all other methods to use the evilNext field
   // instead of next
}  
\end{lstlisting}

\noindent
The enforcement mechanism described in this article will determine
that the \ttt{clone()} method of class \ttt{EvilList} respects the copy
policy declared for the \ttt{List} class in
Section~\ref{sec:intro:pol} because this policy only speaks about the
\ttt{next} field which is set to \ttt{null}. It will fail to discover
that the class \ttt{EvilList} creates a shallow copy of lists through
the \ttt{evilNext} field. In order to prevent this attack, the policy
language must be extended, \emph{e.g.}, by adding a facility for
specifying that all fields \emph{except} certain, specifically named
fields must be copied deeply. The enforcement of such policies will likely
be able to reuse the analysis technique described below.

\section{Type and Effect System}
\label{sec:type-system}



The annotations defined in the previous section are convenient for
expressing a copy policy but are not sufficiently expressive for
reasoning about the data structures being copied.  The static
enforcement of a copy policy hence relies on a translation of
policies into a graph-based structure (that we shall call types)
describing parts of the environment of local variables and the heap
manipulated by a program. In particular, the types can express useful
alias information between variables and heap cells.  In this section,
we define the set of types, an approximation (sub-typing) relation
$\sqsubseteq$ on types, and an inference system for assigning types to
each statement and to the final result of a method.

The set of types is defined using the following symbols:
\begin{align*}
n           &\in\Node
  &t        &\in\BaseType = \Node + \{ \bot,  \topout, \top \} \\
\Gamma      &\in \Var \to \BaseType 
  &\Delta   &\in \LSG = \Node \ptofin \Field \to \BaseType \\
\Theta      &\in \Power(\Node) 
  &T        &\in  \Type = (\Var \to \BaseType)\times\LSG\times\Power(\Node) 
\end{align*}
We assume given a set $\Node$ of nodes. A value can be given a \emph{base
type} $t$ in $\Node +  \{ \bot,  \topout, \top \}$.  A node $n$ means the
value has been locally allocated and is not shared. 
The symbol $\bot$ means that the value is equal to the null reference
$\vnull$.  The symbol $\topout$ means that the value contains a location that
cannot reach a locally allocated object.  The symbol $\top$ is the specific
``no-information'' base type. 
As is standard in analysis of memory structures, we distinguish
between nodes that represent exactly one memory cell and nodes that
may represent several cells. If a node representing only one cell
has an edge to another node, then this edge can be forgotten and \emph{replaced} when we
assign a new value to the node---this is called a \emph{strong}
update. If the node represents several cells, then the assignment may
not concern all these cells and edges cannot be forgotten. We can only
\emph{add} extra out-going edges to the node---this is termed a
\emph{weak} update. In the graphical representations of types, we use
singly-circled nodes to designate "weak" nodes and doubly-circled
nodes to represent "strong" nodes. 

A type is a triplet
$T=(\Gamma,\Delta,\Theta)\in\Type$ where
\begin{description}
  \item[$\Gamma$] is a typing environment that maps (local) variables to base
  types.
  \item[$\Delta$] is a graph whose nodes are elements of $\Node$. The edges of
  the graphs are labeled with field names.  The successors of a node is a base
  type. Edges over-approximate the concrete points-to relation.
  \item[$\Theta$] is a set of nodes that represents necessarily only one
  concrete cell each. Nodes in $\Theta$ are eligible to strong update while
  others (weak nodes) can only be weakly updated.
\end{description}

\begin{TR}
\begin{exa}
\label{ex:type:graph}
The default \ttt{List} policy of Sec.~\ref{sec:intro:pol} translates into the
type
\begin{align*}
  \Gamma &= [\mathtt{res}\mapsto n_1, \mathtt{this}\mapsto \topout] \\
  \Delta &= 
  [ (n_1,\mathtt{next})\mapsto n_2,(n_2,\mathtt{next})\mapsto n_2,(n_1,\mathtt{value})\mapsto \top,(n_2,\mathtt{value})\mapsto \top ] \\
  \Theta &=  \{n_1\}.
\end{align*}
As mentioned in Sec~\ref{sec:intro:typ}, this type enjoys a graphic representation
corresponding to the right-hand side of Fig.~\ref{fig:listab}.
\end{exa}
\end{TR}


In order to link types to the heap structures they represent, we
will need to state reachability predicates in the abstract domain. Therefore,
the path evaluation relation is extended to types using the following
inference rules:
\begin{small}
\begin{equation*}
\inferrule{~}
          {\tevalexpr{\Gamma}{\Delta}{x}{\Gamma(x)}}
\quad
\inferrule{\tevalexpr{\Gamma}{\Delta}{\pi}{n}}
          {\tevalexpr{\Gamma}{\Delta}{\pi.f}{\Delta[n,f]}}
\quad
\inferrule{\tevalexpr{\Gamma}{\Delta}{\pi}{\top}}
          {\tevalexpr{\Gamma}{\Delta}{\pi.f}{\top}}
\quad
\inferrule{\tevalexpr{\Gamma}{\Delta}{\pi}{\topout}}
          {\tevalexpr{\Gamma}{\Delta}{\pi.f}{\topout}}
\end{equation*}
\end{small}
Notice both $\topout$ and $\top$ are considered as sink nodes for path
evaluation purposes
\footnote{The sink nodes status of $\top$ (resp.~$\topout$) can be
  understood as a way to state the following invariant enforced by our type
  system: when a cell points to an unspecified (resp.~foreign) part of the
  heap, all successors of this cell are also unspecified (resp.~foreign).}.

\subsection{From Annotation to Type}

The set of all copy policies $\Pi_p\subseteq\CopySignature$ can be
translated into a graph  $\Delta_p$ as described hereafter.  We assume a
naming process that associates to each policy name $X\in\Polid$ of a program a
unique node $n'_X\in\Node$.
\begin{equation*}
\Delta_p = \displaystyle\bigcup_{X:\{ (X_1,f_1);\ldots ;(X_k,f_k)\} \in \Pi_p} 
  \left[ (n'_X,f_1) \mapsto n'_{X_1}, \cdots, (n'_X,f_k) \mapsto n'_{X_k} \right]
\end{equation*}

Given this graph, a policy $\tau=\{(X_1,f_1);\ldots ;(X_k,f_k)\}$ that is declared
in a class $\var{cl}$ is translated into a triplet:
$$
\Phi(\tau)=\left(n_\tau,
                 \Delta_p\cup\left[ (n_\tau,f_1) \mapsto n'_{X_1}, \cdots, (n_\tau,f_k) \mapsto n'_{X_k} \right],
                 \{n_\tau\}\right)
$$
Note that we \emph{unfold} the possibly cyclic graph $\Delta_p$ with an extra
node $n_\tau$ in order to be able to catch an alias information between this
node and the result of a method, and hence declare $n_\tau$ as strong. Take
for instance the type in Fig.~\ref{fig:listab}: were it not for this
unfolding step, the type would have consisted only in a weak node and a $\top$
node, with the variable \ttt{res} mapping directly to the former.  Note also
that it is not necessary to keep (and even to build) the full graph $\Delta_p$
in $\Phi(\tau)$ but only the part that is reachable from $n_\tau$.


\subsection{Type Interpretation}

The semantic interpretation of types is given in
Fig.~\ref{fig:type:interpret}, in the form of a relation 
$${\InterpM{\rho}{h}{A}{\Gamma}{\Delta}{\Theta}}$$ that states when a local
allocation history $A$, a heap $h$ and an environment $\rho$ are coherent with
a type $(\Gamma,\Delta,\Theta)$. 
The interpretation judgement amounts to checking that 
\begin{inparaenum}[(i)] 
 \item for every path $\pi$ that leads to a value $v$ in the concrete
   memory and to a base type $t$ in the 
   graph, $t$ is a correct description of $v$, as formalized by the auxiliary type interpretation
   ${\Interpret{\rho}{h}{A}{\Gamma}{\Delta}{v}{t}}$;
  \item every strong node in $\Theta$ represents a uniquely reachable value in
  the concrete memory.
\end{inparaenum}
The auxiliary judgement ${\Interpret{\rho}{h}{A}{\Gamma}{\Delta}{v}{t}}$ is
defined by case on $t$. The null value is represented by any type.  The
symbol $\top$ represents any value and $\topout$ those values that do not
allow to reach a locally allocated location.  A node $n$ represents a locally
allocated memory location $l$ such that every concrete path $\pi$ that leads
to $l$ in $\st{\rho,h}$ leads to node $n$ in $(\Gamma,\Delta)$.

\begin{figure}
{\centering
\begin{minipage}[t]{.9965\linewidth}
\bf Auxiliary type interpretation
\end{minipage}
\scriptsize$$
\begin{array}{c}
\inferrule{~~}
{\Interpret{\rho}{h}{A}{\Gamma}{\Delta}{\vnull}{t}}
~~~~
\inferrule{~~}
{\Interpret{\rho}{h}{A}{\Gamma}{\Delta}{v}{\top}}
~~~~
\inferrule{
\Reach{h}{l}\cap A = \emptyset
}
{\Interpret{\rho}{h}{A}{\Gamma}{\Delta}{l}{\topout}}
\\[5ex]
\inferrule{
l \in A \qquad
n\in\dom(\Delta) \qquad
\forall \pi,~ \evalexpr{\rho}{h}{\pi}{l} ~\Rightarrow
\evalexpr{\Gamma}{\Delta}{\pi}{n}  }
{\Interpret{\rho}{h}{A}{\Gamma}{\Delta}{l}{n}}
\end{array}
$$}
\begin{minipage}[t]{.9965\linewidth}
\bf Main type interpretation
\end{minipage}
{\scriptsize$$\inferrule{
\begin{array}[t]{l}
\forall \pi, \forall t, \forall v, \cr
~~~\left.\begin{array}[c]{l}
\tevalexpr{\Gamma}{\Delta}{\pi}{t}  \cr
\evalexpr{\rho}{h}{\pi}{v}  
\end{array}\right\rbrace \Rightarrow
{\Interpret{\rho}{h}{A}{\Gamma}{\Delta}{v}{t}}
\end{array}
\\
\begin{array}[t]{l}
\forall n\in\Theta,~\forall \pi, \forall \pi', \forall l,\forall l',\cr
~~~ 
\left.\begin{array}[c]{l}
\tevalexpr{\Gamma}{\Delta}{\pi}{n}  ~\land~ 
\tevalexpr{\Gamma}{\Delta}{\pi'}{n}  \cr
\evalexpr{\rho}{h}{\pi}{l}  ~\land ~
\evalexpr{\rho}{h}{\pi'}{l'}
\end{array}\right\rbrace
\Rightarrow l = l'
\end{array}
}
{\InterpM{\rho}{h}{A}{\Gamma}{\Delta}{\Theta}}
$$}
\caption{Type Interpretation}\label{fig:type:interpret}
\end{figure}

We now establish a semantic link between policy semantics and type
interpretation. We show that if the final state of a copy method can be given a type 
of the form $\Phi(\tau)$ then this is a secure method wrt. the policy $\tau$.

\begin{thm}
\label{th:policy-type}
Let $\Phi(\tau)=(n_\tau,\Delta_\tau,\Theta_\tau)$, $\rho\in\Env,
A\in\Power(\Loc)$, and $x\in\Var$. Assume that, for all $y\in\Var$ such that $y$ is
distinct from $x$, $A$ is not reachable from $\rho(y)$ in a
given heap $h$, \emph{i.e.} $ \Reach{h}{\rho(y)}\cap A = \emptyset$.  If there
exists a state of the form $\st{\rho',h,A}$, a return variable $\res$ and a
local variable type $\Gamma'$ such that $\rho'(\res)=\rho(x)$,
$\Gamma'(\res)=n_\tau$ and
$\InterpM{\rho'}{h}{A}{\Gamma'}{\Delta_\tau}{\Theta_\tau}$, then $\rho,h,x
\models \tau$ holds.
\end{thm}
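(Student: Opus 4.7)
The plan is to unfold the definition of $\rho,h,x \models \tau$: fix paths $\pi,\pi'$ and locations $l,l'$ with $\proot{\pi}=x$, $y := \proot{\pi'} \neq x$, $\evalexpr{\rho}{h}{\pi}{l}$, $\evalexpr{\rho}{h}{\pi'}{l'}$ and $\vdash \pi:\tau$, and then prove $l \neq l'$. I would establish this in the sharpest way possible, by showing $l \in A$ while $l' \notin A$; the reachability hypothesis on $\rho(y)$ and the node structure of $\Phi(\tau)$ are precisely what is needed to separate the two sides.

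For $l \in A$: from $\vdash \pi:\tau$ I read off $\pi = x.f_1.\ldots.f_k$ with $(X_1,f_1)\in\tau$ and $(X_{i+1},f_{i+1})\in \Pi_p(X_i)$ for $1 \le i < k$. Introduce the twin path $\pi^\star = \res.f_1.\ldots.f_k$. Since $\rho'(\res)=\rho(x)$, a short induction on $k$ gives $\evalexpr{\rho'}{h}{\pi^\star}{l}$. On the abstract side, $\Gamma'(\res)=n_\tau$ together with the explicit construction of $\Phi(\tau)$---which places the edge $(n_\tau,f_1)\mapsto n'_{X_1}$ on top of the cyclic $\Delta_p$ fragment containing every $(n'_{X_i},f_{i+1})\mapsto n'_{X_{i+1}}$---yields $\tevalexpr{\Gamma'}{\Delta_\tau}{\pi^\star}{n'_{X_k}}$ by the same induction. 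Feeding these two evaluations into the main clause of $\InterpM{\rho'}{h}{A}{\Gamma'}{\Delta_\tau}{\Theta_\tau}$ produces $\Interpret{\rho'}{h}{A}{\Gamma'}{\Delta_\tau}{l}{n'_{X_k}}$. Because $l$ is a genuine location and not $\vnull$, the only auxiliary rule whose conclusion matches is the one for node base-types, whose first premise delivers $l \in A$.

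For $l' \notin A$: since $y \neq x$, the hypothesis of the theorem gives $\Reach{h}{\rho(y)} \cap A = \emptyset$. A direct induction on $\pi'$ (base case $\pi' = y$ via the empty field sequence; inductive step by one further field dereference) shows that any value computed by evaluating an access path rooted at $y$ in $(\rho,h)$ lies in $\Reach{h}{\rho(y)}$. Hence $l' \in \Reach{h}{\rho(y)}$, so $l' \notin A$, and combined with $l \in A$ this yields $l \neq l'$.

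The step I expect to be most delicate is the abstract evaluation $\tevalexpr{\Gamma'}{\Delta_\tau}{\pi^\star}{n'_{X_k}}$, because concrete paths are free to traverse arbitrary fields whereas the abstract domain $\Delta_\tau$ only records the deep edges prescribed by $\tau$ and $\Pi_p$. The unfolding step in the definition of $\Phi(\tau)$---prepending a fresh $n_\tau$ to the cyclic $\Delta_p$---is exactly what preserves this correspondence and prevents the abstract path from being absorbed into $\top$ or $\topout$ before reaching the terminal node. Note that the component $\Theta_\tau = \{n_\tau\}$ is not actually used in this proof; its role of enabling strong updates belongs to the inference system of the next section.
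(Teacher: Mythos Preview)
Your proof is correct and follows the same approach as the paper's: transfer the deep path from $x$ to $\res$, use the type interpretation to conclude $l\in A$ via the node case of the auxiliary judgement, and use the reachability hypothesis on the other root variable to exclude $l'$ from $A$. The paper phrases this as a contradiction from $l=l'$ rather than a direct separation $l\in A$, $l'\notin A$, and abstracts the target of $\res.\pi$ to ``some node $n\in\Delta_\tau$'' (which also covers the base case $k=0$ where the endpoint is $n_\tau$ rather than an $n'_{X_k}$), but otherwise the arguments coincide.
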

\begin{proof}

[See Coq proof \texttt{InterpAnnot.sound_annotation_to_type}~\cite{clone-webpage}]

We consider two paths $\pi'$ and $x.\pi$ such that
$\proot{\pi'}\not=x$, $\evalexpr{\rho}{h}{\pi'}{l}$, $\vdash x.\pi:\tau$,
$\evalexpr{\rho}{h}{x.\pi}{l}$ and look for a contradiction. 
Since $\vdash x.\pi:\tau$ and $\Gamma'(\res)=n_\tau$, there exists a
node $n\in\Delta_\tau$ such that 
$\tevalexpr{\Gamma'}{\Delta_\tau}{ \res.\pi}{n}$. Furthermore
$\evalexpr{\rho'}{h}{\res.\pi}{l}$ so we can deduce that $l\in A$.  Thus we
obtain a contradiction with $\evalexpr{\rho}{h}{\pi'}{l}$ because any
path that starts from a variable other than $x$ cannot reach the
elements in $A$.
\end{proof}

\subsection{Sub-typing}

\begin{figure}
\centering\scriptsize
\begin{minipage}[t]{.9965\linewidth}
\bf Value sub-typing judgment
\end{minipage}\\
$
\inferrule{t\in \BaseType}{\bot \leq_\sigma t} \quad
\inferrule{t\in \BaseType \backslash \Node}{t \leq_\sigma \top} \quad
\inferrule{~~}{\topout \leq_\sigma \topout} \quad
\inferrule{n\in\Node}{n \leq_\sigma \sigma(n)} 
$
\begin{minipage}[t]{.9965\linewidth}
\bf Main sub-typing judgment
\end{minipage}\\[-0.9cm] ~
\begin{prooftree}
  \AxiomC{
    \begin{minipage}[t]{0.75\linewidth}
    \renewcommand{\theequation}{$\text{ST}_\arabic{equation}$}
    \begin{align}
      &\sigma\in\dom (\Delta_1) \rightarrow \dom (\Delta_2) + \{\top\}
        \label{eq:st1}\\
      \begin{split}
        &\forall t_1\in\BaseType, \forall \pi\in\AccessPath, \tevalexpr{\Gamma_1}{\Delta_1}{\pi}{t_1} 
         \Rightarrow \exists t_2\in \BaseType, t_1 \leq_\sigma t_2 \wedge \tevalexpr{\Gamma_2}{\Delta_2}{\pi}{t_2} 
      \end{split}
        \label{eq:st2} \\
      &\forall n_2\in\Theta_2, ~ \exists n_1\in\Theta_1,~ \sigma^{-1}(n_2) = \{n_1\} \label{eq:st3}
    \end{align}
    \end{minipage}}
  \UnaryInfC{$(\Gamma_1,\Delta_1,\Theta_1)\sqsubseteq (\Gamma_2,\Delta_2,\Theta_2)$}
\end{prooftree}
\caption{Sub-typing}\label{fig:subtype:rules}
\end{figure}

To manage control flow merge points we rely on a sub-typing relation
$\sqsubseteq$ described in Fig.~\ref{fig:subtype:rules}. 
A sub-type relation ${(\Gamma_1,\Delta_1,\Theta_1)\sqsubseteq
(\Gamma_2,\Delta_2,\Theta_2)}$ holds if and only if \eqref{eq:st1} there
exists a fusion function $\sigma$ from $\dom(\Delta_1)$ to $\dom(\Delta_2)+\{\top\}$.
$\sigma$ is a mapping that merges nodes and edges in $\Delta_1$ such that 
\eqref{eq:st2} every element $t_1$ of $\Delta_1$ accessible from a path $\pi$
is mapped to an element $t_2$ of $\Delta_2$ accessible from the same path,
such that $t_1 \leq_\sigma t_2$. In particular, this means that all successors of
$t_1$ are mapped to successors of $t_2$. Incidentally, because $\top$ acts as
a sink on paths, if $t_1$ is mapped to $\top$, then all its successors are
mapped to $\top$ too.
Finally, when a strong node in $\Delta_1$ maps to a strong node in
$\Delta_2$, this image node cannot be the image of any other node in
$\Delta_1$---in other terms, $\sigma$ is injective on strong nodes \eqref{eq:st3}.

Intuitively, it is possible to go up in the type partial order either by
merging, or by forgetting nodes in the initial graph. 
The following example shows three ordered types and their corresponding fusion functions.
On the left, we forget the node pointed to by $\mathtt{y}$ and hence forget
all of its successors (see~\eqref{eq:st2}).
On the right we fusion two strong nodes to obtain a weak node. 
\begin{center}
\includegraphics[width=.8\textwidth]{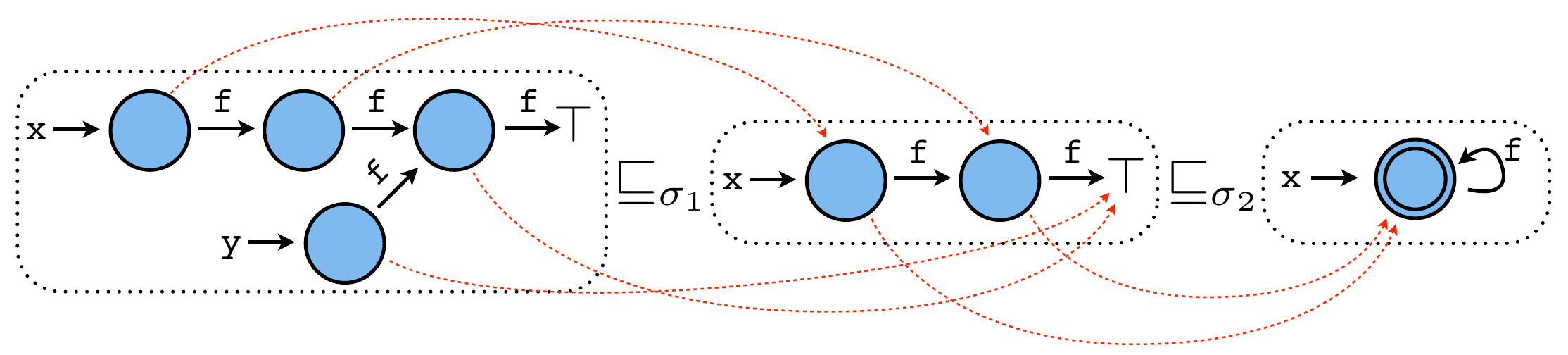}  
\end{center}

\noindent
The logical soundness of this sub-typing relation is formally proved with
two intermediate lemmas. 
The first one states that paths are preserved between subtypes, and that
they evaluate into basetypes that are related by the subtyping function.
\begin{lem}[Pathing in subtypes]
\label{lem:subtype}
Assume $\InterpMMM{\rho}{h}{A}{T_1}$, and let $\sigma$ be the fusion map
defined by the assertion $T_1\sqsubseteq T_2$. For any $\pi,r$ such that
$\evalexpr{\rho}{h}{\pi}{r}$, for any $t_2$ such that
$\tevalexpr{\Gamma_2}{\Delta_2}{\pi}{t_2}$:
\begin{align*}
\exists t_1 \leq_\sigma t_2,\quad
\tevalexpr{\Gamma_1}{\Delta_1}{\pi}{t_1}. 
\end{align*}
\end{lem}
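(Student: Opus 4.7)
The plan is to proceed by induction on the structure of the path $\pi$, with the key observation being that the hypothesis $\InterpMMM{\rho}{h}{A}{T_1}$ is needed precisely to ensure that every prefix of $\pi$ evaluates to a well-defined basetype in $(\Gamma_1,\Delta_1)$. Once evaluation in the source graph is established, applying condition \eqref{eq:st2} of subtyping produces a matching evaluation in $(\Gamma_2,\Delta_2)$, and determinism of type-path evaluation identifies the produced basetype with the given $t_2$.

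In the base case $\pi = x$, the rule $\tevalexpr{\Gamma_1}{\Delta_1}{x}{\Gamma_1(x)}$ applies immediately. Setting $t_1 = \Gamma_1(x)$ and invoking \eqref{eq:st2} yields some $t_2'$ with $t_1 \leq_\sigma t_2'$ and $\tevalexpr{\Gamma_2}{\Delta_2}{x}{t_2'}$; since the only rule producing an evaluation for $x$ forces $t_2' = \Gamma_2(x)$, we conclude $t_2' = t_2$. In the inductive case $\pi = \pi'.f$, inversion on the concrete evaluation rule for $\evalexpr{\rho}{h}{\pi'.f}{r}$ gives $\evalexpr{\rho}{h}{\pi'}{l}$ for some $l \in \dom(h)$ (in particular $l \neq \vnull$), and inversion on the type rule for $\tevalexpr{\Gamma_2}{\Delta_2}{\pi'.f}{t_2}$ yields $\tevalexpr{\Gamma_2}{\Delta_2}{\pi'}{t_2'}$ for some $t_2'$. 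The induction hypothesis applied to $\pi'$ and $t_2'$ then delivers $t_1'$ with $t_1' \leq_\sigma t_2'$ and $\tevalexpr{\Gamma_1}{\Delta_1}{\pi'}{t_1'}$.

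The crucial step is to extend this evaluation one field further. By $\InterpMMM{\rho}{h}{A}{T_1}$ applied to $\pi'$, we get $\Interpret{\rho}{h}{A}{\Gamma_1}{\Delta_1}{l}{t_1'}$, and inspection of the auxiliary interpretation rules shows that no rule assigns basetype $\bot$ to a non-null location, so $t_1' \in \Node \cup \{\top,\topout\}$. In each of these three cases, the matching path-extension rule for $\tevalexpr{\Gamma_1}{\Delta_1}{\pi'.f}{\cdot}$ fires (using the totality of $\Delta_1(n)$ on $\Field$ when $t_1' = n \in \dom(\Delta_1)$), producing a $t_1$ such that $\tevalexpr{\Gamma_1}{\Delta_1}{\pi'.f}{t_1}$. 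A final appeal to \eqref{eq:st2} gives $t_1 \leq_\sigma t_2''$ with $\tevalexpr{\Gamma_2}{\Delta_2}{\pi'.f}{t_2''}$, and determinism of type-path evaluation (only one of the three extension rules can apply given a unique evaluation of $\pi'$) collapses $t_2''$ to $t_2$, completing the induction. The main obstacle is the ruling-out of $t_1' = \bot$ in the step case, which is the only point where the type-interpretation hypothesis is genuinely used; without it, the induction would break because no extension rule exists for $\bot$ as the predecessor basetype.
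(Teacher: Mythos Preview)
Your proof is correct and follows the natural approach: induction on $\pi$, using the type-interpretation hypothesis to rule out $\bot$ at intermediate steps so that a path-extension rule always fires in $(\Gamma_1,\Delta_1)$, then invoking \eqref{eq:st2} and determinism of type-path evaluation to identify the resulting upper type with the given $t_2$. The paper itself gives no detailed argument here---it simply states that the result ``follows directly from the definition of type interpretation and subtyping'' and points to the Coq development---so your write-up is a faithful unpacking of that one-liner rather than a different route.
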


\proof

[See Coq proof \texttt{Misc.Access_Path_Eval_subtyp}~\cite{clone-webpage}]
The proof follows directly from the definition of type interpretation and subtyping.\qed

The second lemma gives a local view on logical soundness of subtyping.
\begin{lem}[Local logical soundness of subtyping]
\label{lem:subtype2}
Assume $(\Gamma_1,\Delta_1,\Theta_1)\sqsubseteq
(\Gamma_2,\Delta_2,\Theta_2)$, and let $v$ be a value
and $t_1,t_2$ some types.
$$
{\InterpretS{\rho}{h}{A}{(\Gamma_1,\Delta_1)}{v}{t_1}}
~~
\text{and}
~~t_1 \sqsubseteq_\sigma t_2~~
\text{implies}
~~
{\InterpretS{\rho}{h}{A}{(\Gamma_2,\Delta_2)}{v}{t_2}}.
$$
\end{lem}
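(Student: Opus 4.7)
The plan is to proceed by case analysis on the derivation of $t_1\leq_\sigma t_2$, the four cases being given by the four rules of the value sub-typing judgment. Several of these cases collapse quickly, and the only substantive case is the one where $t_1$ is a node.

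First I would dispatch the easy cases. If $t_1=\bot$, then from $\InterpretS{\rho}{h}{A}{(\Gamma_1,\Delta_1)}{v}{\bot}$ and the first auxiliary rule (combined with the fact that no node rule can apply to $\bot$) one reads off that $v=\vnull$, and the null rule immediately gives $\InterpretS{\rho}{h}{A}{(\Gamma_2,\Delta_2)}{v}{t_2}$ for any $t_2$. If $t_2=\top$ the second rule closes the goal with no further work. If $t_1=t_2=\topout$ the interpretation condition $\Reach{h}{v}\cap A=\emptyset$ is literally identical on both sides, so it is carried over verbatim.

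The interesting case is $t_1=n\in\Node$ with $t_2=\sigma(n)$. If $\sigma(n)=\top$ we are again done by the $\top$ rule. Otherwise $\sigma(n)=n'\in\Node$, and by \eqref{eq:st1} we already have $n'\in\dom(\Delta_2)$. From the auxiliary interpretation we know $v=l$ with $l\in A$, so only the path-preservation condition remains: for every $\pi$ with $\evalexpr{\rho}{h}{\pi}{l}$ we must show $\tevalexpr{\Gamma_2}{\Delta_2}{\pi}{n'}$. Fix such a $\pi$. By the hypothesis $\InterpretS{\rho}{h}{A}{(\Gamma_1,\Delta_1)}{l}{n}$ we get $\tevalexpr{\Gamma_1}{\Delta_1}{\pi}{n}$. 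Condition \eqref{eq:st2} of the subtyping then yields some $t'$ with $n\leq_\sigma t'$ and $\tevalexpr{\Gamma_2}{\Delta_2}{\pi}{t'}$; but inspection of the value sub-typing rules shows that the only rule whose left-hand side is a node $n$ forces $t'=\sigma(n)=n'$, closing the case.

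The main obstacle I anticipate is nothing deep but simply making sure the sink-node behaviour of $\top$ is handled uniformly and that one does not accidentally get a spurious derivation $n\leq_\sigma\top$ through the second rule (which requires $t_1\in\BaseType\setminus\Node$ and therefore does not apply); once one has ruled this out, condition \eqref{eq:st2} does exactly the work needed to transport path evaluations from $\Delta_1$ to $\Delta_2$. Note that strong-node uniqueness \eqref{eq:st3} plays no role at this local level—it will only be invoked when lifting this lemma to the main interpretation judgment in order to preserve the $\Theta$-component.
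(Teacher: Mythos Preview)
Your proposal is correct and follows essentially the same approach as the paper: a case analysis on the rules for $t_1\leq_\sigma t_2$, with the only substantive case being $t_1=n\in\Node$, $t_2=\sigma(n)\in\dom(\Delta_2)$, dispatched by invoking \eqref{eq:st2} and observing that the only sub-typing rule with a node on the left forces $t'=\sigma(n)$. One minor omission: in the node case you write ``from the auxiliary interpretation we know $v=l$ with $l\in A$'', but the null rule also allows $v=\vnull$ with type $n$; this sub-case is of course trivial (the null rule immediately gives the conclusion for any $t_2$), and the paper handles it the same implicit way by saying the only non-trivial case has $v=l\in\Loc$.
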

\proof

[See Coq proof \texttt{Misc.Interp_monotone}~\cite{clone-webpage}]
We make a case for each rules of $t_1 \sqsubseteq_\sigma t_2$. The only non-trivial case
is for $v=l\in\Loc$, $t_1 = n \in\Node$ and $t_2 = \sigma(n) \in \dom(\Delta_2)$.
In this case we have to prove 
$\forall \pi,~ \evalexpr{\rho}{h}{\pi}{l} ~\Rightarrow
\tevalexpr{\Gamma_2}{\Delta_2}{\pi}{\sigma(n)} $.
Given such a path $\pi$, the hypothesis 
${\InterpretS{\rho}{h}{A}{(\Gamma_1,\Delta_1)}{l}{n}}$ gives
us $\tevalexpr{\Gamma_1}{\Delta_1}{\pi}{n}$. Then subtyping hypothesis $ST_2$  
gives us a base type $t'_2$ such that 
$n \sqsubseteq_\sigma t'_2$ and $\tevalexpr{\Gamma_2}{\Delta_2}{\pi}{t'_2}$.
But necessarily $t_2=t'_2$ so we are done. \qed

The logical soundness of this sub-typing relation is then formally proved with the
following theorem.

\begin{thm} For any type $T_1,T_2\in\Type$ and $\st{\rho,h,A}\in \State$,
${T_1\sqsubseteq T_2}$
and
$\InterpMMM{\rho}{h}{A}{T_1}$
imply  
$\InterpMMM{\rho}{h}{A}{T_2}$.
\end{thm}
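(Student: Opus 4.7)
The plan is to unfold the definition of the main type interpretation judgement for $T_2 = (\Gamma_2, \Delta_2, \Theta_2)$ and establish its two obligations in turn, using Lemmas~\ref{lem:subtype} and~\ref{lem:subtype2} as the bridge between the two sides of the subtyping relation.

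First I would discharge the auxiliary-interpretation clause. Suppose $\pi$, $v$ and $t_2$ satisfy $\tevalexpr{\Gamma_2}{\Delta_2}{\pi}{t_2}$ and $\evalexpr{\rho}{h}{\pi}{v}$. Lemma~\ref{lem:subtype} applied to the hypothesis $\InterpMMM{\rho}{h}{A}{T_1}$ yields a base type $t_1$ with $\tevalexpr{\Gamma_1}{\Delta_1}{\pi}{t_1}$ and $t_1 \leq_\sigma t_2$. The interpretation of $T_1$ then provides $\InterpretS{\rho}{h}{A}{(\Gamma_1,\Delta_1)}{v}{t_1}$, and Lemma~\ref{lem:subtype2} transports this into $\InterpretS{\rho}{h}{A}{(\Gamma_2,\Delta_2)}{v}{t_2}$, as required.

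Next I would handle the strong-node clause. Fix $n_2 \in \Theta_2$ together with paths $\pi, \pi'$ and values $l, l'$ with $\tevalexpr{\Gamma_2}{\Delta_2}{\pi}{n_2}$, $\tevalexpr{\Gamma_2}{\Delta_2}{\pi'}{n_2}$, $\evalexpr{\rho}{h}{\pi}{l}$ and $\evalexpr{\rho}{h}{\pi'}{l'}$. Condition~\eqref{eq:st3} delivers a unique $n_1 \in \Theta_1$ with $\sigma^{-1}(n_2) = \{n_1\}$. Lemma~\ref{lem:subtype} supplies base types $t_1, t'_1$ with $\tevalexpr{\Gamma_1}{\Delta_1}{\pi}{t_1}$, $\tevalexpr{\Gamma_1}{\Delta_1}{\pi'}{t'_1}$, and $t_1, t'_1 \leq_\sigma n_2$. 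Inspecting the four rules defining $\leq_\sigma$, the only options are $\bot$ or a node mapped to $n_2$ by $\sigma$; the latter must be $n_1$. I would then rule out $\bot$: since $l$ (and symmetrically $l'$) is a genuine location, the auxiliary interpretation of $T_1$ forbids the type $\bot$ for a non-null value. Hence $t_1 = t'_1 = n_1 \in \Theta_1$, and the strong-node clause of $\InterpMMM{\rho}{h}{A}{T_1}$ concludes $l = l'$.

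The main obstacle I expect is precisely that last step, eliminating the possibility that $t_1 = \bot$ despite the concrete path ending at a real location; the solution is to observe that the rules defining $\Interpret{\cdot}{\cdot}{\cdot}{\cdot}{\cdot}{\cdot}{\cdot}$ only accept $\bot$ on the null value, so $\bot$ cannot coexist with $l \in \Loc$ in the interpretation of $T_1$. Everything else is a fairly mechanical unfolding of the two lemmas, and no induction on paths or commands is required since both lemmas already encapsulate that reasoning.
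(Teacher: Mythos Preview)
Your proposal is correct and follows essentially the same approach as the paper: unfold the main interpretation for $T_2$, discharge the first clause via Lemma~\ref{lem:subtype} followed by Lemma~\ref{lem:subtype2}, and discharge the strong-node clause via Lemma~\ref{lem:subtype} combined with \eqref{eq:st3}. In fact you are slightly more careful than the paper's prose, which jumps directly from $t_1,t'_1 \leq_\sigma n_2$ and \eqref{eq:st3} to $t_1 = t'_1 = n_1$ without explicitly excluding the possibility $t_1 = \bot$; your observation that the auxiliary interpretation of $T_1$ cannot relate a genuine location to $\bot$ is exactly the missing justification.
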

\begin{proof}
  \begin{CA}
  See~\cite{JensenKP10} and the companion Coq development.
  \end{CA}
  \begin{TR}[See Coq proof \texttt{InterpMonotony.Interpretation_monotone}~\cite{clone-webpage}]

We suppose $T_1$ is of the form $(\Gamma_1,\Delta_1,\Theta_1)$
and $T_2$ of the form $(\Gamma_2,\Delta_2,\Theta_2)$.
From the definition of the main type interpretation (Fig~\ref{fig:type:interpret}), we reduce
the proof to proving the following two subgoals. 

First, given a path $\pi$, a base type $t_2$ and a value $v$ such that
$\evalexpr{\Gamma_2}{\Delta_2}{\pi}{t_2}$
and
$\evalexpr{\rho}{h}{\pi}{v}$, we must prove that
${\InterpretS{\rho}{h}{A}{(\Gamma_2,\Delta_2)}{v}{t_2}}$ holds.
Since $(\Gamma_1,\Delta_1,\Theta_1)\sqsubseteq (\Gamma_2,\Delta_2,\Theta_2)$, 
there exists, by Lemma~\ref{lem:subtype}, a base type $t_1$ such that
$\tevalexpr{\Gamma_1}{\Delta_1}{\pi}{t_1}$ and 
$t_1 \sqsubseteq_\sigma t_2$. Since $\InterpMMM{\rho}{h}{A}{T_1}$ holds we
can argue that ${\InterpretS{\rho}{h}{A}{(\Gamma_1,\Delta_1)}{v}{t_1}}$ holds
too and conclude with Lemma~\ref{lem:subtype2}.

Second, given a strong node $n_2\in\Theta_2$, two paths $\pi$ and $\pi'$ and two locations
$l$ and $l'$ such that 
$\tevalexpr{\Gamma_2}{\Delta_2}{\pi}{n_2}$,
$\tevalexpr{\Gamma_2}{\Delta_2}{\pi'}{n_2}$,
$\evalexpr{\rho}{h}{\pi}{l}$ and
$\evalexpr{\rho}{h}{\pi}{l'}$, we must prove that $l=l'$.
As previously, there exists by Lemma~\ref{lem:subtype}, $t_1$ and $t'_1$ such that 
$\tevalexpr{\Gamma_1}{\Delta_1}{\pi}{t_1}$, $t_1 \sqsubseteq_\sigma n_2$,
$\tevalexpr{\Gamma_1}{\Delta_1}{\pi'}{t'_1}$ and 
$t'_1 \sqsubseteq_\sigma n_2$. But then, by $(ST_3$), there exists some strong node
$n_1$ such that $t_1 = t'_1 = n_1$ and we can obtain the desired
equality from the hypothesis 
$\InterpMMM{\rho}{h}{A}{\Gamma_1,\Delta_1,\Theta_1}$.
  \end{TR}
\end{proof}

\subsection{Type and Effect System}
\label{sec:typesystem}

The type system verifies, statically and class by class, that a program
respects the copy policy annotations relative to a declared copy policy. 
%
The core of the type system concerns the typability of commands, which is
defined through the following judgment:
$$
\Gamma,\Delta,\Theta \vdash c : \Gamma',\Delta',\Theta'.
$$
The judgment is valid if the execution of command $c$ in a state satisfying
type $(\Gamma,\Delta,\Theta)$ will result in a state satisfying
$(\Gamma',\Delta',\Theta')$ or will diverge. 

\begin{figure}
\centering\scriptsize
\begin{minipage}[t]{.9965\linewidth}
\bf Command typing rules
\end{minipage}\\
$
\begin{array}{c}
\inferrule{~~}
  {\tstb{\Gamma,\Delta,\Theta} \vdash \HAssign{x}{y} : \tstr{\Gamma[\var{x}\mapsto\Gamma(\var{y})],\Delta,\Theta}}
\quad
\inferrule{n~\text{fresh in}~\Delta}
  {\tstb{\Gamma,\Delta,\Theta} \vdash \HNew{x}{\var{cn}} : \tstr{\Gamma[\var{x}\mapsto n],\Delta[(n,\_)\mapsto\bot],\Theta\cup\{n\}}}
\\[3ex]
\inferrule{\Gamma(\var{y})=t\qquad t\in\{\topout,\top\}}
  {\tst{\Gamma,\Delta,\Theta} \vdash \HAssign{x}{y.f} : \tstr{\Gamma[\var{x}\mapsto t],\Delta,\Theta}}
\quad
\inferrule{\Gamma(\var{y})=n}
  {\tstb{\Gamma,\Delta,\Theta} \vdash \HAssign{x}{y.f} : \tstr{\Gamma[\var{x}\mapsto \Delta[n,f]],\Delta,\Theta}}
\\[3ex]
\inferrule{\Gamma(\var{x})=n\quad n\in\Theta}
  {\tstb{\Gamma,\Delta,\Theta} \vdash \HAssign{x.f}{y} : \tstr{\Gamma,\Delta[n,f \mapsto \Gamma(y)],\Theta}}
\\[3ex]
\inferrule{\Gamma(\var{x})=n\quad n\not\in\Theta
 \quad (\Gamma,\Delta[n,f \mapsto \Gamma(y)],\Theta) \sqsubseteq (\Gamma',\Delta',\Theta')
 \quad (\Gamma,\Delta,\Theta) \sqsubseteq (\Gamma',\Delta',\Theta')
}
  {\tstb{\Gamma,\Delta,\Theta} \vdash \HAssign{x.f}{y} : \tstr{\Gamma',\Delta',\Theta'}}
\\[3ex]
\inferrule{
  \begin{array}[c]{c}
\tstb{\Gamma,\Delta,\Theta} \vdash c_1 : \tstr{\Gamma_1,\Delta_1,\Theta_1} \quad 
(\Gamma_1,\Delta_1,\Theta_1)\sqsubseteq (\Gamma',\Delta',\Theta') \cr
\tstb{\Gamma,\Delta,\Theta} \vdash c_2 : \tstr{\Gamma_2,\Delta_2,\Theta_2} \quad
(\Gamma_2,\Delta_2,\Theta_2)\sqsubseteq (\Gamma',\Delta',\Theta')    
\end{array}}
  {\tstb{\Gamma,\Delta,\Theta} \vdash \HIf{c_1}{c_2} : \tstr{\Gamma',\Delta',\Theta'}}
\\[3ex]
\inferrule{\tstb{\Gamma',\Delta',\Theta'} \vdash c : \tstr{\Gamma_0,\Delta_0,\Theta_0} \quad (\Gamma,\Delta,\Theta)\sqsubseteq (\Gamma',\Delta',\Theta') \quad (\Gamma_0,\Delta_0,\Theta_0)\sqsubseteq (\Gamma',\Delta',\Theta')}
  {\tstb{\Gamma,\Delta,\Theta} \vdash \HWhile{c} : \tstr{\Gamma',\Delta',\Theta'}}
\\[3ex]
\inferrule{\tstb{\Gamma,\Delta,\Theta} \vdash c_1 : \tstr{\Gamma_1,\Delta_1,\Theta_1} \quad \tstb{\Gamma_1,\Delta_1,\Theta_1} \vdash c_2 : \tstr{\Gamma_2,\Delta_2,\Theta_2}}
  {\tstb{\Gamma,\Delta,\Theta} \vdash c_1; c_2 : \tstr{\Gamma_2,\Delta_2,\Theta_2}}
\\[4ex]
\inferrule{
\Pi_p(X) = \tau \quad \Phi(\tau) = (n_\tau,\Delta_\tau) \quad
\nodes(\Delta)\cap \nodes(\Delta_\tau) = \emptyset \quad
(\Gamma(y) = \bot) \vee (\Gamma(y)=\topout)}
  {\tstb{\Gamma,\Delta,\Theta} \vdash \HCall{x}{\var{cn}:X}{y} : \tstr{\Gamma[x\mapsto n_\tau],\Delta\cup\Delta_\tau,\Theta\cup\{n_\tau\}}}
\\[4ex]
\inferrule{
  \begin{array}[c]{c}
\Pi_p(X) = \tau \quad \Phi(\tau) = (n_\tau,\Delta_\tau) \quad
\nodes(\Delta)\cap \nodes(\Delta_\tau) = \emptyset \cr
\mathit{KillSucc}_{n}(\Gamma,\Delta,\Theta) = (\Gamma',\Delta',\Theta')\quad
\Gamma(y) = n
\end{array}}
  {\tstb{\Gamma,\Delta,\Theta} \vdash \HCall{x}{\var{cn}:X}{y} : \tstr{\Gamma'[x\mapsto n_\tau],\Delta'\cup\Delta_\tau,\Theta'\cup\{n_\tau\}}}
\\[4ex]
\inferrule{(\Gamma(y) = \bot) \vee (\Gamma(y)=\topout)}
  {\tstb{\Gamma,\Delta,\Theta} \vdash \UnkownCall{x}{y} : \tstr{\Gamma[x\mapsto \topout],\Delta,\Theta}}
\quad
\inferrule{
\mathit{KillSucc}_{n}(\Gamma,\Delta,\Theta) = (\Gamma',\Delta',\Theta')\quad
\Gamma(y) = n}
  {\tstb{\Gamma,\Delta,\Theta} \vdash \UnkownCall{x}{y} : \tstr{\Gamma'[x\mapsto \topout],\Delta',\Theta'}}
\\
\inferrule{~~}
  {\tstb{\Gamma,\Delta,\Theta} \vdash \HReturn{x} : \tstr{\Gamma[\var{ret}\mapsto\Gamma(x)],\Delta,\Theta}}
\end{array}
$\\
\begin{minipage}[t]{.9965\linewidth}
\bf Method typing rule
\end{minipage}\\
$
  \inferrule{
    \begin{array}[c]{c}
 [~\cdot\mapsto\bot][x\mapsto \topout], \emptyset, \emptyset \vdash c : \Gamma,\Delta,\Theta \cr
   \Pi_p(X) = \tau \quad \Phi(\tau) = (n_\tau,\Delta_\tau) \quad
     (\Gamma, \Delta, \Theta) \sqsubseteq (\Gamma',\Delta_\tau,\{n_\tau\})
\quad \Gamma'(\var{ret}) =  n_\tau 
   \end{array}}
  {  \vdash \Copy(X)~\HAssign{m(x)}{}c}
$\\
\begin{minipage}[t]{.9965\linewidth}
\bf Program typing rule
\end{minipage}
$
  \inferrule{\forall \var{cl}\in p, ~ \forall\md\in\var{cl},~~  \vdash  \md }
  {\vdash p}
$
\vspace{10 pt}

\begin{minipage}[t]{.9965\linewidth}\footnotesize
\textbf{Notations:} We write
$\Delta[(n,\_)\mapsto\bot]$ for the update of $\Delta$ with a new node $n$ for
which all successors are equal to $\bot$.
We write
$\mathit{KillSucc}_{n}$ for the function that removes all
nodes reachable from $n$ (with at least one step) and sets all its successors
equal to $\top$.
\end{minipage}
\caption{Type System}\label{fig:type:rules}
\end{figure}

Typing rules are given in Fig.~\ref{fig:type:rules}. We explain a selection of
rules below.  The rules for $\HIf{}{}$, $\HWhile{}$, sequential
composition and most of the assignment rules are standard for flow-sensitive
type systems. The rule for $\HNew{x}{}$ ``allocates'' a fresh node $n$ with no
edges in the graph $\Delta$ and let $\Gamma(x)$ references this node. 

There are two rules concerning the instruction $\HAssign{x.f}{y}$ for
assigning values to fields. Assume that the variable $x$ is
represented by node $n$ (\emph{i.e.}, $\Gamma(x) = n$).  In the first
case (strong update), the node is strong and we update destructively (or add) the edge
in the graph $\Delta$ from node $n$ labeled $f$ to point to the value
of $\Gamma(y)$. The previous edge (if any) is lost because 
$n\in\Theta$ ensures that all concrete cells represented by $n$ are
affected by this field update. In the second case (weak update), the node is
weak. In order to be conservative, we must merge the previous shape
with its updated version since the content of $x.f$ is updated but an
other cell mays exist and be represented by $n$ without being
affected by this field update.

As for method calls $m(\var{y})$, two cases arise depending on whether the
method $m$ is copy-annotated or not. In each case, we also reason
differently depending on the type of the argument $\var{y}$.  If a
method is associated with a copy policy $\tau$, we compute the
corresponding type $(n_\tau,\Delta_\tau)$ and type the result of
$\HCall{x}{\var{cn}:X}{y}$ starting in $(\Gamma,\Delta,\Theta)$ with
the result type consisting of the environment $\Gamma$ where $x$ now
points to $n_\tau$, the heap described by the disjoint union of
$\Delta$ and $\Delta_\tau$. In addition, the set of strong nodes is
augmented with $n_\tau$ since a copy method is guaranteed to return a
freshly allocated node. The method call may potentially modify the
memory referenced by its argument $\var{y}$, but the analysis has no
means of tracking this. Accordingly, if  $\var{y}$ is a locally allocated memory
location of type $n$, we must remove all nodes reachable from $n$, and
set all the successors of $n$ to $\top$.  The other case to consider
is when the method is
not associated with a copy policy (written $ \UnkownCall{x}{y}$).  If the parameter $\var{y}$ is null
or not locally allocated, then there is no way for the method call to
access locally allocated memory and we know that $\var{x}$ points to a
non-locally allocated object. Otherwise, $\var{y}$ is a locally allocated
memory location of type $n$, and we must kill all its successors in
the abstract heap.

Finally, the rule for method definition verifies the coherence of the
result of analysing the body of a method $m$
with its copy annotation $\Phi(\tau)$. Type checking extends trivially to
all methods of the program. 

Note the absence of a rule for typing an instruction $\HAssign{x.f}{y}$ when
$\Gamma(\var{x})=\top$ or $\topout$. In a first attempt, a sound rule would
have been
\begin{small}
\begin{equation*}
\inferrule
  {\Gamma(\var{x})=\top}
  {\Gamma,\Delta \vdash \HAssign{x.f}{y} : \Gamma,\Delta[~\cdot,f \mapsto \top]} 
\end{equation*}
\end{small}%
Because $\var{x}$ may point to any part of the local shape we must
conservatively forget all knowledge about the field
$\var{f}$. Moreover we should also warn the caller of the current
method that a field $\var{f}$ of his own local shape may have been
updated.  We choose to simply reject copy methods with such patterns. 
Such a policy is strong but has the merit to be easily
understandable to the programmer: a copy method should only modify
locally allocated objects to be typable in our type system. For
similar reasons, we reject methods that attempt to make a method call on
a reference of type $\top$ because we can not track side effect
modifications of such methods without losing the modularity of the verification mechanism.

\begin{figure}
  \centering
  \begin{minipage}{.5\linewidth}
    \lstset{basicstyle=\ttfamily\scriptsize,numbers=left,numberstyle=\scriptsize,numbersep=2pt}
    \begin{lstlisting}
class LinkedList<E> implements Cloneable {
   private @Deep Entry<E> header;
 
   private static class Entry<E> {
     @Shallow E element;
     @Deep Entry<E> next;
     @Deep Entry<E> previous;
  }

  @Copy public Object clone() {
    LinkedList<E> clone = null;
    clone = (LinkedList<E>) super.clone();
    clone.header = new Entry<E>;
    clone.header.next = clone.header;
    clone.header.previous = clone.header;
    Entry<E> e = this.header.next;
    while (e != this.header) {
      Entry<E> n = new Entry<E>;
      n.element = e.element;
      n.next = clone.header;
      n.previous = clone.header.previous;
      n.previous.next = n;
      n.next.previous = n;
      e = e.next;
    }
    return clone;
  } 
}
\end{lstlisting}  
  \end{minipage}
  \begin{minipage}{.49\linewidth}
    \includegraphics[width=\linewidth]{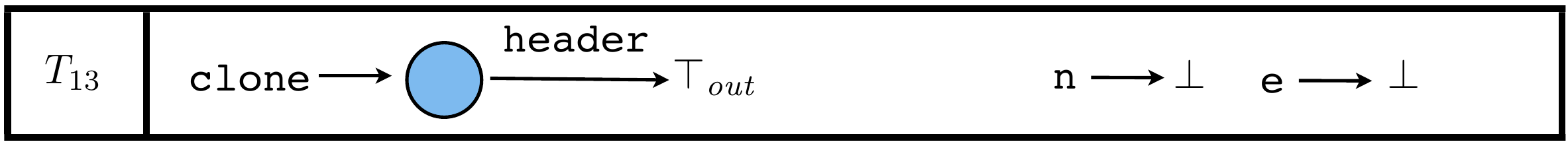}
    \includegraphics[width=\linewidth]{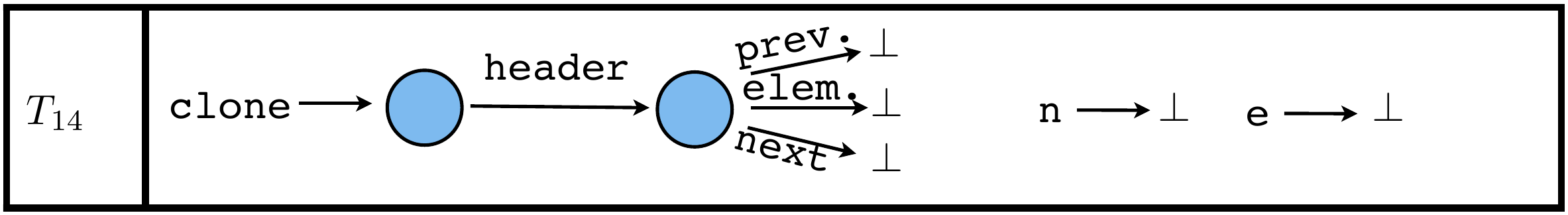}
    \includegraphics[width=\linewidth]{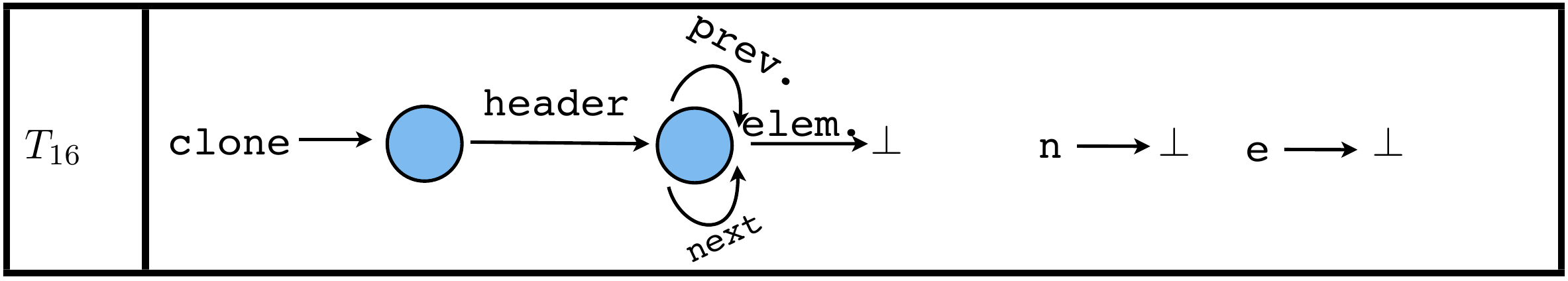}
    \includegraphics[width=\linewidth]{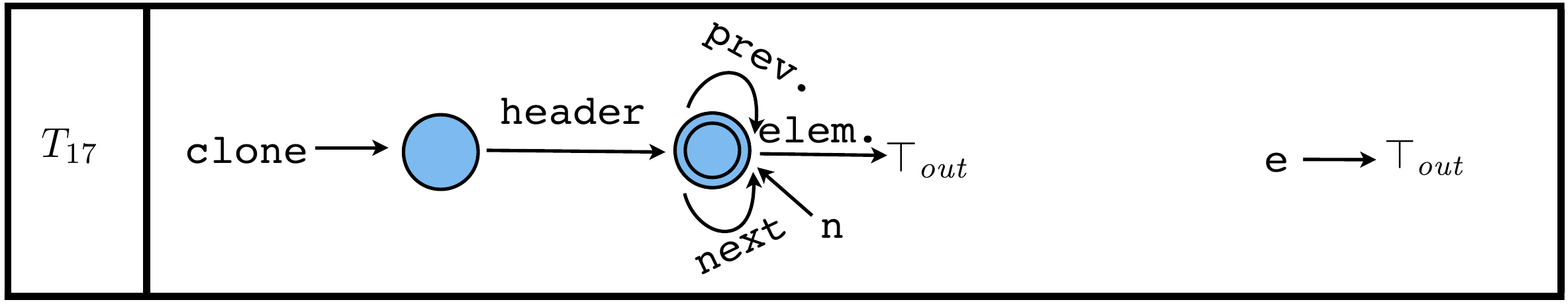}
    \includegraphics[width=\linewidth]{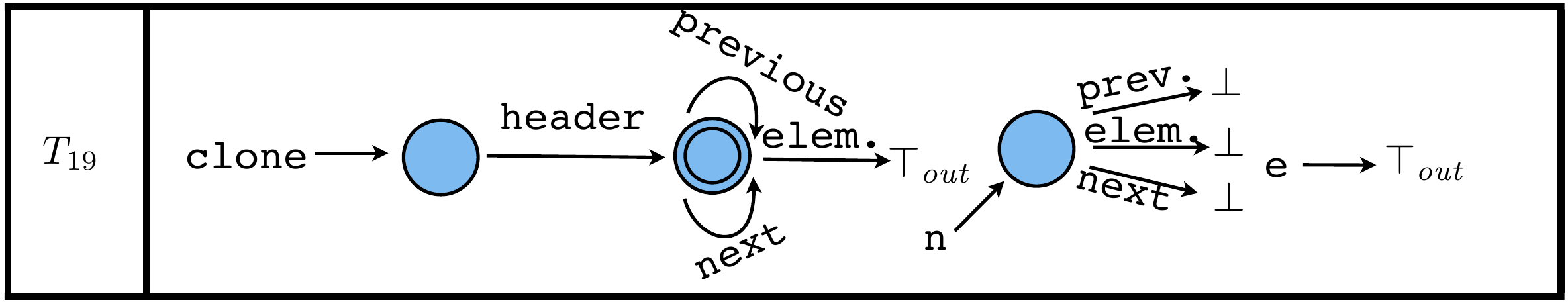}
    \includegraphics[width=\linewidth]{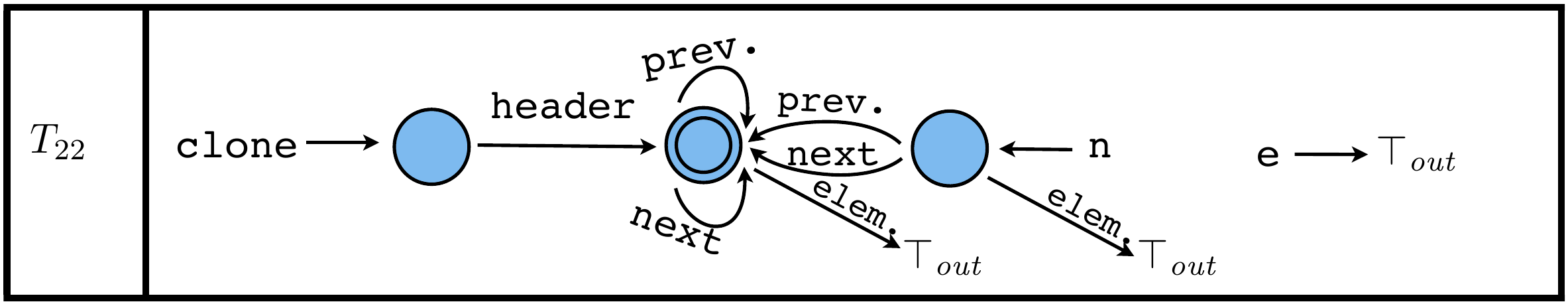}
    \includegraphics[width=\linewidth]{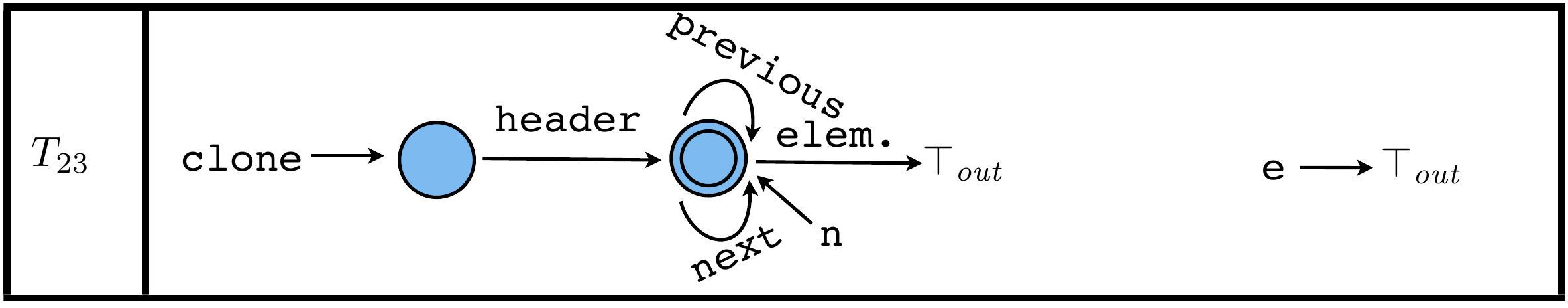}
    \includegraphics[width=\linewidth]{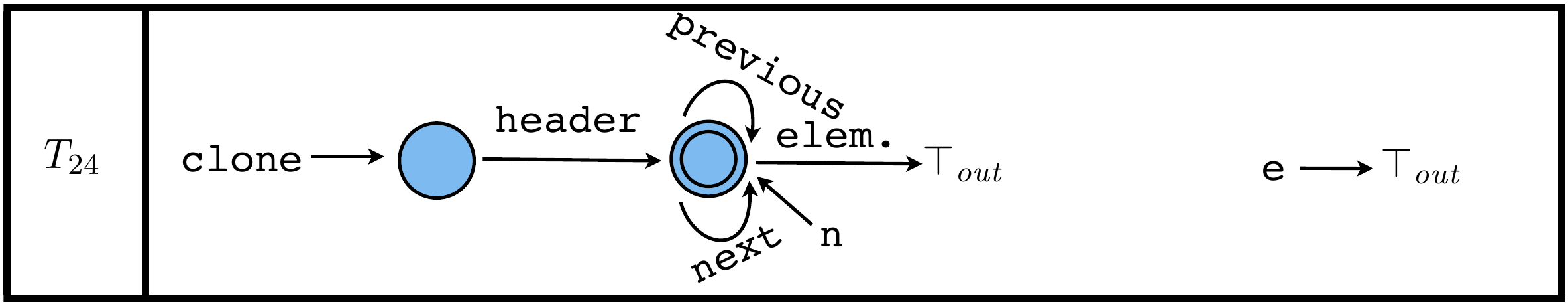}
\end{minipage}
  \caption{Intermediate Types for \ttt{java.util.LinkedList.clone()}}
  \label{fig:linkedlist}
\end{figure}

\begin{exa}[Case Study: \ttt{java.util.LinkedList}]
In this example, we demonstrate the use of the type system on a challenging
example taken from the standard Java library.  
The companion
web page provides a more detailed explanation of this example~\cite{clone-webpage}.  
The class
\ttt{java.util.LinkedList} provides an implementation of doubly-linked lists. 
A list is composed of a first cell that points through a field
\ttt{header} to a collection of doubly-linked cells.  Each cell has a
link to the previous and the next cell and also to an element of
(parameterized) type \ttt{E}.  The clone method provided in
\ttt{java.lang} library implements a ``semi-shallow'' copy where only cells of
type \ttt{E} may be shared between the source and the result of the
copy.  
%
In Fig.~\ref{fig:linkedlist} we present a modified version of the
original source code: we have inlined all method calls, except those
to copy methods and removed exception handling that leads to an
abnormal return from the method\footnote{Inlining is automatically
  performed by our tool and exception control flow graph is managed as
  standard control flow but omitted here for simplicity.}.  Note that
there was one method call in the original code that was virtual and
hence prevented inlining.  It has been necessary to make a private
version of this method. This makes sense because such a virtual call
actually constitutes a potentially dangerous hook in a cloning method,
as a re-defined implementation could be called when cloning a subclass
of \ttt{Linkedlist}.


In Fig.~\ref{fig:linkedlist} we provide several intermediate types
that are necessary for typing this method ($T_i$ is the type before
executing the instruction at line $i$). The call
to \ttt{super.clone} at line 12 creates a shallow copy of the header
cell of the list, which contains a reference to the original list. The
original list is thus shared, a fact which is 
represented by an edge to $\topout$ in type $T_{13}$.


The copy method then progressively constructs a deep copy of the list,
by allocating a new node (see type $T_{14}$) and setting all paths
\ttt{clone.header}, \ttt{clone.header.next} and
\ttt{clone.header.previous} to point to this node. This is reflected
in the analysis by a \emph{strong update} to the node representing
path \ttt{clone.header} to obtain the type $T_{16}$ that precisely
models the alias between the three paths \ttt{clone.header},
\ttt{clone.header.next} and \ttt{clone.header.previous} (the Java
syntax used here hides the temporary variable that is introduced to be
assigned the value of \ttt{clone.header} and then be updated).
 
This type $T_{17}$ is the loop invariant necessary for type checking
the whole loop. It is a super-type of $T_{16}$ (updated with
$e\mapsto\topout$) and of $T_{24}$ which represents the memory at the
end of the loop body.  The body of the loop allocates a new list cell
(pointed to by variable \ttt{n}) (see type $T_{19}$) and inserts it
into the doubly-linked list. The assignment in line 22 updates the
weak node pointed to by path \ttt{n.previous} and hence merges the
strong node pointed to by \ttt{n} with the weak node pointed to by
\ttt{clone.header}, representing the spine of the list. The
assignment at line 23 does not modify the type $T_{23}$.



Notice that the types used in this example show that a flow-insensitive
version of the analysis could not have found this information. A
flow-insensitive analysis would force the merge of the types at all program
points, and the call to \ttt{super.clone} return a type that is less precise
than the types needed for the analysis of the rest of the method.






\end{exa}


\subsection{Type soundness}
\label{sec:soudness}

The rest of this section is devoted to the soundness proof of the type system.
We consider the types
$T=(\Gamma,\Delta,\Theta),
T_1=(\Gamma_1,\Delta_1,\Theta_1),T_2=(\Gamma_2,\Delta_2,\Theta_2)\in\Type$, a
program $c\in\prog$, as well as the configurations $\st{\rho,h,A},
\st{\rho_1,h_1,A_1}, \st{\rho_2,h_2,A_2}\in\State$.


Assignments that modify the heap space can also modify the reachability
properties of locations. This following lemma indicates how to reconstruct a path to $l_f$ in the \emph{initial} heap from a path
$\pi'$ to a given location $l_f$ in the \emph{assigned} heap. 
\begin{lem}[Path decomposition on assigned states]
\label{lem:pathdec}
Assume given a path $\pi$, field $f$, and locations $l,l'$ such that
$\evalexpr{\rho}{h}{\pi}{l'}$. 
and assume that for any path $\pi'$ and a location $l_f$ we have
$\evalexpr{\rho}{h[l,f\mapsto l']}{\pi'}{l_f}$. Then, either
\begin{equation*}
\evalexpr{\rho}{h}{\pi'}{l_f}
\end{equation*}
or $\exists \pi_z, \pi_1, \ldots, \pi_n, \pi_f$ such that:
\begin{equation*}
\left\{
\begin{aligned}
  &\pi'=\pi_z.f.\pi_1.f.\ldots.f.\pi_n.f.\pi_f \\
  &\evalexpr{\rho}{h}{\pi_z}{l}\\
  &\evalexpr{\rho}{h}{\pi.\pi_f}{l_f}\\
  &\forall \pi_1,\ldots,\pi_n, \evalexpr{\rho}{h}{\pi.\pi_i}{l}
\end{aligned}
\right.
\end{equation*}
The second case of the conclusion of this lemma is illustrated in
Fig.~\ref{fig:pathdec}.
\end{lem}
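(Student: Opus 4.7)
My plan is to proceed by induction on the structure of $\pi'$. The key observation is that evaluation in $h' = h[(l,f)\mapsto l']$ differs from evaluation in $h$ only when the traversal crosses the modified edge---the $f$-step from the location $l$, which in $h'$ jumps directly to $l'$, whereas in $h$ the only way to reach $l'$ from that step is to follow the path $\pi$ instead. Each such ``shortcut'' taken during the evaluation of $\pi'$ will be responsible for one of the $f$-separated segments $\pi_i$ in the decomposition.

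For the base case $\pi' = x$, variable lookup does not depend on the heap, so $\evalexpr{\rho}{h}{x}{l_f}$ holds directly and the first disjunct applies. For the inductive step $\pi' = \pi''.g$, I would extract an intermediate location $l''$ with $\evalexpr{\rho}{h'}{\pi''}{l''}$ and $h'(l'',g) = l_f$, then apply the induction hypothesis to $\pi''$. This yields either $\evalexpr{\rho}{h}{\pi''}{l''}$ or a decomposition $\pi'' = \pi_z.f.\pi_1.f.\cdots.f.\pi_n.f.\pi_f''$ with $\evalexpr{\rho}{h}{\pi.\pi_f''}{l''}$. I then case-split on whether the final step uses the modified edge, i.e., whether $g = f$ and $l'' = l$.

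When the final step does not use the modified edge, $h(l'',g) = h'(l'',g) = l_f$, so appending $g$ to $\pi''$ (or, in the decomposition subcase, to $\pi_f''$) yields the desired conclusion under the same disjunct as the IH. When $g = f$ and $l'' = l$, we have $l_f = h'(l,f) = l'$. In the IH's first-disjunct subcase I set $\pi_z = \pi''$ with no intermediate segments and $\pi_f = \varepsilon$; in the decomposition subcase the previously trailing $\pi_f''$ becomes a fresh intermediate segment $\pi_{n+1}$ and again $\pi_f = \varepsilon$. In both cases $\pi.\pi_f = \pi$ evaluates to $l' = l_f$ in $h$ by the hypothesis $\evalexpr{\rho}{h}{\pi}{l'}$.

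The main obstacle is the bookkeeping across the inductive step: each time the modified edge is crossed, one must decide whether this closes off the current trailing segment (which then migrates into the list of intermediate $\pi_i$'s) or merely extends it. Once this pattern is pinned down, the induction is routine and mirrors what one would expect in the corresponding Coq development.
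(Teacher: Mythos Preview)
Your proof is correct and follows the same approach as the paper, which states only that ``the proof is done by induction on $\pi$'' (almost certainly a typo for $\pi'$, since $\pi$ is fixed in the hypotheses) and defers to the Coq development. Your case analysis on whether the final step of $\pi'$ traverses the modified edge, together with the bookkeeping that migrates the trailing segment into the list of intermediate $\pi_i$'s each time the edge is crossed, is exactly the structure one expects the mechanized proof to have.
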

\proof 

[See Coq proof \texttt{Misc.Access_Path_Eval_putfield_case}~\cite{clone-webpage}] 
The proof is done by induction on $\pi$. \qed

\begin{figure}%
  \centering
  \subfloat[in $\st{\rho,h}$]{\includegraphics[width=.35\textwidth]{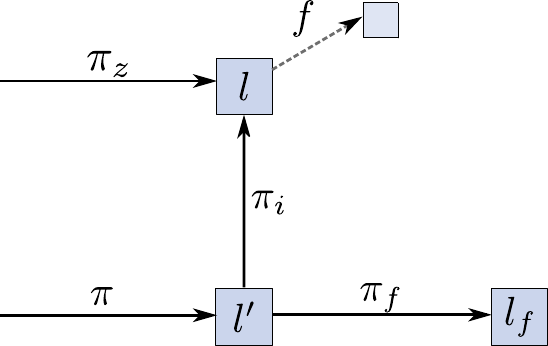}}\qquad
  \subfloat[in $\st{\rho,h[l,f\mapsto l']}$]{\includegraphics[width=.35\textwidth]{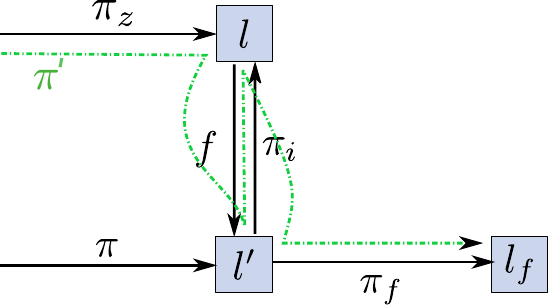}}
  \caption{An Illustration of Path Decomposition on Assigned States.}
  \label{fig:pathdec}
\end{figure}

We extend the previous lemma to paths in both the concrete heap and the graph types.
\begin{lem}[Pathing through strong field assignment]
\label{lem:strongassign}
Assume $\InterpMMM{\rho}{h}{A}{\Gamma,\Delta,\Theta}$ with $\rho(x)=l_x\in A$,
$\Gamma(x)=n_x\in \Theta$, $\rho(y)=l_y$, and $\Gamma(y)=t_y$. Additionally,
suppose that for some path $\pi$, value $v$, and type $t$:
\begin{gather*}
  \evalexpr{\rho}{h[l_x,f\mapsto l_y]}{\pi}{v} \\
  \tevalexpr{\Gamma}{\Delta[n_x,f\mapsto t_y]}{\pi}{t}.
\end{gather*}
Then at least one of the following four statements hold:
\begin{align*}
  &\left(\evalexpr{\rho}{h}{\pi}{v} \wedge \tevalexpr{\Gamma}{\Delta}{\pi}{t}\right) \tag{1}\\
  &\left(\exists\pi', \evalexpr{\rho}{h}{y.\pi'}{v} \wedge \tevalexpr{\Gamma}{\Delta}{y.\pi'}{t}\right) \tag{2}\\
  &t=\top \tag{3} \\
  &v=\vnull \tag{4}
\end{align*}
\end{lem}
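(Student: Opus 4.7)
The plan is to proceed by induction on the structure of the path $\pi$. The base case $\pi = z$ for a variable $z$ is immediate, since neither the heap update at $(l_x, f)$ nor the graph update at $(n_x, f)$ affects variable lookup, so $v = \rho(z)$, $t = \Gamma(z)$, and case (1) holds.

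For the inductive step $\pi = \pi_0.g$, I would decompose both the concrete and abstract evaluations into an intermediate step at $\pi_0$ and a final field access: there is some $l_0$ with $\evalexpr{\rho}{h[l_x,f\mapsto l_y]}{\pi_0}{l_0}$ and $v = h[l_x,f\mapsto l_y](l_0,g)$, and some base type $t_0$ with $\tevalexpr{\Gamma}{\Delta[n_x,f\mapsto t_y]}{\pi_0}{t_0}$ with $t = t_0$ when $t_0 \in \{\top, \topout\}$ and $t = \Delta[n_x,f\mapsto t_y][t_0,g]$ when $t_0$ is a node. Applying the induction hypothesis to $\pi_0$ gives one of the four disjuncts for $(l_0, t_0)$. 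The disjunct $l_0 = \vnull$ is impossible, as dereferencing $l_0$ to obtain $v$ requires $l_0 \in \dom(h)$; the disjunct $t_0 = \top$ propagates to $t = \top$, giving conclusion (3). So we may assume we are in one of the two interesting disjuncts: there exists a path $\sigma_0$, either $\pi_0$ itself or $y.\pi'_0$, with $\evalexpr{\rho}{h}{\sigma_0}{l_0}$ and $\tevalexpr{\Gamma}{\Delta}{\sigma_0}{t_0}$.

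At this point the proof branches on whether the field access at the end crosses the modified edge. If $(l_0, g) \neq (l_x, f)$ and it is not the case that $t_0 = n_x$ with $g = f$, then the final step reads the same value in both the modified and original heap/graph, and extending $\sigma_0$ by $g$ directly yields conclusion (1) or (2). If $(l_0, g) = (l_x, f)$ and $t_0 = n_x$, then $v = l_y = \rho(y)$ and $t = t_y = \Gamma(y)$; choosing the empty continuation, conclusion (2) holds with $y$ as the witness path. The remaining mixed cases are ruled out using the strong-node discipline: if $l_0 = l_x$ but $t_0 = \topout$, then $l_x \in A \cap \Reach{h}{l_x}$ contradicts the $\topout$ clause of the auxiliary type interpretation; if $l_0 = l_x$ but $t_0$ is a node other than $n_x$, then applying the strong-node uniqueness clause of $\InterpMMM{\rho}{h}{A}{\Gamma,\Delta,\Theta}$ to the paths $\sigma_0$ and $x$ (both of which the auxiliary interpretation forces to land on the corresponding abstract target of $l_x$) yields $t_0 = n_x$, contradiction; conversely, if $t_0 = n_x$ but $l_0 \neq l_x$, the same uniqueness with paths $\sigma_0$ and $x$ forces $l_0 = l_x$.

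The main obstacle is the combinatorial case analysis on $(l_0, g)$ versus $(l_x, f)$ on the concrete side and on the shape of $t_0$ on the abstract side, together with the need to invoke the strong-node property $n_x \in \Theta$ at exactly the right points to eliminate inconsistent combinations. To keep the inductive step readable, I would factor out two symmetric subclaims — that any path evaluating concretely to $l_x$ must evaluate abstractly to $n_x$, and that any path evaluating abstractly to $n_x$ must evaluate concretely to $l_x$ — as short auxiliary lemmas derived directly from the definition of $\InterpMMM{\rho}{h}{A}{\Gamma,\Delta,\Theta}$ and the hypotheses $\rho(x)=l_x$, $\Gamma(x)=n_x$, $n_x \in \Theta$, $l_x \in A$.
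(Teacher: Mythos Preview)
Your induction on the structure of $\pi$ is correct and the case analysis is complete; the two auxiliary subclaims you isolate at the end are exactly what is needed to close the mixed cases. One small terminological slip: when $l_0 = l_x$ but $t_0$ is a node other than $n_x$, the fact you actually use is the node case of the \emph{auxiliary} type interpretation (every concrete path to $l_x$ must land on $n_x$ abstractly, since $\Interpret{\rho}{h}{A}{\Gamma}{\Delta}{l_x}{n_x}$ holds via the path $x$), not the strong-node uniqueness clause of the main interpretation, which goes in the other direction. You do invoke the auxiliary interpretation in the parenthetical, so the argument is fine; just be careful with the naming.

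The paper proceeds differently. Rather than a fresh induction, it first discards cases (3) and (4) and then invokes the earlier path-decomposition lemma (Lemma on assigned heaps) purely on the \emph{concrete} side: given $\evalexpr{\rho}{h[l_x,f\mapsto l_y]}{\pi}{v}$, either $\evalexpr{\rho}{h}{\pi}{v}$ already, or $\pi$ decomposes into segments with a suffix $\pi_f$ such that $\evalexpr{\rho}{h}{y.\pi_f}{v}$. The abstract conclusion $\tevalexpr{\Gamma}{\Delta}{\pi}{t}$ (resp.\ $\tevalexpr{\Gamma}{\Delta}{y.\pi_f}{t}$) is then recovered from the standing hypothesis $\InterpMMM{\rho}{h}{A}{\Gamma,\Delta,\Theta}$ together with the strong-node correspondence for $n_x$. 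The trade-off is modularity: the decomposition lemma is proved once by induction and reused both here and in the weak-update case of subject reduction, whereas your direct induction keeps the argument self-contained but replays essentially the same decomposition reasoning inline and interleaved with the abstract side.
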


\proof

[See Coq proof \texttt{Misc.strong_subst_prop}~\cite{clone-webpage}]
The non-trivial part of the lemma concerns the situation when $t \neq
\top$ and $v \neq \vnull$). In that case, 
the proof relies on Lemma~\ref{lem:pathdec}. The two parts of the disjunction
in Lemma~\ref{lem:pathdec} are used to prove one of the two first
statements. If the first
part of the disjunction holds, we can assume that $\evalexpr{\rho}{h}{\pi}{v}$. Then, since
$\InterpMMM{\rho}{h}{A}{\Gamma,\Delta,\Theta}$, we also have
$\tevalexpr{\Gamma}{\Delta}{\pi}{t}$. This implies the first main statement.
If the second part of the disjunction holds, then, by observing that
$\evalexpr{\rho}{h}{y}{l_y}$, we can derive the sub-statement $\exists \pi_f,
\evalexpr{\rho}{h}{y.\pi_f}{v}$. As previously, by assumption we also have
$\tevalexpr{\Gamma}{\Delta}{y.\pi_f}{t}$, which implies our second main
statement.
\qed

We first establish a standard subject reduction theorem and then prove type
soundness. We assume that all methods of the considered program are
well-typed.
\begin{thm}[Subject Reduction]\label{theo2}
Assume ${T_1 \vdash c: T_2}$ and 
$\InterpMM{\rho_1}{h_1}{A_1}{T_1}$.\\
If 
$(c,\st{\rho_1,h_1,A_1}) \leadsto \st{\rho_2,h_2,A_2}$
then  
$\InterpMM{\rho_2}{h_2}{A_2}{T_2}$.
\end{thm}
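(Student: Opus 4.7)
The plan is to proceed by induction on the derivation of the big-step reduction $(c,\st{\rho_1,h_1,A_1}) \leadsto \st{\rho_2,h_2,A_2}$, while simultaneously inverting the typing judgement $T_1 \vdash c : T_2$ to match the operational rule. This structure makes the induction hypothesis immediately available for the body of the callee in the $\HCall{x}{\var{cn}:X}{y}$ case. The heavy lifting is done by four results already proved: the soundness of $\sqsubseteq$, to absorb the subtyping premises of the weak-update, conditional and loop rules; Lemma~\ref{lem:pathdec} and Lemma~\ref{lem:strongassign}, to re-route paths across a field write; and Theorem~\ref{th:policy-type}, to transport a copy-policy guarantee from the callee back into the caller's type interpretation.

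I would first dispatch the routine cases. The local moves $\HAssign{x}{y}$, $\HAssign{x}{\snull}$ and $\HReturn{x}$ only touch $\Gamma$, and the new interpretation obligation for the rebound variable is copied from the one we already have for the source. The read $\HAssign{x}{y.f}$ splits on the shape of $\Gamma(y)$: if it is a node $n$, then the interpretation at $y$ forces every access path for $\rho_1(y)$ to reach $n$, so the successor $\Delta[n,f]$ correctly describes $h_1(\rho_1(y),f)$; the $\top$ and $\topout$ sub-cases use the sink-node property of path evaluation. For $\HNew{x}{\var{cn}}$, freshness of the allocated location guarantees both that the new node has a unique preimage, hence may join $\Theta$, and that its successors are all $\bot$. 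Sequential composition is transparent, and the conditional and loop cases close by applying the inductive hypothesis in each branch followed by the soundness of the subtyping widenings present as premises.

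The central local case is the field assignment $\HAssign{x.f}{y}$ under the strong-update rule, where $\Gamma(x)=n_x\in\Theta$. Writing $l_x=\rho_1(x)$, $l_y=\rho_1(y)$, $h_2=h_1[(l_x,f)\mapsto l_y]$ and $\Delta_2=\Delta_1[n_x,f\mapsto\Gamma(y)]$, the interpretation obligation for an arbitrary path $\pi$ that evaluates to some $v$ in $h_2$ and to some $t$ in $\Delta_2$ is handled by Lemma~\ref{lem:strongassign}: either $\pi$ already evaluated to $v$ in $h_1$ and to $t$ in $\Delta_1$, so the old interpretation transfers; or $\pi$ rewrites as $y.\pi'$ and we use the interpretation already known at $y$ with base type $\Gamma(y)$; or $t=\top$ or $v=\vnull$, which are trivial. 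The strong-uniqueness clause of the main interpretation is preserved because only $n_x$'s outgoing edges have changed and its preimage set under paths is unchanged. The weak-update sub-rule follows from the strong one composed with the subtyping widenings it carries as premises.

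The main obstacle is the copy-method call $\HCall{x}{\var{cn}:X}{y}$. Its semantic rule executes the callee's body from $\rho_\vnull[a\mapsto\rho_1(y)]$ and an empty locally-allocated set. Since all methods are well-typed, the method-typing rule together with the inductive hypothesis applied to the body evaluation yields an interpretation against $(\Gamma',\Delta_\tau,\{n_\tau\})$ with $\Gamma'(\var{ret})=n_\tau$, and Theorem~\ref{th:policy-type} then converts this into the policy-level disjointness asserting that the fragment reachable from the returned pointer through deep fields is unreachable from everywhere else. The side-condition $\nodes(\Delta)\cap\nodes(\Delta_\tau)=\emptyset$ allows one to reassemble the caller's pre-call interpretation of $\Delta$ with the fresh callee abstraction $\Delta_\tau$ into an interpretation of $\Delta\cup\Delta_\tau$, provided no caller-visible locally-allocated location has been perturbed. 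The two typing sub-cases reflect exactly this: either $\Gamma(y)\in\{\bot,\topout\}$, so the callee cannot touch $A_1$ at all, or $\Gamma(y)=n$, in which case the $\mathit{KillSucc}_{n}$ widening conservatively forgets $n$'s successors. The $\UnkownCall{x}{y}$ case is analogous but lighter, using the reachability constraints baked into its semantic rule. The most delicate bookkeeping is the disjoint-union interpretation: paths crossing from the caller's graph into the callee's return structure must pass through $\rho_2(x)$ and hence hit $n_\tau$, which is where the freshness of the callee's locations and the policy-level disjointness combine.
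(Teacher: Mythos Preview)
Your overall architecture is sound and matches the paper's: (structural) induction on the command, use of Lemma~\ref{lem:strongassign} for the strong field update, and soundness of $\sqsubseteq$ to absorb the widenings in the conditional, loop and weak-update rules. There is, however, a genuine gap in your treatment of the weak field assignment $\HAssign{x.f}{y}$ with $\Gamma(x)=n\notin\Theta$.

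You write that ``the weak-update sub-rule follows from the strong one composed with the subtyping widenings it carries as premises.'' This does not work. The reason the strong case goes through is precisely the hypothesis $n\in\Theta$ in Lemma~\ref{lem:strongassign}: it guarantees that \emph{every} concrete cell abstracted by $n$ has been updated, so the type $\Delta_1[n,f\mapsto\Gamma(y)]$ is a faithful description of $h_2$. When $n$ is weak there may be another location $l\neq\rho_1(x)$ represented by $n$ whose $f$-field was \emph{not} touched; a path through $l.f$ then evaluates in $h_2$ to the old value, which need not be described by $\Gamma(y)$. Hence $\InterpMMM{\rho_1}{h_2}{A_1}{\Gamma,\Delta_1[n,f\mapsto\Gamma(y)],\Theta}$ is simply false in general, and you cannot obtain the result by subtyping from it. The paper does something quite different here: it works directly at the level of the target type $(\Gamma',\Delta',\Theta')$, uses Lemma~\ref{lem:pathdec} (not Lemma~\ref{lem:strongassign}) to decompose an arbitrary $h_2$-path into segments that are evaluable in $h_1$, and exploits \emph{both} premises $(\Gamma,\Delta,\Theta)\sqsubseteq(\Gamma',\Delta',\Theta')$ and $(\Gamma,\Delta[n,f\mapsto\Gamma(y)],\Theta)\sqsubseteq(\Gamma',\Delta',\Theta')$ simultaneously via their fusion maps $\sigma_1,\sigma_2$ (which agree on $\Gamma$) to rebuild the path in $\Delta'$. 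The strong-node uniqueness premise in this case additionally requires clause \eqref{eq:st3} of subtyping, which your sketch does not mention.

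A smaller point: even in the strong case your justification of the $\Theta$-uniqueness clause (``its preimage set under paths is unchanged'') is too quick. Paths in $h_2$ reaching a given strong node \emph{do} change, because the heap edge at $(l_x,f)$ has moved; the paper handles this by applying Lemma~\ref{lem:strongassign} to \emph{both} paths $\pi,\pi'$ and doing a three-way case split on which of them is rerouted through $y$.
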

\proof

[See Coq proof \texttt{Soundness.subject_reduction}~\cite{clone-webpage}]
The proof proceeds by structural induction on the instruction $c$. For
each reduction rule concerning $c$ (Fig.~\ref{fig:semrules}), we prove
that the resulting state is in relation to the type $T_2$, as defined
by the main type interpretation rule in
Fig.~\ref{fig:type:interpret}. This amounts to verifying that the two
premises of the type interpretation rule are satisfied. One premise 
checks that all access paths lead to related (value,node) pairs. The
other checks that all nodes that are designated as ``strong'' in the
type interpretation indeed only represent unique locations.

We here present the most intricate part of the proof, which concerns
graph nodes different from $\top$, $\bot$, or $\topout$, and focus here on variable and field assignment. 
The entire proof has been checked using the Coq proof management system.

%
\underline{If $c\equiv \HAssign{x}{y}$}
  then $\st{\rho_2,h_2,A_2} = \st{\rho_1[x\mapsto\rho_1(y)],h_1,A_1}$.
  In $(\rho_2,h_2,\Gamma_2,\Delta_2)$ take a path $x.\pi$: since
  $\rho_2(x)=\rho_1(y)$ and $\Gamma_2(x)=\Gamma_1(y)$, $y.\pi$ is also a path
  in $(\rho_1,h_1,\Gamma_1,\Delta_1)$.  Given that
  $\InterpMMM{\rho_1}{h_1}{A_1}{\Gamma_1,\Delta_1,\Theta_1}$, we know that
  $y.\pi$ will lead to a value $v$ (formally,
  $\evalexpr{\rho_1}{h_1}{y.\pi}{v}$) and a node $n$ (formally,
  $\tevalexpr{\Gamma_1}{\Delta_1}{y.\pi}{n}$) such that
  ${\Interpret{\rho_1}{h_1}{A_1}{\Gamma_1}{\Delta_1}{v}{n}}$. The two paths
  being identical save for their prefix, this property also holds for $x.\pi$
  (formally,   ${\Interpret{\rho_1[x\mapsto\rho_1(y)]}{h_1}{A_1}{\Gamma_1[x\mapsto\Gamma_1(y)]}{\Delta_1}{v}{n}}$,
  $\evalexpr{\rho_1[x\mapsto\rho_1(y)]}{h_1}{x.\pi}{v}$,
  and $\tevalexpr{\Gamma_1[x\mapsto\Gamma_1(y)]}{\Delta_1}{x.\pi}{n}$).
  Paths not beginning with $x$ are not affected by the assignment, and so we
  can conclude that the first premise is satisfied. 

  For the second premise, let $n \in \Theta_2$ and assume that $n$ can
  be reached by two paths $\pi$ and $\pi'$ in $\Delta_2$. If none or
  both of the paths begin with $x$ then, by assumption, the two paths
  will lead to the same location in $h_2=h_1$. Otherwise, suppose
  that, say, $\pi$ begins with $x$ and $\pi'$ with a different
  variable $z$ and that they lead to $l$ and $l'$ respectively. Since
  $\Gamma_2(x)=\Gamma_1(y)$ and $\Delta_2=\Delta_1$, then by assumption there is a path $\pi''$ in
  $(\rho_1,h_1,\Delta_1,\Gamma_1)$ that starts with $y$, and such that
  $\evalexpr{\rho_1}{h_1}{\pi''}{l}$. As $z$ is not affected by the assignment,
  we also have that $\evalexpr{\rho_1}{h_1}{\pi'}{l'}$.  Therefore, as $n \in
  \Theta_1$ and $\InterpMMM{\rho_1}{h_1}{A_1}{\Gamma_1,\Delta_1,\Theta_1}$, we
  can conclude that $l = l'$.  This proves that
  $\InterpMMM{\rho_1[x\mapsto\rho_1(y)]}{h_1}{A_1}{\Gamma_1[\var{x}\mapsto\Gamma_1(\var{y})],\Delta_1,\Theta_1}$. 

  \underline{If $c\equiv \HAssign{x.f}{y}$}
  then $\st{\rho_2,h_2,A_2} = \st{\rho_1,h_1[(\rho_1(x),f)\mapsto \rho_1(y)],A_1}$.
  Two cases arise, depending on whether $n = \Gamma(x)$ is a strong or a weak node.

  \underline{If $c\equiv \HAssign{x.f}{y}$ and $n=\Gamma(x)\in\Theta$} 
    the node $n$ represents a unique concrete cell in $h$. To check the first
    premise, we make use of Lemma~\ref{lem:strongassign} on a given path
    $\pi$, a node $n$ and a location $l$ such that
    $\tevalexpr{\Gamma_2}{\Delta_2}{\pi}{n}$ and
    $\evalexpr{\rho_2}{h_2}{\pi}{l}$. This yields one of two main hypotheses.
    In the first case $\pi$ is not modified by the $f$-redirection (formally,
    $\evalexpr{\rho_1}{h_1}{\pi}{l} \wedge
    \tevalexpr{\Gamma_1}{\Delta_1}{\pi}{n}$), and by assumption
    $\Interpret{\rho_1}{h_1}{A_1}{\Gamma_1}{\Delta_1}{l}{n}$. Now consider any
    path $\pi_0$ such that $\evalexpr{\rho_2}{h_2}{\pi_0}{l}$: there is a
    node $n_0$ such that $\tevalexpr{\rho_2}{h_2}{\pi_0}{n_0}$, and we can
    reapply Lemma~\ref{lem:strongassign} to find that
    $\Interpret{\rho_1}{h_1}{A_1}{\Gamma_1}{\Delta_1}{l}{n_0}$. Hence $n_0=n$,
    and since nodes in $\Delta_1$ and $\Delta_2$ are untouched by the field
    assignment typing rule, we can conclude that the first premise is
    satisfied. In the second case ($\pi$ is modified by the $f$-redirection)
    there is a path $\pi'$ in $(\rho_1,h_1,\Gamma_1,\Delta_1)$ such that
    $y.\pi'$ leads respectively to $l$ and $n$ (formally
    $\evalexpr{\rho_1}{h_1}{y.\pi'}{l} \wedge
    \tevalexpr{\Gamma_1}{\Delta_1}{y.\pi'}{n}$). 
    As in the previous case, for
    any path $\pi_0$ such that $\evalexpr{\rho_2}{h_2}{\pi_0}{l}$ and
    $\tevalexpr{\rho_2}{h_2}{\pi_0}{n_0}$, we have
    $\Interpret{\rho_1}{h_1}{A_1}{\Gamma_1}{\Delta_1}{l}{n_0}$, which implies
    that $n=n_0$, and thus we can conclude 
    that $\pi_0$ leads to the same node as $\pi$, as required.

    For the second premise, let $n \in \Theta_2$ and assume that $n$ can be
    reached by two paths $\pi$ and $\pi'$ in $\Delta_2$. The application of
    Lemma~\ref{lem:strongassign} to both of these paths yields the following
    combination of cases:
    \begin{desCription}
      \item\noindent{\hskip-12 pt\bf neither path is modified by the $f$-redirection:}\ formally,
      $\evalexpr{\rho_1}{h_1}{\pi}{l} \wedge \tevalexpr{\Gamma_1}{\Delta_1}{\pi}{n} \wedge
      \evalexpr{\rho_1}{h_1}{\pi'}{l'} \wedge \tevalexpr{\Gamma_1}{\Delta_1}{\pi'}{n}$.
      By assumption, $l=l'$.
      \item\noindent{\hskip-12 pt\bf one of the paths is modified by the $f$-redirection:}\ without loss
      of generality, assume $\evalexpr{\rho_1}{h_1}{\pi'}{l'} \wedge
      \tevalexpr{\Gamma_1}{\Delta_1}{\pi}{n}$, and there is a path $\pi_\star$ in
      $(\rho_1,h_1,\Gamma_1,\Delta_1)$ such that $y.\pi_\star$ leads to $l$
      in the heap, and $n$ in the graph (formally,
      $\evalexpr{\rho_1}{h_1}{y.\pi_\star}{l} \wedge
      \tevalexpr{\Gamma_1}{\Delta_1}{y.\pi_\star}{n}$). By assumption, $l=l'$.
      \item\noindent{\hskip-12 pt\bf both paths are modified by the $f$-redirection:}\ we can find two
      paths $\pi_\star$ and $\pi'_\star$ such that $y.\pi_\star$ leads to $l$
      in the heap and $n$ in the graph, and $y.\pi'_\star$ leads to $l'$ in
      the heap and $n$ in the graph (formally, $\evalexpr{\rho_1}{h_1}{y.\pi_\star}{l} \wedge
      \tevalexpr{\Gamma_1}{\Delta_1}{y.\pi_\star}{n} \wedge
      \evalexpr{\rho_1}{h_1}{y.\pi'_\star}{l'} \wedge
      \tevalexpr{\Gamma_1}{\Delta_1}{y.\pi'_\star}{n}$). By assumption,
      $l=l'$.
    \end{desCription}
    In all combinations, $l=l'$ in $h_1$. Since the rule for field assignment in
    the operational semantics preserves the locations, then $l=l'$ in $h_2$.

    This concludes the proof of $\InterpMMM{\rho_1}{h_1[(\rho_1(x),f)\mapsto
    \rho_1(y)]}{A_1}{\Gamma_1,\Delta_1[n,f \mapsto \Gamma(y)],\Theta_1}$ when
    $n\in\Theta$.

  \underline{If $c\equiv \HAssign{x.f}{y}$ and $n=\Gamma(x)\notin\Theta$} 
    here $n$ may represent multiple concrete cells in $h$. Let $\sigma_1$ and
    $\sigma_2$ be the mappings defined, respectively, by the hypothesis
    $(\Gamma_1,\Delta_1,\Theta_1)\sqsubseteq(\Gamma_2,\Delta_2,\Theta_2)$
    and
    $(\Gamma_1,\Delta_1[n,f\mapsto
    \Gamma_1(y)],\Theta_1)\sqsubseteq(\Gamma_2,\Delta_2,\Theta_2)$.
    The first premise of the proof is proved by examining a fixed path $\pi$
    in $(\rho_2,h_2,\Gamma_2,\Delta_2)$ that ends in $l_0$ in the concrete
    heap, and $n_0'$ in the abstract graph. Applying Lemma~\ref{lem:pathdec}
    to this path (formally, instantiating $l$ by $\rho_1(x)$, $l'$ by
    $\rho_1(y)$, $l_f$ by $l_0$, $\pi$ by $y$, and $\pi'$ by $\pi$) yields two
    possibilities. The \emph{first alternative} is when $\pi$ is not modified by the
    $f$-redirection (formally, $\evalexpr{\rho_1}{h_1}{\pi}{l_0}$).
    Lemma~\ref{lem:subtype} then asserts the existence of a node $n_0$ that
    $\pi$ evaluates to in $(\rho_1,h_1,\Gamma_1,\Delta_1)$ (formally,
    $\tevalexpr{\Gamma_1}{\Delta_1}{\pi}{n_0}$ with
    $n'_0=\sigma_1(n_0)$). 
    Moreover, by assumption $n_0$ and $l_0$ are in correspondence (formally,
    $\Interpret{\rho_1}{h_1}{A_1}{\Gamma_1}{\Delta_1}{l_0}{n_0}$). To prove
    that $l_0$ and $n'_0$ are in correspondence, we refer to the auxiliary
    type interpretation rule in Fig.~\ref{fig:type:interpret}, and prove that
    given a path $\pi_0$ that verifies $\evalexpr{\rho_2}{h_2}{\pi_0}{l_0}$,
    the proposition $\tevalexpr{\Gamma_2}{\Delta_2}{\pi_0}{n'_0}$ holds. 
    Using Lemma~\ref{lem:pathdec} on $\pi_0$ (formally, instantiating $l$ by
    $\rho_1(x)$, $l'$ by $\rho_1(y)$, $l_f$ by $l_0$, $\pi$ by $y$, and $\pi'$
    by $\pi_0$), the only non-immediate case is when $\pi_0$  goes through
    $f$. In this case, $\pi_0 = \pi_z.f.\pi_1.f.\ldots.f.\pi_n.f.\pi_f$, and
    we can reconstruct this as a path to $n'_0$ in $(\Gamma_2,\Delta_2)$ by
    assuming there are two nodes $n_x'$ and $n_y'$ such that
    $\sigma_1(\Gamma_1(x)) = n_x'$ and $\sigma_1(\Gamma_1(y)) = n_y'$, and
    observing:

    \begin{enumerate}[$\bullet$]
    \item $\evalexpr{\rho_1}{h_1}{\pi_z}{\rho_1(x)}$ thus
    $\tevalexpr{\Gamma_1}{\Delta_1}{\pi_z}{\Gamma_1(x)}$ by assumption.
    Because access path evaluation is monotonic wrt. mappings (a direct
    consequence of clause~\eqref{eq:st2} in the definition of sub-typing), we
    can derive $\tevalexpr{\Gamma_2}{\Delta_2}{\pi_z}{n_x'}$;
    \item for $i\in [1,n]$, $\evalexpr{\rho_1}{h_1}{y.\pi_i}{\rho_1(x)}$ thus
    by assumption $\tevalexpr{\Gamma_1}{\Delta_1}{y.\pi_i}{\Gamma_1(x)}$. By
    again using the monotony of access path evaluation, we can derive
    $\tevalexpr{\Gamma_2}{\Delta_2}{y.\pi_i}{n_x'}$;
    \item $\evalexpr{\rho_1}{h_1}{y.\pi_f}{l_0}$ thus by assumption
    $\tevalexpr{\Gamma_1}{\Delta_1}{y.\pi_f}{n_0}$. Hence
    $\tevalexpr{\Gamma_2}{\Delta_2}{y.\pi_f}{n'_0}$ by $n'_0 = \sigma_1(n_0)$
    and due to monotonicity.
    \item in $\Delta_1[n,f \mapsto \Gamma_1(y)]$, $n = \Gamma_1(x)$ points to
    $\Gamma_1(y)$ by $f$. Note that because the types
    $\Gamma_1,\Delta_1,\Theta_1$ and $\Gamma_1,\Delta_1[n,f\mapsto
    \Gamma_1(y)],\Theta_1$ share the environment $\Gamma_1$, we have
    $\sigma_2(\Gamma_1(x)) = \sigma_1(\Gamma_1(x)) = n_x'$ and
    $\sigma_2(\Gamma_1(y)) = \sigma_1(\Gamma_1(y)) = n_y'$.
    Hence in $\Delta_2$, $n_x'$ points to $n_y'$ by $f$, thanks to the
    monotonicity of $\sigma_2$.
    \end{enumerate}
    This concludes the proof of $\tevalexpr{\Gamma_2}{\Delta_2}{\pi_0}{n'_0}$.
    The cases when $n_x'$ and $n_y'$ do not exist are easily dismissed;
    we refer the reader to the Coq development for more details.

    We now go back to our first application of Lemma~\ref{lem:pathdec} and
    tackle the \emph{second alternative} -- when $\pi$ is indeed modified by
    the $f$-redirection. Here $\pi$ can be decomposed into
    $\pi_z.f.\pi_1.f.\ldots.f.\pi_n.f.\pi_f$. We first find the node in
    $\Delta_1$ that corresponds to the location $l_0$: for this we need to
    find a path in $(\rho_1,h_1,\Gamma_2,\Delta_2)$ that leads to both
    $l_0$ and $n'_0$ (in order to apply Lemma~\ref{lem:subtype}, we can no
    longer assume, as in the first alternative, that $\pi$ leads to $l_0$ in
    $\st{\rho_1,h_1}$). Since $\evalexpr{\rho_1}{h_1}{y.\pi_f}{l_0}$, the path
    $y.\pi_f$ might be a good candidate: we prove that it points to $n'_0$
    in $\Delta_2$.  Assume the existence of two nodes $n_x'$ and $n_y'$ in
    $\Delta_2$ such that $n_x' = \sigma_1(\Gamma_1(x)) =
    \sigma_2(\Gamma_1(x))$ and $n_y' = \sigma_1(\Gamma_1(y)) =
    \sigma_2(\Gamma_1(y))$ (the occurrences of non-existence are dismissed by
    reducing them, respectively, to the contradictory case when
    $\Gamma_2(x)=\top$, and to the trivial case when $n'_0=\top$; the equality
    between results of $\sigma_1$ and $\sigma_2$ stems from the same reasons
    as in bullet $4$ previously). From the first part of the decomposition of
    $\pi$ one can derive, by assumption, that $\pi_z$ leads to $\Gamma_1(x)$
    in $\Delta_1$, and, by monotonicity, to $n_x'$ in $\Delta_2$ (formally,
    $\evalexpr{\rho_1}{h_1}{\pi_z}{\rho_1(x)}$ entails
    $\tevalexpr{\Gamma_1}{\Delta_1}{\pi_z}{\Gamma_1(x)}$ implies
    $\tevalexpr{\Gamma_2}{\Delta_2}{\pi_z}{n_x'}$). Similarly, we can derive
    that for any $i\in[1,n]$,
    $\tevalexpr{\Gamma_1}{\Delta_1}{y.\pi_i}{\Gamma_1(x)}$. 
    We use monotonicity both to infer that in $\Delta_2$, $n_x'$ points to $n_y'$
    by $f$, and that for any $i\in[1,n]$
    $\tevalexpr{\Gamma_2}{\Delta_2}{y.\pi_i}{n_x'}$. From these three
    statements on $\Delta_2$, we have
    $\tevalexpr{\Gamma_2}{\Delta_2}{\pi_z.f.\pi_1.f.\ldots.f.\pi_n}{n_x'}$,
    and by path decomposition $\tevalexpr{\Gamma_2}{\Delta_2}{y.\pi_f}{n_0'}$.
    This allows us to proceed and apply Lemma~\ref{lem:subtype}, asserting the
    existence of a node $n_0$ in ($\rho_1,h_1,\Delta_1,\Gamma_1$) that
    $y.\pi_f$ evaluates to (formally,
    $\tevalexpr{\Gamma_1}{\Delta_1}{y.\pi_f}{n_0}$ with $n_0' =
    \sigma_1(n_0)$). Moreover, by assumption $n_0$ and $l_0$ are in
    correspondence (formally,
    $\Interpret{\rho_1}{h_1}{A_1}{\Gamma_1}{\Delta_1}{l_0}{n_0}$). The proof
    schema from here on is quite similar to what was done for the first
    alternative. As previously, we demonstrate that $l_0$ and $n'_0$ are in
    correspondence by taking a path $\pi_0$ such that
    $\evalexpr{\rho_2}{h_2}{\pi_0}{l_0}$ and proving
    $\tevalexpr{\Gamma_2}{\Delta_2}{\pi_0}{n'_0}$. As previously, using
    Lemma~\ref{lem:pathdec} on $\pi_0$ (formally, instantiating $l$ by
    $\rho_1(x)$, $l'$ by $\rho_1(y)$, $l_f$ by $l_0$, $\pi$ by $y$, and $\pi'$
    by $\pi_0$), the only non-immediate case is when $\pi_0$  goes through
    $f$. In this case, $\pi_0 = \pi'_z.f.\pi'_1.f.\ldots.f.\pi'_n.f.\pi'_f$,
    and we can reconstruct this as a path to $n'_0$ in $(\Gamma_2,\Delta_2)$
    by observing:
    \begin{enumerate}[$\bullet$]
    \item $\evalexpr{\rho_1}{h_1}{\pi'_z}{\rho_1(x)}$ thus
    $\tevalexpr{\Gamma_1}{\Delta_1}{\pi'_z}{\Gamma_1(x)}$ by assumption.
    Monotonicity yields $\tevalexpr{\Gamma_2}{\Delta_2}{\pi'_z}{n_x'}$;
    \item for $i\in [1,n]$, $\evalexpr{\rho_1}{h_1}{y.\pi'_i}{\rho_1(x)}$ thus
    by assumption $\tevalexpr{\Gamma_1}{\Delta_1}{y.\pi'_i}{\Gamma_1(x)}$, and
    by monotonicity we can derive $\tevalexpr{\Gamma_2}{\Delta_2}{y.\pi'_i}{n_x'}$;
    \item $\evalexpr{\rho_1}{h_1}{y.\pi_f}{l_0}$ thus since $l_0$ and $n_0$
    are in correspondence, $\tevalexpr{\Gamma_1}{\Delta_1}{y.\pi_f}{n_0}$.
    Hence $\tevalexpr{\Gamma_2}{\Delta_2}{y.\pi_f}{n'_0}$ by $n'_0 =
    \sigma_1(n_0)$ and by monotonicity. 
    \end{enumerate}
    This concludes the proof of $\tevalexpr{\Gamma_2}{\Delta_2}{\pi_0}{n'_0}$,
    hence the demonstration of the first premise.

    We are now left with the second premise, stating the unicity of strong
    node representation. Assume $n\in\Theta_2$ can be reached by two paths
    $\pi$ and $\pi'$ in $\rho_2,h_2,\Gamma_2,\Delta_2$, we use
    Lemma~\ref{lem:pathdec} to decompose both paths. The following cases
    arise:
    \begin  {desCription}
      \item\noindent{\hskip-12 pt\bf neither path is modified by the $f$-redirection:}\ formally,
      $\evalexpr{\rho_1}{h_1}{\pi}{l} \wedge
      \evalexpr{\rho_1}{h_1}{\pi'}{l'}$. Lemma~\ref{lem:subtype} ensures the
      existence of $n'$ such that $n = \sigma_1(n')$ and
      $\tevalexpr{\Gamma_1}{\Delta_1}{\pi}{n'} \wedge
      \tevalexpr{\Gamma_1}{\Delta_1}{\pi'}{n'}$ Thus by assumption, $l=l'$.
      \item\noindent{\hskip-12 pt\bf one of the paths is modified by the $f$-redirection:}\ without loss
      of generality, assume $\evalexpr{\rho_1}{h_1}{\pi}{l}$, and
      $\pi'=\pi_0.f.\pi_1.f.\ldots.f.\pi_n.f.\pi_f$ with
      $\evalexpr{\rho_1}{h_1}{y.\pi_f}{l'}$. From the first concrete path
      expression, Lemma~\ref{lem:subtype} ensures that there is $n'$ such that
      $n=\sigma_1(n')$, and $\tevalexpr{\Gamma_1}{\Delta_1}{\pi}{n'}$.
      Moreover, by assumption there exists a strong node $n''$ in $\Delta_1$
      that $y.\pi_f$ leads to, and that is mapped to $n$ by $\sigma_1$
      (formally, $n=\sigma_1(n'')$ and
      $\tevalexpr{\Gamma_1}{\Delta_1}{y.\pi_f}{n''}$).  Clause~\eqref{eq:st3}
      of the definition of subtyping states the uniqueness of strong source
      nodes: thus $n'=n''$, and we conclude by assumption that $l=l'$.
      \item\noindent{\hskip-12 pt\bf both paths are modified by the $f$-redirection:}\
      Lemma~\ref{lem:pathdec} provides two paths $\pi_f$ and $\pi'_f$ such
      that $y.\pi_f$ leads to $l$, and $y.\pi'_f$ leads to $l'$ in the heap
      (formally, $\evalexpr{\rho_1}{h_1}{y.\pi_\star}{l} \wedge
      \evalexpr{\rho_1}{h_1}{y.\pi'_\star}{l'}$). As in the previous case, we
      use the assumption and clause~\eqref{eq:st3} of main sub-typing
      definition to exhibit the common node $n'$ in $\Delta_1$ that is
      pointed to by both paths. By assumption, $l=l'$.
    \end    {desCription}
    This concludes the demonstration of the second premise, and of the
    $c\equiv \HAssign{x.f}{y}$ case in our global induction.
\qed
    %


\begin{thm}[Type soundness]
If ${ \vdash p}$ then
all methods $m$ declared in the program $p$
are secure, \emph{i.e.}, respect their copy policy.
\end{thm}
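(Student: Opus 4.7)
The plan is to chain together the results established in this section: the method typing rule gives a concrete initial type for the body, Subject Reduction (Theorem~\ref{theo2}) propagates this type along the callee's execution, soundness of sub-typing lifts the result to the policy type induced by $\Phi(\tau)$, and Theorem~\ref{th:policy-type} finally crosses over from the abstract type to the policy semantics; Lemma~\ref{lem:mono} handles the case in which dynamic dispatch invokes an overriding method.

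Concretely, assuming $\vdash p$, I would fix a method declared in $p$ with signature $\Copy(X)\{\tau\}$ and consider an arbitrary reduction $(\HCall{x}{\var{cn}:X}{y},\stb{\rho_1,h_1,A_1}) \leadsto \str{\rho_2,h_2,A_2}$. By the semantics of virtual calls, the body actually invoked is the body $c$ of a declaration $\Copy(X')~\HAssign{m(a)}{}c$ with $\var{cn}_y \preceq \var{cn}$, where $\var{cn}_y$ is the dynamic class of $\rho_1(y)$; the well-formedness condition on overriding yields $\Pi_p(X)\subseteq\Pi_p(X')$, so by Lemma~\ref{lem:mono} it suffices to prove $\rho_2,h_2,x\models\Pi_p(X')$. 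The callee executes as $(c,\stb{\rho_{\vnull}[a\mapsto\rho_1(y)],h_1,\emptyset})\leadsto\str{\rho',h_2,A'}$ with $\rho_2(x)=\rho'(\ret)$. By the method typing rule applied to this overriding declaration, $[\cdot\mapsto\bot][a\mapsto\topout],\emptyset,\emptyset \vdash c : \Gamma,\Delta,\Theta$ and $(\Gamma,\Delta,\Theta)\sqsubseteq(\Gamma'',\Delta_{\tau'},\{n_{\tau'}\})$ with $\Gamma''(\ret)=n_{\tau'}$, where $\tau'=\Pi_p(X')$ and $\Phi(\tau')=(n_{\tau'},\Delta_{\tau'})$. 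The initial callee state trivially satisfies the initial type: every variable besides $a$ holds $\vnull$, which matches $\bot$, and $\rho_1(y)$ matches $\topout$ because the local-allocation set is empty. Subject Reduction then yields $\InterpMMM{\rho'}{h_2}{A'}{\Gamma,\Delta,\Theta}$, and soundness of sub-typing lifts this to $\InterpMMM{\rho'}{h_2}{A'}{\Gamma'',\Delta_{\tau'},\{n_{\tau'}\}}$.

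It remains to invoke Theorem~\ref{th:policy-type} with $\rho=\rho_2$, $h=h_2$, $A=A'$ and $\Gamma'=\Gamma''$, whose hypothesis demands that for every caller variable $y'\neq x$, $\Reach{h_2}{\rho_2(y')}\cap A'=\emptyset$. This disjointness is the main obstacle, and I would establish it as an auxiliary frame lemma for well-typed executions: the typing rules authorise field updates only through nodes in $\dom(\Delta)$, never through $\top$ or $\topout$, and the two typing rules for unknown method calls similarly touch only heap cells reachable from their argument when that argument is non-local. Together these restrictions ensure that no cell outside the callee's locally allocated set is ever mutated. Consequently, for every caller variable $y'$, $\Reach{h_2}{\rho_1(y')}=\Reach{h_1}{\rho_1(y')}\subseteq\dom(h_1)$, while each location in $A'$ is freshly allocated during the callee's execution and thus lies outside $\dom(h_1)$. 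I expect to prove this frame lemma by induction on the callee's reduction, case-analysing each command form against the constraints of the typing rules. With it in hand, Theorem~\ref{th:policy-type} yields $\rho_2,h_2,x\models\tau'$, completing the proof.
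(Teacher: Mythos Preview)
Your high-level chain—virtual-dispatch semantics, well-formedness plus Lemma~\ref{lem:mono}, the method typing rule, Subject Reduction, sub-typing soundness, and finally Theorem~\ref{th:policy-type}—is exactly the route the paper takes. You are also right that the side condition of Theorem~\ref{th:policy-type} (that $A'$ is unreachable from every caller variable $y'\neq x$ in $h_2$) must be discharged; the paper's written proof simply invokes Theorem~\ref{th:policy-type} without spelling this step out, deferring to the Coq development.

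The gap is in the frame lemma you propose for that side condition. The claim that ``no cell outside the callee's locally allocated set is ever mutated'' is false: when the callee executes $\UnkownCall{z}{a}$ with $\Gamma(a)=\topout$, the operational rule permits arbitrary mutation of every cell in $\Reach{h}{\rho(a)}$, and those cells are precisely \emph{not} in $A$. Consequently your derived equality $\Reach{h_2}{\rho_1(y')}=\Reach{h_1}{\rho_1(y')}$ fails (and so does the inclusion in $\dom(h_1)$, since an unknown call may allocate fresh cells and make them reachable from $\rho_1(y)$). The weaker statement you actually need, $\Reach{h_2}{\rho_1(y')}\cap A'=\emptyset$, does hold, but the invariant to carry through the induction is different: one must show that no location outside $A$ can reach a location inside $A$—this is exactly the content of the $\topout$ clause in the auxiliary type interpretation. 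For field writes this follows because the only typable case has $\rho(x)\in A$; for unknown calls it relies on the second semantic premise of the $\UnkownCall{}{}$ rule, which forces every $l'$ from which a member of $\dom(h)\setminus\Reach{h}{\rho(y)}$ is reachable in $h'$ to itself lie in $\dom(h)\setminus\Reach{h}{\rho(y)}$, so that all cells along such a path are unchanged between $h$ and $h'$. That reachability-based invariant is what you should carry in place of the mutation-based frame lemma.
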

\begin{CA}
  For the proofs see~\cite{JensenKP10} and the companion Coq development.
\end{CA}
\begin{TR}
  \begin{proof}

[See Coq proof \texttt{Soundness.soundness}~\cite{clone-webpage}]
Given a method $m$ and a copy signature $\Copy(X)\{\tau\}$ attached to it, we show that 
 for all heaps $h_1,h_2\in\Heap$, local environments
$\rho_1,\rho_2\in\Env$, locally allocated locations $A_1,A_2\in\Power(\Loc) $, and variables $x,y\in\Var$,
$$
(\HCall{x}{\var{cn}:X}{y}, \st{\rho_1,h_1,A_1}) \leadsto \st{\rho_2,h_2,A_2}
  ~~\text{implies}~~\rho_2,h_2,x \models \tau.$$

Following the rule defining the semantics of calls to copy methods, we
consider the situation where there exists a potential overriding
of $m$,  
$\Copy(X')~\HAssign{m(a)}c$ such that
$$(c,\st{\rho_{\vnull}[a\mapsto\rho_1(y)],h_1,\emptyset}) \leadsto \st{\rho',h_2,A'}.$$
Writing $\tau'$ for the policy attached to $X'$, we know that $\tau \subseteq \tau'$.
By Lemma~\ref{lem:mono}, it is then sufficient to prove that  
$\rho_2,h_2,x \models \tau'$ holds.

By typability of $\Copy(X')~\HAssign{m(a)}c$, there exist $\Gamma',\Gamma,\Delta,\Theta$ such that 
$\Gamma'(\var{ret}) =  n_\tau$, 
$(\Gamma, \Delta, \Theta) \sqsubseteq (\Gamma',\Delta_\tau,\{n_\tau\})$ and
 $[~\cdot\mapsto\bot][x\mapsto \topout], \emptyset, \emptyset \vdash c : \Gamma,\Delta,\Theta$.

Using Theorem~\ref{theo2}, we know that 
$\InterpMMM{\rho'}{h_2}{A'}{\Gamma,\Delta,\Theta}$ holds and
by subtyping between 
$(\Gamma, \Delta, \Theta)$ and $(\Gamma',\Delta_\tau,\{n_\tau\})$ we 
obtain that $\InterpMMM{\rho'}{h_2}{A'}{\Gamma',\Delta_\tau,\{n_\tau\}}$ holds.
Theorem~\ref{th:policy-type} then immediately yields that $\rho_2,h_2,x \models \tau$.
  \end{proof}
\end{TR}

\section{Inference}\label{sec:inference}

In order to type-check a method with the previous type system, it is necessary
to infer intermediate types at each loop header, conditional junction
points and weak field assignment point. A standard approach consists in turning the previous typing problem
into a fixpoint problem in a suitable sup-semi-lattice structure. This section
presents the lattice that we put on $(\Type,\sqsubseteq)$. Proofs are
generally omitted by lack of space but can be found in the companion report.
Typability is then checked by computing a suitable least-fixpoint in this
lattice. We end this section by proposing a widening operator that is
necessary to prevent infinite iterations.


\begin{TR}
  \begin{lem}
  The binary relation $\sqsubseteq$ is a preorder on $\Type$.
  \end{lem}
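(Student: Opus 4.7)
The plan is to prove that $\sqsubseteq$ satisfies the two defining properties of a preorder, namely reflexivity and transitivity. Both arguments are constructive: each requires exhibiting a suitable fusion map $\sigma$ and checking the three clauses \eqref{eq:st1}, \eqref{eq:st2}, \eqref{eq:st3} of the main sub-typing judgment.

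For reflexivity, given $T = (\Gamma,\Delta,\Theta)$ I would take $\sigma$ to be the identity map on $\dom(\Delta)$. Clause \eqref{eq:st1} is immediate. For \eqref{eq:st2}, given a path $\pi$ with $\tevalexpr{\Gamma}{\Delta}{\pi}{t}$, take $t_2 = t$ and verify $t \leq_{\mathit{id}} t$ by a trivial case split on the shape of $t$ (each rule of the value sub-typing judgment gives $\bot \leq \bot$, $\top \leq \top$, $\topout \leq \topout$, and $n \leq \mathit{id}(n) = n$). Clause \eqref{eq:st3} is immediate because $\mathit{id}^{-1}(n) = \{n\}$ for every $n \in \Theta$.

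For transitivity, assume $T_1 \sqsubseteq T_2$ via $\sigma_{12}$ and $T_2 \sqsubseteq T_3$ via $\sigma_{23}$, and define $\sigma_{13} : \dom(\Delta_1) \to \dom(\Delta_3) + \{\top\}$ by $\sigma_{13}(n) = \top$ whenever $\sigma_{12}(n) = \top$, and $\sigma_{13}(n) = \sigma_{23}(\sigma_{12}(n))$ otherwise. Clause \eqref{eq:st1} holds by construction. For \eqref{eq:st2}, given $\tevalexpr{\Gamma_1}{\Delta_1}{\pi}{t_1}$, I would first apply \eqref{eq:st2} for $T_1 \sqsubseteq T_2$ to obtain $t_2$ with $t_1 \leq_{\sigma_{12}} t_2$ and $\tevalexpr{\Gamma_2}{\Delta_2}{\pi}{t_2}$, then apply it again for $T_2 \sqsubseteq T_3$ to obtain $t_3$ with $t_2 \leq_{\sigma_{23}} t_3$ and $\tevalexpr{\Gamma_3}{\Delta_3}{\pi}{t_3}$. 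The remaining obligation $t_1 \leq_{\sigma_{13}} t_3$ follows by a case analysis on the rule used for $t_1 \leq_{\sigma_{12}} t_2$: the cases $t_1 \in \{\bot,\topout\}$ or $t_2 = \top$ are immediate, and when $t_1 = n \in \Node$ the two sub-cases ($\sigma_{12}(n) = \top$ versus $\sigma_{12}(n) \in \dom(\Delta_2)$) line up exactly with the two clauses defining $\sigma_{13}$.

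For clause \eqref{eq:st3} of the composition, given $n_3 \in \Theta_3$, apply \eqref{eq:st3} for $T_2 \sqsubseteq T_3$ to obtain the unique $n_2 \in \Theta_2$ with $\sigma_{23}^{-1}(n_3) = \{n_2\}$, and then apply \eqref{eq:st3} for $T_1 \sqsubseteq T_2$ to obtain the unique $n_1 \in \Theta_1$ with $\sigma_{12}^{-1}(n_2) = \{n_1\}$. Existence in $\sigma_{13}^{-1}(n_3)$ is then direct from $\sigma_{13}(n_1) = \sigma_{23}(\sigma_{12}(n_1)) = \sigma_{23}(n_2) = n_3$, while uniqueness follows because any $n'_1$ with $\sigma_{13}(n'_1) = n_3$ must first satisfy $\sigma_{12}(n'_1) \neq \top$ (since $n_3 \in \dom(\Delta_3)$) and then $\sigma_{12}(n'_1) \in \sigma_{23}^{-1}(n_3) = \{n_2\}$, forcing $n'_1 \in \sigma_{12}^{-1}(n_2) = \{n_1\}$. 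The main obstacle, such as it is, lies in handling the $\top$ sink uniformly in the definition of $\sigma_{13}$ and in the case split for \eqref{eq:st2}; the rest is bookkeeping.
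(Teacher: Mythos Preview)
Your proposal is correct and follows exactly the same approach as the paper: reflexivity via the identity fusion map, and transitivity via the composed map $\sigma_{13}(n)=\sigma_{23}(\sigma_{12}(n))$ when $\sigma_{12}(n)\in\Node$ and $\top$ otherwise. The paper merely names these maps without spelling out the verification of \eqref{eq:st1}--\eqref{eq:st3}, so your write-up is in fact a fleshed-out version of the paper's argument.
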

   \begin{proof}
   The relation is reflexive because for all type $T\in\Type$, $T\sqsubseteq_\textit{id} T$.
   The relation is transitive because if there exists types $T_1, T_2, T_3\in\Type$
   such that $T_1\sqsubseteq_{\sigma_1}T_2$ and $T_2\sqsubseteq_{\sigma_2}T_3$
   for some fusion maps $\sigma_1$ and $\sigma_2$ then $T_1\sqsubseteq_{\sigma_3}T_3$
   for the fusion map $\sigma_3$ define by $\sigma_3(n) = \sigma_2(\sigma_1(n))$
   if $\sigma_1(n)\in\Node$ or $\top$ otherwise.
   \end{proof}
\end{TR}

We write $\equiv$ for the equivalence relation defined by $T_1\equiv
T_2$ if and only if $T_1\sqsubseteq T_2$ and $T_2\sqsubseteq
T_1$. Although this entails that $\sqsubseteq$ is a partial order structure on top of
$(\Type,\equiv)$, equality and order testing remains
difficult using only this definition.  Instead of considering the
quotient of $\Type$ with $\equiv$, we define a notion of
\emph{well-formed} types on which $\sqsubseteq$ is antisymmetric.  To
do this, we assume that the set of nodes, variable names and field
names are countable sets and we write $n_i$ (resp. $x_i$ and $f_i$)
for the $i$th node (resp. variable and field).  A type
$(\Gamma,\Delta,\Theta)$ is \emph{well-formed} if every node in
$\Delta$ is reachable from a node in $\Gamma$ and the nodes in
$\Delta$ follow a canonical numbering based on a breadth-first
traversal of the graph. 
Any type can be \emph{garbage-collected} into a canonical well-formed
type by removing all unreachable nodes from variables 
and renaming all remaining nodes using a fixed
strategy based on a total ordering on variable names and field names
and a breadth-first traversal. We call this transformation  $\gc$. The
following example shows the effect of $\gc$ using a canonical numbering.
\begin{center}
\includegraphics[width=.7\textwidth]{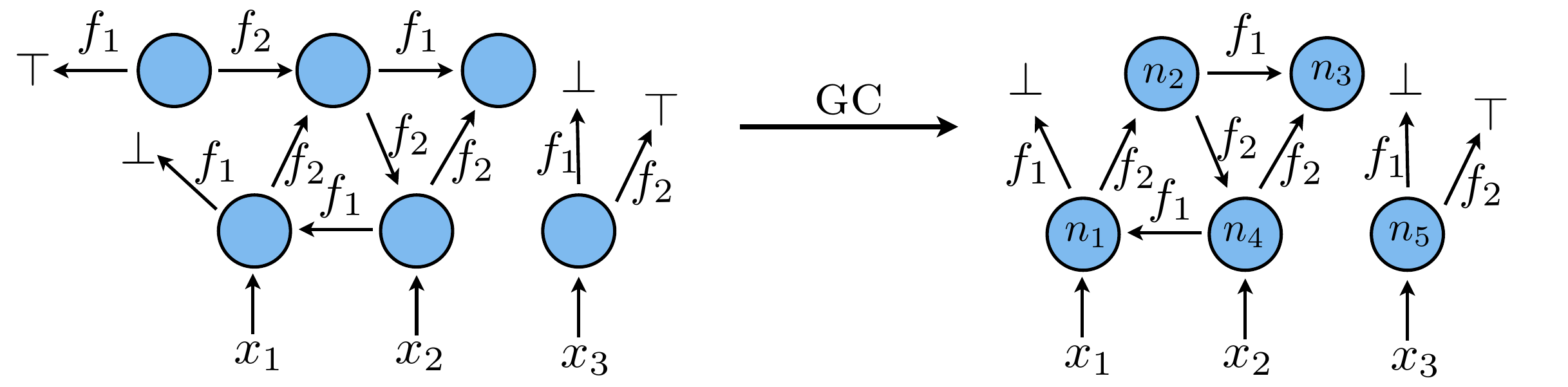}
\end{center}

\noindent
Since by definition, $\sqsubseteq$ only deals with reachable nodes, the $\gc$
function is a $\equiv$-morphism and respects type interpretation.
This means that an inference engine can at any time replace a type by a garbage-collected
version. This is useful to perform an equivalence test in order to
check whether a fixpoint iteration has ended.
\begin{lem}
For all well-formed types $T_1,T_2\in\Type$, 
$T_1\equiv T_2~~\text{iff}~~ T_1=T_2$.
\end{lem}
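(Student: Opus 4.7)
The ``if'' direction is immediate from reflexivity of $\sqsubseteq$ (take the identity fusion). For the ``only if'' direction, assume $T_1\sqsubseteq_{\sigma_1}T_2$ and $T_2\sqsubseteq_{\sigma_2}T_1$, and write $T_i=(\Gamma_i,\Delta_i,\Theta_i)$. The plan is to show that $\sigma_1$ and $\sigma_2$ are mutually inverse bijections between node sets, and then that the canonical BFS numbering forces $\sigma_1=\mathrm{id}$.

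First I would rule out that $\sigma_1$ (or $\sigma_2$) ever sends a node to $\top$. If $n_1\in\dom(\Delta_1)$ with $\sigma_1(n_1)=\top$, well-formedness gives a path $\pi$ with $\tevalexpr{\Gamma_1}{\Delta_1}{\pi}{n_1}$; clause \eqref{eq:st2} for $T_1\sqsubseteq T_2$ yields $\tevalexpr{\Gamma_2}{\Delta_2}{\pi}{\top}$, and then clause \eqref{eq:st2} for $T_2\sqsubseteq T_1$ gives some $t$ with $\top\leq_{\sigma_2}t$ and $\tevalexpr{\Gamma_1}{\Delta_1}{\pi}{t}$. By the rules for $\leq_\sigma$ the only such $t$ is $\top$, contradicting determinism of path evaluation (which gave the node $n_1\neq\top$). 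So $\sigma_1:\dom(\Delta_1)\to\dom(\Delta_2)$, and symmetrically for $\sigma_2$.

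Next I would show $\sigma_2\circ\sigma_1=\mathrm{id}_{\dom(\Delta_1)}$. For any $n_1\in\dom(\Delta_1)$, well-formedness provides a path $\pi$ with $\tevalexpr{\Gamma_1}{\Delta_1}{\pi}{n_1}$. Two applications of \eqref{eq:st2} give $\tevalexpr{\Gamma_2}{\Delta_2}{\pi}{\sigma_1(n_1)}$ and then $\tevalexpr{\Gamma_1}{\Delta_1}{\pi}{\sigma_2(\sigma_1(n_1))}$; by determinism of path evaluation, $\sigma_2(\sigma_1(n_1))=n_1$. The symmetric argument yields $\sigma_1\circ\sigma_2=\mathrm{id}$, so $\sigma_1$ is a bijection with inverse $\sigma_2$, and moreover $\Gamma_1(x)=\sigma_2(\Gamma_2(x))$ and edges are preserved: $\Delta_2[\sigma_1(n),f]=\sigma_1(\Delta_1[n,f])$ whenever $\Delta_1[n,f]$ is a node (and both sides are $\bot$, $\topout$, or $\top$ in the remaining cases, again by \eqref{eq:st2}).

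The main obstacle, and the point where well-formedness really bites, is to conclude $\sigma_1=\mathrm{id}$. I would argue by induction on the canonical BFS enumeration of nodes. Both $T_1$ and $T_2$ number their nodes $n_0,n_1,\ldots$ by the same deterministic procedure: pick the least variable/field according to the fixed total orderings and traverse breadth-first from the roots in $\Gamma$. Because $\sigma_1$ is a bijection that commutes with path evaluation and preserves non-node base types, the $k$-th node discovered in $T_1$ along this traversal is mapped by $\sigma_1$ to the $k$-th node discovered in $T_2$; since both types use the same canonical names, this forces $\sigma_1(n_k)=n_k$. Hence $\dom(\Delta_1)=\dom(\Delta_2)$, $\Gamma_1=\Gamma_2$ and $\Delta_1=\Delta_2$.

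Finally, $\Theta_1=\Theta_2$ follows from clause \eqref{eq:st3} applied in both directions with $\sigma_1=\mathrm{id}$: if $n\in\Theta_2$ then $\sigma_1^{-1}(n)=\{n\}\subseteq\Theta_1$, and symmetrically. This gives $T_1=T_2$. The only delicate step is the BFS-numbering argument; the rest is a direct consequence of path determinism together with \eqref{eq:st2} and \eqref{eq:st3}.
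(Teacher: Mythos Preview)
The paper does not actually prove this lemma: it is stated in Section~\ref{sec:inference} but the proof is omitted (the section notes that ``proofs are generally omitted by lack of space''). So there is no paper proof to compare against; I can only assess your argument on its own merits.

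Your argument is sound and follows the natural line one would expect from the definition of well-formedness. The two key ingredients you use---determinism of abstract path evaluation, and the fact that a canonical BFS numbering makes any graph automorphism the identity---are exactly what is needed. A couple of points worth tightening:

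\begin{itemize}
\item In Step~3, when you conclude $t=\top$ from $\top\leq_{\sigma_2}t$, this uses that the only rule of the value sub-typing judgment that can fire with $\top$ on the left is $t\leq_\sigma\top$ for $t\notin\Node$; you state this but it is worth making explicit that no other rule applies (in particular $\top$ is not a node, so the fourth rule is irrelevant).
\item The edge-preservation claim $\Delta_2[\sigma_1(n),f]=\sigma_1(\Delta_1[n,f])$ is justified, as you note, via a path to $n$ and \eqref{eq:st2}; but you should also observe the converse direction (if $\Delta_2[\sigma_1(n),f]$ is a node then so is $\Delta_1[n,f]$), which follows symmetrically from $T_2\sqsubseteq T_1$. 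Without this, you do not yet know that $\sigma_1$ is a full graph isomorphism, only a homomorphism, and the BFS argument needs isomorphism.
\item For non-node base types at internal edges (not just at roots $\Gamma(x)$), you need the same back-and-forth argument you sketched for $\Gamma$: e.g.\ if $\Delta_1[n,f]=\bot$ then \eqref{eq:st2} gives $\Delta_2[\sigma_1(n),f]$ arbitrary, but the reverse subtyping forces it back to $\bot$. This is routine but should be mentioned so that the BFS traversals really see identical local structure.
\end{itemize}

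With these small clarifications your proof is complete. The BFS step is indeed the crux, and your formulation (``the $k$-th node discovered in $T_1$ is mapped to the $k$-th node discovered in $T_2$'') is the right way to phrase it.
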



\begin{defi}
Let $\sqcup$ be an operator that merges two types according to the algorithm
in Fig.~\ref{fig:join:code}.
\end{defi}

\begin{figure}
  \begin{minipage}[c]{0.48\linewidth}
  \lstset{basicstyle=\ttfamily\scriptsize\color{black},float,emph={fusion,succ,updatemap,lift,ground,elementOf},emphstyle=\underbar,mathescape=true}
  \begin{lstlisting}
// Initialization.
// $\alpha$-nodes are sets in $\BaseType$.
// $\alpha$-transitions can be 
//    non-deterministic.
$\alpha$ = lift($\Gamma_1$,$\Gamma_2$,$\Delta_1 \cup \Delta_2$) 

// Determinize $\alpha$-transitions:
//    start with the entry points.
for $\{(x,t);(x,t')\} \subseteq (\Gamma_1\times\Gamma_2)$ {
  fusion($\{t,t'\}$)
}

// Determinize $\alpha$-transitions:
//    propagate inside the graph.
while $\exists u\in\alpha,\exists f\in\Field, |succ(u,f)| > 1$ {
  fusion(succ($u$,$f$))
}

// $\alpha$ is now fully determinized:
//    convert it back into a type.
$(\Gamma,\Delta,\Theta)$ = ground($\Gamma_1$,$\Gamma_2$,$\alpha$)
  \end{lstlisting}
  \end{minipage}
  \begin{minipage}[c]{0.48\linewidth}
  \lstset{basicstyle=\ttfamily\scriptsize\color{black},float,emph={fusion,succ,updatemap,lift,ground,elementOf},emphstyle=\underbar,mathescape=true}
  \begin{lstlisting}
// $S$ is a set of $t\in\BaseType$.
// $\llparenthesis S\rrparenthesis$ denotes a node labelled by $S$.
void fusion ($S$) { 
  // Create a new $\alpha$-node.
  NDG node n = $\llparenthesis S\rrparenthesis$
  $\alpha \gets \alpha + n$
  // Recreate all edges from the fused 
  //    nodes on the new node.
  for $t \in S$ {
    for $f \in \Field$ {
      if $\exists u, \alpha(t,f)=u$ {
        // Outbound edges.
        $\alpha \gets \alpha$ with $(n,f) \mapsto u$
      } 
      if $\exists n', \alpha(n',f)=t$ {
        // Inbound edges.
        $\alpha \gets \alpha$ with $(n',f) \mapsto n$
    }}
  // Delete the fused node.
  $\alpha \gets \alpha - t$
}}
  \end{lstlisting}
  \end{minipage}
\caption{Join Algorithm}
\label{fig:join:code}
\end{figure}

The procedure has $T_1=(\Gamma_1,\Delta_1,\Theta_1)$ and
$T_2=(\Gamma_2,\Delta_2,\Theta_2)$ as input, then takes the following steps.
\begin{enumerate}[(1)]
  \item \label{enum:cell} It first makes the disjunct union of $\Delta_1$ and
  $\Delta_2$ into a non-deterministic graph (\NDG) $\alpha$, where nodes are
  labelled by sets of elements in $\BaseType$. This operation is performed by the
  \ttt{lift} function, that maps nodes to singleton nodes, and fields to
  transitions.
  \item \label{enum:init} It joins together the nodes in $\alpha$ referenced
  by $\Gamma_i$ using the \ttt{fusion} algorithm\footnote{Remark that
  $\Gamma_i$-bindings are not represented in $\alpha$, but that node set
  fusions are trivially traceable. This allows us to safely ignore $\Gamma_i$
  during the following step and still perform a correct graph
  reconstruction.}.
  \item \label{enum:fus} Then it scans the \NDG and merges all
  nondeterministic successors of nodes.
  \item \label{enum:decell} Finally it uses the \ttt{ground} function to
  recreate a graph $\Delta$ from the now-\emph{deterministic} graph $\alpha$.
  This function operates by pushing a node set to a node labelled by the
  $\leq_\sigma$-sup of the set.  The result environment $\Gamma$ is derived
  from $\Gamma_i$ and $\alpha$ before the $\Delta$-reconstruction.
\end{enumerate}
All state fusions are recorded in a map $\sigma$ which binds nodes in
$\Delta_1\cup\Delta_2$ to nodes in $\Delta$. 
\begin{TR}
  Figure~\ref{fig:auxfun} contains the auxiliary functions used in the above
  procedure. Figure~\ref{fig:join:example} unfolds the algorithm on a small
  example.
\end{TR}

\begin{TR}
  \begin{figure}
    \begin{minipage}{0.48\linewidth}
    \lstset{basicstyle=\ttfamily\scriptsize\color{black},float,emph={fusion,succ,updatemap,lift,ground,elementOf},emphstyle=\underbar,mathescape=true}
    \begin{lstlisting}
    // Convert a type to an NDG
    NDG lift ($\Gamma_1$,$\Gamma_2$,$\Delta$) {
      NDG $\alpha$ = undef
      for $x \in \text{Var}$ {
        $\alpha \gets \alpha + \llparenthesis\{\Gamma_1(x)\}\rrparenthesis + \llparenthesis\{\Gamma_2(x)\}\rrparenthesis$
      }
      for $n \in \Delta$ {
        if $\exists f\in\text{Fields}, \exists b\in\text{BaseType}, \Delta[n,f]=b$ {
          $\alpha \gets \alpha + \llparenthesis\{n\}\rrparenthesis + \llparenthesis\{b\}\rrparenthesis$
          $\alpha \gets \alpha$ with $(\llparenthesis \{n\}\rrparenthesis,f) \mapsto \llparenthesis \{b\}\rrparenthesis$
      }}
      return $\alpha$
    }
    \end{lstlisting}
    \end{minipage}
    \begin{minipage}{0.48\linewidth}
    \lstset{basicstyle=\ttfamily\scriptsize\color{black},float,emph={fusion,succ,updatemap,lift,ground,elementOf},emphstyle=\underbar,mathescape=true}
    \begin{lstlisting}
    // Convert an NDG to a type
    $\Type$ ground ($\Gamma_1$,$\Gamma_2$,$\alpha$) {
      ($\Var\to\BaseType$) $\Gamma$ = $\lambda x.\bot$
      $\LSG$ $\Delta$ = undef
      for $N\in \alpha$ {
        for $x\in \Var, \Gamma_1(x)\in N \vee \Gamma_2(x)\in N$ {
          $\Gamma \gets \Gamma$ with $x \mapsto \nmlz{N}$
      }}
      for $N,N' \in \alpha$ {
        if $\alpha(N,f)=N'$ {
          $\Delta \gets \Delta$ with $(\nmlz{N},f) \mapsto \nmlz{N'}$
      }}
      return $(\Gamma,\Delta)$
    }
    \end{lstlisting}
    \end{minipage}
    \begin{minipage}[c]{0.40\linewidth}
    \lstset{basicstyle=\ttfamily\scriptsize\color{black},float,emph={fusion,succ,updatemap,lift,ground,elementOf},emphstyle=\underbar,mathescape=true}
    \begin{lstlisting}
    // $\leq_\sigma$-sup function
    BaseType $\downarrow$ (N) {
      if $\top\in N$ return $\top$
      else if $\forall c\in N, c=\bot$ return $\bot$
      else return freshNode()
    }
    \end{lstlisting}
  \end{minipage}
  \caption{Auxiliary join functions}
  \label{fig:auxfun}
  \end{figure}

  \begin{figure}
  \centering
  \includegraphics[width=.6\linewidth]{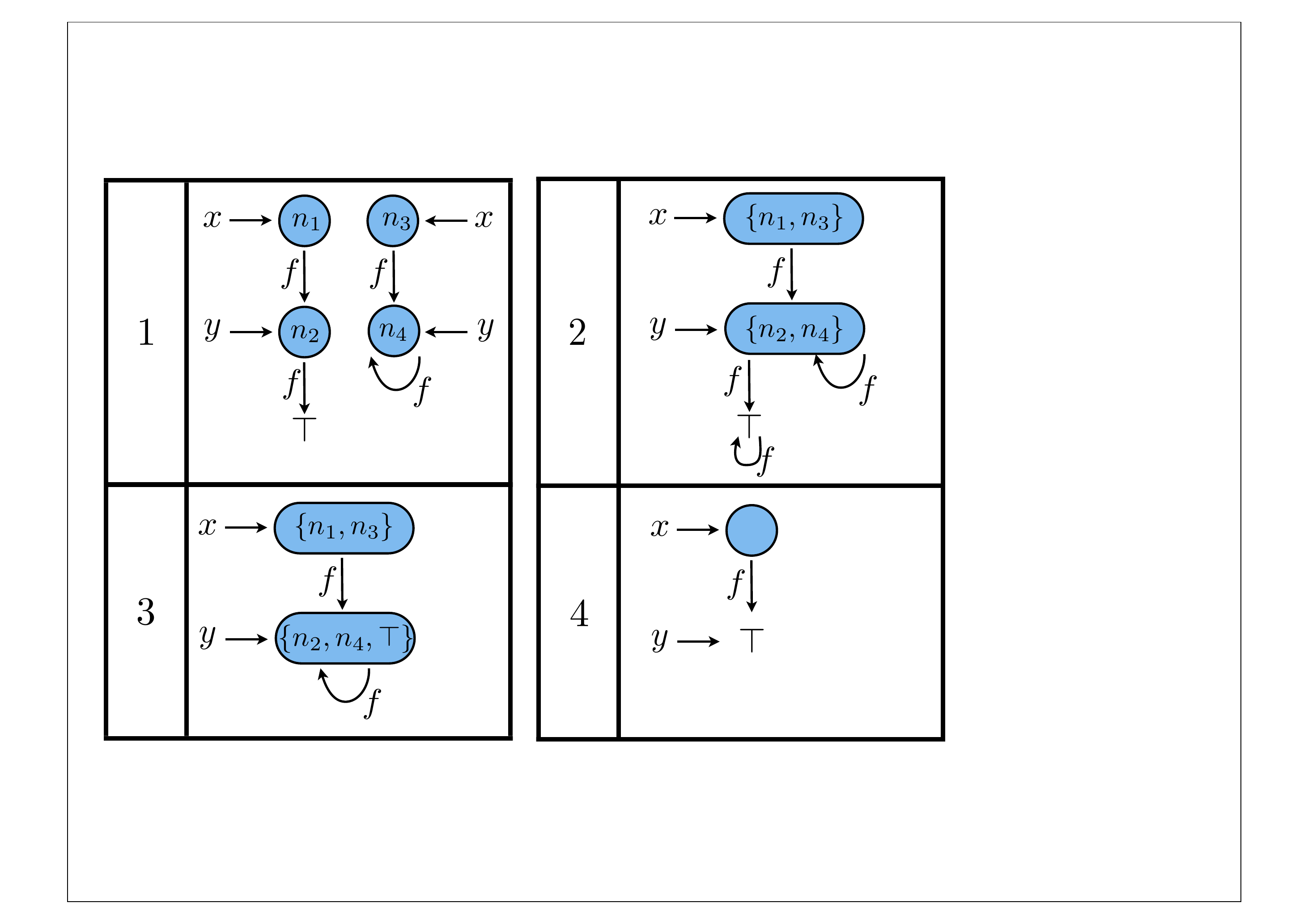}
    \caption{An example of join}
    \label{fig:join:example}
  \end{figure}
\end{TR}

\begin{thm}
The operator $\sqcup$ defines a sup-semi-lattice on types.
\end{thm}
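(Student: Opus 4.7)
The plan is to verify, successively, that the operator $\sqcup$ is well-defined, that it produces an upper bound of its arguments, and that this upper bound is the least one modulo $\equiv$. Combined with the preorder structure on $(\Type,\sqsubseteq)$ already established, this will yield the sup-semi-lattice property.

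First I will argue well-definedness and termination of the algorithm in Fig.~\ref{fig:join:code}. The \texttt{lift} phase produces a finite NDG $\alpha$ whose node labels are singletons taken from the finite set $\dom(\Delta_1)\cup\dom(\Delta_2)\cup\{\bot,\topout,\top\}$. Every call to \texttt{fusion} strictly decreases the number of nodes of $\alpha$, so the \texttt{while} loop terminates. At the end of the loop, $\alpha$ is by construction deterministic on every field, so \texttt{ground} returns a well-formed triple $(\Gamma,\Delta,\Theta)$ in $\Type$; the set $\Theta$ is obtained by keeping exactly those fused classes all of whose elements were strong nodes of $\Theta_1$ or $\Theta_2$ (which the Coq development makes precise, but whose inclusion here would be routine).

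Second, I will show that $T_i \sqsubseteq T_1 \sqcup T_2$ for $i \in \{1,2\}$ by exhibiting an explicit fusion map $\sigma_i$. The trace maintained by the algorithm associates with every original node $n \in \dom(\Delta_i)$ the NDG node whose label contains $n$ after all fusions have been performed; composing with the map used by \texttt{ground} to build $\Delta$ gives $\sigma_i(n)\in \dom(\Delta)\cup\{\top\}$. Clause~\eqref{eq:st1} is immediate. For clause~\eqref{eq:st2}, a straightforward induction on the length of $\pi$ shows that whenever $\tevalexpr{\Gamma_i}{\Delta_i}{\pi}{t_i}$, the transition relation of $\alpha$ (which only gains arrows during fusion) still witnesses $\tevalexpr{\Gamma}{\Delta}{\pi}{\sigma_i(t_i)}$, and $\leq_{\sigma_i}$ holds by the definition of \texttt{ground}'s $\downarrow$ operator. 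Clause~\eqref{eq:st3} is preserved because strong nodes are only declared in $\Theta$ when the fused class contained no weak nodes, so $\sigma_i$ is necessarily injective on $\Theta_i\cap\sigma_i^{-1}(\Theta)$.

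Third, and most delicately, I will establish the universal property: given any upper bound $T = (\Gamma,\Delta,\Theta)$ with fusion maps $\tau_1 : \Delta_1 \to \Delta+\{\top\}$ and $\tau_2:\Delta_2\to\Delta+\{\top\}$, I construct a fusion map $\mu: \dom(\sqcup) \to \dom(\Delta)+\{\top\}$ witnessing $T_1\sqcup T_2 \sqsubseteq T$. The key invariant, proved by induction on the sequence of \texttt{fusion} calls, is that whenever the algorithm merges a set $S$ of original nodes into a single NDG node, all elements of $S$ have the same image under $\tau_1\uplus\tau_2$. This invariant is initialised by the loop on pairs $(x,\Gamma_1(x)),(x,\Gamma_2(x))$ (since $\tau_i$ must respect variable bindings by clause~\eqref{eq:st2} applied to $\pi = x$) and is propagated by the determinisation step (since $\tau_i$-images must themselves have deterministic successors in $\Delta$, again by~\eqref{eq:st2}). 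Defining $\mu$ on a fused class by this common image yields a well-defined function, and the three clauses of subtyping transfer to $\mu$ from those of $\tau_1,\tau_2$, using~\eqref{eq:st3} on $T$ to handle strong nodes.

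The main obstacle is this last step: the proof that the fusions forced by the algorithm are exactly the ones demanded by any common upper bound must be conducted carefully, because the propagation loop may fuse nodes transitively across several field accesses. Once the invariant above is phrased as ``$\tau_i$ factors through the current partition of $\alpha$'' and maintained through each \texttt{fusion} call, the conclusion follows by passing to the final partition computed by the algorithm; the remaining verifications of commutativity, associativity, and idempotence follow from the universal property and antisymmetry on well-formed representatives (invoking $\gc$ where needed).
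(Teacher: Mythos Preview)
Your proposal is correct and follows essentially the same strategy as the paper: first show $T_i\sqsubseteq T_1\sqcup T_2$ via the trace of fusions, then show minimality by maintaining an invariant through the sequence of \texttt{fusion} calls.

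The one noteworthy difference is in how the least-upper-bound argument is organised. The paper does not work directly with the factorization ``$\tau_1\uplus\tau_2$ factors through the current partition''; instead it lifts the candidate upper bound $T'$ itself to an NDG $\alpha'$, defines an auxiliary preorder $\sqsubseteq_\tau$ between NDGs (mirroring $\sqsubseteq_\sigma$ but on node-sets), and shows that the intermediate NDGs produced by the algorithm stay $\sqsubseteq_\tau$-below $\alpha'$ at every step, with \texttt{ground} finally converting this back to $\sqsubseteq$ on types. Your formulation is a bit more direct and arguably cleaner, but it hides a subtlety that the paper's NDG relation handles explicitly: a fused class can contain non-node base types ($\bot$, $\topout$, $\top$) on which $\tau_1\uplus\tau_2$ is undefined, so ``all elements of $S$ have the same image'' must be read as a statement about the node elements only, together with a compatible $\leq_\sigma$-bound for the remaining elements. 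The paper's $\leq_\tau$ on cells is precisely this refinement. Once that is made explicit, the two arguments coincide.

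Your closing remark about commutativity, associativity and idempotence is unnecessary: these are consequences of the universal property you have just established (together with antisymmetry on well-formed representatives), not separate obligations for a sup-semi-lattice.
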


\proof
\begin{CA}
  See~\cite{JensenKP10}.
\end{CA}
\begin{TR}
  First note that $\sigma_1$ and $\sigma_2$, the functions associated with the
  two respective sub-typing relations, can easily be reconstructed from
  $\sigma$. 
  \begin{desCription}
  \item\noindent{\hskip-12 pt\bf Upper bound:}\
  Let $(T,\sigma)= T_1\sqcup T_2$: we prove that $T_1 \sqsubseteq T$ and $T_2
  \sqsubseteq T$.
  Hypothesis~\eqref{eq:st2} is discharged by case analysis, on $t_1$ and on
  $t_2$. The general argument used is that the join algorithm does not
  delete any edges, thus preserving all paths in the initial graphs.
  \item\noindent{\hskip-12 pt\bf Least of the upper bounds:}\
  Let $(T,\sigma)= T_1 \sqcup T_2$. Assume there exists 
  $T'$ such that $T_1 \sqsubseteq T'$ and $T_2 \sqsubseteq T'$.  Then we prove
  that $T \sqsubseteq T'$.
  %
  The proof consists in checking that the join algorithm produces, at each
  step, an intermediary pseudo-type $T$ such that $T\sqsubseteq T'$.  The
  concrete nature of the algorithm drives the following, more detailed 
  decomposition.
  \newcommand{\alt}{\sqsubseteq_\tau}
  \newcommand{\alts}{\sqsubseteq_{\tau^{*}}}
  \begin{enumerate}[(1)]
    \item \label{pr:autorel} 
      Given a function $\sigma$, 
      define a state mapping function $\leq_\tau$, and a
      $\sqsubseteq_\sigma$-like relation $\alt$ on non-deterministic graphs.  
      The aim with this relation is to emulate the properties of
      $\sqsubseteq_\sigma$, lifting the partial order $\leq_\sigma$ on nodes to
      sets of nodes (cells).
      Lift $T'$ into an \NDG $\alpha'$.
    \item \label{pr:unioncel}
      Using the subtyping relations between $T_1$, $T_2$, and $T'$, establish
      that the disjunct union and join steps produce an intermediary
      \NDG $\beta$ such that $\beta \alt \alpha'$.
    \item \label{pr:fusion}
      Ensure that the fusions operated by the join algorithm in $\beta$
      produce an \NDG $\gamma$ such that $\gamma \alt \alpha'$. This
      is done by case analysis, and depending on whether the fusion takes place
      during the first entry point processing phase, or if it occurs later.
    \item \label{pr:decel}
      Using the $\BaseType$-lattice, show that the ground operation on the
      \NDG $\gamma$ produces a type $T$ that is a sub-type of $T'$.\qed
  \end{enumerate}
  \end{desCription}
\end{TR}

\noindent
The poset structure does have infinite ascending chains, as shown by
the following example. 
\begin{center}
\includegraphics[width=.9\textwidth]{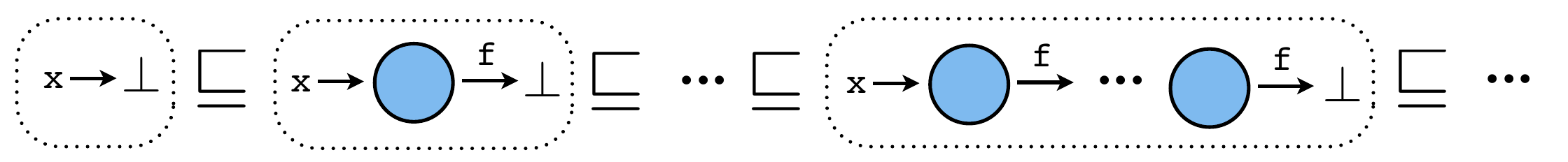}
\end{center}
Fixpoint iterations may potentially result in such an infinite chain
so we have then to rely on a widening~\cite{CousotCousot77} operator to enforce
termination of  fixpoint computations. Here we follow a pragmatic approach
and define a widening operator $\nabla\in\Type\times\Type\to\Type$ that takes
the result of $\sqcup$ and that collapses  together (with  the operator
\ttt{fusion} defined above) any node $n$ and its predecessors such that  the
minimal path reaching $n$ and starting from a local variable is of length at
least 2. 
\begin{TR}
This is illustrated by the following example.
\begin{center}
\includegraphics[width=.9\textwidth]{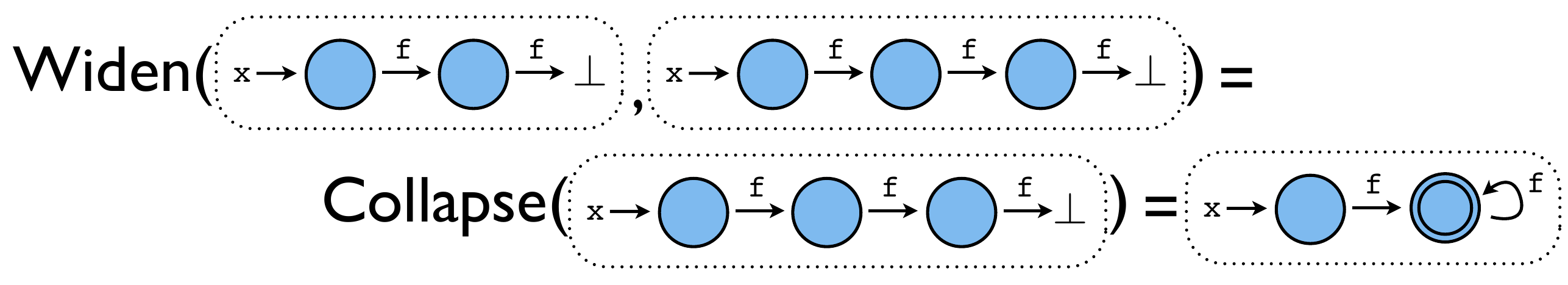}
\end{center}
This ensures the termination of the fixpoint iteration because the number of
nodes is then bounded by $2N$ with $N$ the number of local variables
in the program. 
\end{TR}

\section{Experiments}

The policy language and its enforcement mechanism have been implemented in the
form of a security tool for Java bytecode. Standard Java \ttt{@interface}
declarations are used to specify native annotations, which enable development
environments such as Eclipse to parse, identify and auto-complete
\ttt{@Shallow}, \ttt{@Deep}, and \ttt{@Copy} tags.  Source code annotations
are being made accessible to bytecode analysis frameworks. Both the policy
extraction and enforcement components are implemented using  the Javalib/Sawja
static analysis libraries\footnote{\url{http://sawja.inria.fr}} to derive
annotations and intermediate code representations, and to facilitate the
creation of an experimental Eclipse plugin.

In its standard mode, the tool performs a modular verification of
annotated classes.  We have run experiments on several
classes of the standard library (specially in the package \ttt{java.util})
and have successfully checked realistic copy signatures for them. 
These experiments have also
confirmed that the policy enforcement mechanism facilitates 
re-engineering into more compact implementations of cloning methods in
classes with complex dependencies, such as those forming the
\ttt{gnu.xml.transform} package. For example, in the \ttt{Stylesheet}
class an inlined implementation of multiple deep copy methods for half
a dozen fields can be rewritten to dispatch these functionalities to
the relevant classes, while retaining the expected copy policy. This
is made possible by the modularity of our enforcement mechanism, which
validates calls to external cloning methods as long as their
respective policies have been verified.  As expected, some cloning methods are
beyond the reach of the analysis. We have identified
one such method in GNU Classpath's \ttt{TreeMap} class, where the
merging of information at control flow merge points destroys too much
of the inferred type graph. A disjunctive form of abstraction seems
necessary to verify a deep copy annotation on such programs.



The analysis is also capable of processing un-annotated methods, albeit with
less precision than when copy policies are available---this is because it
cannot rely on annotations to infer external copy method types.
Nevertheless, this capability was used to test the tool on two large
code bases. The 17000 classes in Sun's \ttt{rt.jar} and the 7000
others in the GNU Classpath have passed our scanner
un-annotated. Among the 459 \ttt{clone()} methods we found in these
classes, only 15 have been rejected because of an assignment or method
call on non-local memory, as explained in
Section~\ref{sec:typesystem}. %
Assignment on non-local memory means here that the copying method is
updating fields of other objects than the result of the copy
itself. For such examples, our shape analysis seems too coarse to track
the dependencies between sources and copy targets.
 For 78 methods we were unable to infer 
the minimal, shallow signature $\{\}$ (the same signature as
\ttt{java.lang.Object.clone()}). In some cases, for instance in the
\ttt{DomAttr} class, this will happen when the copy method returns the result
of another, unannotated method call, and can be mitigated with additional copy
annotations. In other cases, merges between abstract values result in
precision losses: this is, for instance, the case for the \ttt{clone} method
of the TreeMap class, as explained above.

Our prototype confirms the efficiency of the enforcement technique:
these verifications took about 25s to run on stock hardware.
%
The prototype, the Coq formalization and proofs, as well as examples of annotated
classes can be found at \url{http://www.irisa.fr/celtique/ext/clones}.

%

%



\section{Related Work}
\label{sec:relatedwork}

Several proposals for programmer-oriented annotations of Java programs
have been published following Bloch's initial proposal of an annotation
framework for the Java language \cite{Bloch04}. 
These proposals define the syntax of the annotations but often leave
their exact semantics unspecified. A notable exception is the set of
annotations concerning non-null annotations \cite{Fahndrich03} for which a precise
semantic characterization has emerged~\cite{hubert08}. 
Concerning security, the GlassFish environment in Java offers program annotations of
members of a class (such
as \ttt{@DenyAll} or \ttt{@RolesAllowed}) for implementing role-based access control to
methods. 

To the best of our knowledge, the current paper is the first to
propose a formal, semantically founded framework for secure cloning through program
annotation and static enforcement. The closest work in
this area is that of Anderson \emph{et al.}~\cite{AndersonGayNaik:PLDI09}
who have designed an annotation system for C data structures in order
to control sharing between threads. Annotation policies are enforced by
a mix of static and run-time verification. On the run-time verification
side, their approach requires an operator that can dynamically
``cast'' a cell to an unshared structure. 
In contrast, our approach offers a completely static mechanism with
statically guaranteed alias properties.

Aiken \emph{et al.} proposes an analysis for checking and inferring
local non-aliasing of data~\cite{Aiken:03}. They propose to annotate C function parameters with
the keyword \ttt{restrict} to ensure that no other aliases to the data
referenced by the parameter are used during the execution of the
method. A type and effect system is defined for enforcing this
discipline statically. This analysis differs from ours in that it
allows aliases to exist as long as they are not used whereas we aim at
providing guarantees that certain parts of memory are without
aliases. 
The properties tracked by our type system are close to 
escape analysis~\cite{Blanchet99,ChoiGSSM99} but the analyses differ
in their purpose. While escape
analysis tracks locally allocated objects and tries to detect those
that do not escape after the end of a method execution, we are
specifically interested in tracking 
locally allocated objects that escape from the result of a method, as
well as analyse their dependencies with respect to parameters.



Our static enforcement technique falls within the large area of static
verification of heap properties. A substantial amount of research has been
conducted here, the most prominent being region calculus~\cite{TofteTalpin97},
separation logic~\cite{OHearnYR04} and shape analysis~\cite{SagivRW02}. Of
these three approaches, shape analysis comes closest in its use of shape
graphs.  Shape analysis is a large framework that allows to infer complex
properties on heap allocated data-structures like absence of dangling pointers
in C or non-cyclicity invariants. In this approach, heap cells are abstracted
by shape graphs with flexible object abstractions. Graph nodes can either
represent a single cell, hence allowing strong updates, or several cells
(summary nodes). \emph{Materialization} allows to split a summary node during
cell access in order to obtain a node pointing to a single cell.
The shape graphs that we use are not intended to do full shape analysis but
are rather specialized for tracking sharing in locally allocated objects. We
use a different naming strategy for graph nodes and discard all information
concerning non-locally allocated references. This leads to an analysis which
is more scalable than full shape analysis, yet still powerful enough
for verifying complex copy policies as demonstrated in the concrete case study
\ttt{java.util.LinkedList}.

Noble \emph{et al.}~\cite{Noble:98:Flexible} propose a prescriptive
technique for characterizing the aliasing, and more generally, the
topology of the object heap in object-oriented programs. This
technique is based on alias modes which have evolved into the notion
of ownership types \cite{Clarke:98:Ownership}. In this setting, the
annotation \ttt{@Repr} is used to specify that an object is
\emph{owned} by a specific object. It is called a
\emph{representation} of its owner. After such a declaration, the
programmer must manipulate the representation in order to ensure that
any access path to this object should pass trough its owner. Such a
property ensures that a \ttt{@Repr} field must be a \ttt{@Deep}
field in any copying method. Still, a \ttt{@Deep} field is not
necessarily a \ttt{@Repr} field since a copying method may want to
deeply clone this field without further interest in the global alias
around it.  Cloning seems not to have been studied further in the
ownership community and ownership type checking is generally not
adapted to flow-sensitive verification, as required by the programming
pattern exhibited in existing code. In this example, if we annotate
the field \ttt{next} and \ttt{previous} with \ttt{@Repr}, the
\ttt{clone} local variable will not be able to keep the same
ownership type at line 12 and at line 26. Such a an example would
require ownership type systems to track the update of a reference in
order to catch that any path to a representation has been erased in
the final result of the method.

We have aimed at annotations that together with static
analysis allows to verify existing cloning methods. 
Complementary to our approach, 
Drossopoulou and Noble~\cite{pubsdoc:clonesPre} propose a system
that generate cloning methods from annotation inpired by ownership types.

\section{Conclusions and Perspectives}

Cloning of objects is an important aspect of exchanging data with
untrusted code. Current language technology for cloning does not
provide adequate means for defining and enforcing a secure copy policy
statically; a task which is made more difficult by important
object-oriented features such as inheritance and re-definition of
cloning methods. We have presented a flow-sensitive type system for
statically enforcing copy policies defined by the software developer
through simple program annotations. The annotation formalism deals
with dynamic method dispatch and addresses some of the problems posed
by redefinition of cloning methods in inheritance-based object
oriented programming language (but see
Section~\ref{sec:pol:limitations} for a discussion of current
limitations). The verification technique is designed to enable modular
verification of individual classes. By specifically targeting the
verification of copy methods, we consider a problem for which it is
possible to deploy a localized version of shape analysis that avoids
the complexity of a full shape analysis framework. This means that our
method can form part of an extended, security-enhancing Java byte code
verifier which of course would have to address, in addition to secure cloning, a wealth of other
security policies and security guidelines as \emph{e.g.}, listed on
the CERT web site for secure Java
programming~\cite{CertGuidelines:2010}.

The present paper constitutes the formal foundations for a secure
cloning framework. All theorems except those of
Section~\ref{sec:inference} have been mechanized in the Coq proof
assistant. Mechanization was particularly challenging because
of the storeless nature of our type interpretation but in the end
proved to be  of great help to get the
soundness arguments right. 

 Several issues merit further investigations in order
to develop a full-fledged software security verification tool. 
The extension of the policy language to be able to impose policies on
fields defined in sub-classes should be developed
(\emph{cf.}~discussion in Section~\ref{sec:pol:limitations}). We
believe that the 
analysis defined in this article can be used to enforce such policies
but their precise semantics remains to be defined. 
In the current approach, virtual methods without copy
policy annotations are considered as black boxes that may modify any object reachable
from its arguments. An extension of our copy annotations to virtual
calls should be worked out if we want to enhance our enforcement
technique and accept more secure copying methods. More advanced verifications
will be possible if we
develop a richer form of type signatures for methods where the formal
parameters may occur in copy policies, in order to express a relation
between copy properties of returning objects and parameter fields.
The challenge here is to provide sufficiently expressive signatures
which at the same time remain humanly readable software contracts. 
The current formalisation has been developed for a sequential model of
Java. We conjecture that the extension to interleaving multi-threading
semantics is feasible and that it can be done without making major changes to the type system
because we only manipulate thread-local pointers. 

An other line of work could be to consider the correctness of
\ttt{equals()} methods with respect to copying methods, since we
generally expect \ttt{x.clone().equals(x)} to be \ttt{true}.  The
annotation system is already in good shape for such a work but a
static enforcement may require a major improvement of our specifically
tailored shape analysis.

\section{Acknowledgement}
We wish to thank the ESOP’11 and LMCS anonymous reviewers for their
helpful comments on this article. We specially thank the anonymous reviewer who suggested the 
\ttt{EvilList} example presented in Section~\ref{sec:pol:limitations}.

\bibliographystyle{plain}
\bibliography{bibli}

\end{document}